\documentclass[11pt]{article}

\usepackage[margin=1in]{geometry}
\usepackage{setspace}
\usepackage{amsmath}
\usepackage{amsfonts}
\usepackage{amssymb}
\usepackage{amsthm}
\usepackage{mathtools}

\usepackage{microtype}  

\usepackage{graphicx}
\usepackage{float}
\usepackage{placeins}

\usepackage{booktabs}
\usepackage{caption}
\usepackage{subcaption}

\usepackage[round]{natbib}

\usepackage{algorithm}
\usepackage{algpseudocode}

\usepackage{xcolor}
\usepackage[colorlinks=true,
            linkcolor=blue,
            citecolor=blue,
            urlcolor=blue]{hyperref}

\usepackage{accents}
\usepackage{enumitem}

\theoremstyle{plain}
\newtheorem{theorem}{Theorem}
\newtheorem{lemma}{Lemma}

\theoremstyle{definition}

\newtheorem{assumption}{Assumption}

\theoremstyle{remark}
\newtheorem{remark}{Remark}

\newcommand{\indep}{\mathrel{\perp\!\!\!\perp}}

\title{Conformalized Method for Empirical Bayes Normal Mean Inference Problem with Heteroscedastic Variance}
\date{}

\author{
Kwangok Seo\thanks{Department of Statistics, Seoul National University, Seoul, Korea}
\and
Johan Lim\thanks{Department of Statistics, Seoul National University, Seoul, Korea}
\thanks{Corresponding author: johanlim@snu.ac.kr}
}

\begin{document}

\maketitle

\begin{abstract}
    We study the normal mean inference problem, which involves simultaneous testing of the means of many normal distributions. This problem has been extensively studied within the empirical Bayes (EB) framework. However, the reliability of most EB methods heavily depends on two key conditions: (i) the prior distribution is correctly specified, and (ii) it can be accurately estimated. In practice, both conditions are difficult to satisfy, and it is often unclear whether they hold in a given application.
    To overcome these limitations, we propose a new algorithm, called COIN (COnformal Inference for Normal mean inference problem). Unlike traditional empirical Bayes approaches, COIN produces decision rules whose validity does not depend on the correct specification or accurate estimation of the prior. We theoretically prove that COIN asymptotically controls the false discovery rate at the nominal level, even in the presence of prior misspecification or estimation errors.
    Since the COIN algorithm requires an external training dataset to estimate the prior distribution and conformity score function, we introduce two data-splitting strategies---sample-splitting and feature-splitting---for the case where such external data are unavailable. We provide theoretical guarantees for the data-splitting strategies and demonstrate their effectiveness through extensive numerical studies and three real data examples.
\medskip

\noindent
\textbf{Keywords:} Conformal Inference, Empirical Bayes, Normal Mean Inference, Pseudo Calibration Data. 
\end{abstract}

\section{Introduction}
\subsection{Normal Mean Inference Problem}
We consider the problem of simultaneously testing multiple hypotheses based on paired summary statistics $(X_i, S_i^2)$, for $i = 1, \ldots, m$. Each pair $(X_i, S_i^2)$ is assumed to follow the joint distribution
\begin{equation*}
    (X_i, S_i^2) \sim N(\mu_i, \sigma_i^2) \otimes \frac{\sigma_i^2}{\nu}\chi^2_\nu,
\end{equation*}
where $X_i$ is an estimator of the parameter of interest $\mu_i$, and $S_i^2$ is an estimator of its variance. Here, $N(\mu, \sigma^2)$ denotes the normal distribution with mean $\mu$ and variance $\sigma^2$, while $\chi^2_\nu$ denotes the chi-squared distribution with $\nu$ degrees of freedom. 
The hypotheses of interest are given by
\begin{equation*}
    H_{i,0}: \mu_i = 0 
    \quad \text{vs.} \quad 
    H_{i,1}: \mu_i \neq 0, 
    \quad \text{for } i \in [m] \coloneqq \{1, 2, \ldots, m\}.
\end{equation*}
Each hypothesis is associated with an observed pair of summary statistics $(X_i, S_i^2)$, and the goal is to perform simultaneous inference across all $m$ hypotheses using these statistics. This setting is commonly referred to as the \emph{normal mean inference problem} (NMIP), as each hypothesis pertains to the mean parameter of a normally distributed random variable.

There are two main approaches to the NMIP: the empirical Bayes (EB) \citep{lonnstedt2002replicated, lu2016variance, stephens2017false, lu2019empirical, zheng2021mixtwice, fu2022heteroscedasticity, seo2025empirical} and the empirical partially Bayes (EPB) \citep{smyth2004linear, ignatiadis2025empirical} frameworks. The key distinction between the two lies in how the parameter of interest, $\mu_i$, is treated. In the EB framework, both the parameter of interest $\mu_i$ and the nuisance parameter $\sigma_i^2$ are regarded as random variables, and a prior is placed on their joint distribution. In contrast, the EPB framework, originally discussed by \citet{cox1975note}, treats the parameter of interest $\mu_i$ as an unknown but fixed constant, assigning a prior only to the nuisance parameter $\sigma_i^2$. This conceptual difference leads to distinct modeling assumptions and inference procedures. Throughout this paper, we primarily focus on the EB framework.

\subsection{Existing Empirical Bayes Normal Mean Inference Methods}
Empirical Bayes normal mean inference (EBNMI) methods consist of three essential components: (i) a \textit{prior specification} for the parameter pair $(\mu_i, \sigma_i^2)$; (ii) a \textit{conformity score function}; and (iii) an \textit{estimation algorithm} for the prior distribution and the conformity score function. Distinct variants of EBNMI methods are determined by particular choices of these components.

\subsubsection{General Procedure}
Most existing EBNMI methods share a common four-step procedure:

\textit{Step 1: Specify the Prior Distribution and the Conformity Score Function.} 
The procedure begins by specifying a prior distribution for the parameters $(\mu_i, \sigma_i^2)$ and a corresponding conformity score function, which plays a central role in the final decision rule. The joint prior distribution is typically represented as a hierarchical model, consisting of a marginal distribution for $\sigma_i^2$ and a conditional distribution for $\mu_i$ given $\sigma_i^2$. We adopt this hierarchical representation throughout the paper. The conformity score function, denoted by
\begin{equation*}
    u: \mathbb{R} \times [0, \infty) \to [0, \infty),
\end{equation*}
quantifies the strength of evidence against the null hypothesis. Specifically, it maps an observed summary statistic $(X, S^2)$ to a nonnegative real number. Without loss of generality, we assume that the function is defined so that smaller values of $u(X, S^2)$ indicate stronger evidence against the null hypothesis.

\textit{Step 2: Estimate the Prior Distribution and the Conformity Score Function.} Given the prior specification, the next step is to estimate the prior distribution using the observed summary statistics $\mathcal{D} \coloneqq \{(X_i, S_i^2)\}_{i=1}^m$. This estimation is typically carried out through an algorithm tailored to the assumed prior structure. Since the conformity score function depends on the prior distribution, this step also produces an estimated conformity score function, denoted by $\hat{u}(\cdot, \cdot)$.

\textit{Step 3: Compute Conformity Scores.} Using the estimated conformity score function $\hat{u}(\cdot, \cdot)$, we compute the conformity score for each hypothesis as
\begin{equation*}
    u_i = \hat{u}(X_i, S_i^2), \quad i \in [m].
\end{equation*}
Each hypothesis is therefore associated with a conformity score, which summarizes the evidence against the null hypothesis.

\textit{Step 4: Derive a Data-Adaptive Threshold and Define the Decision Rule.} 
A data-adaptive threshold $\tau$ is determined according to the target level of type~I error. Based on this threshold, the decision rule is defined as
\begin{equation*}
    \delta_i = \mathbb{I}\{u_i \leq \tau\}, \quad i \in [m],
\end{equation*}
where $\delta_i = 1$ indicates rejection of the $i$th null hypothesis and $\delta_i = 0$ otherwise. This decision rule is reasonable, as smaller conformity scores indicate stronger evidence against the null hypothesis.

To clarify how this framework is instantiated in a concrete setting, Section~\ref{ex_MixTwice} of the Supplementary Material provides a detailed exposition using the \textit{MixTwice} method of \cite{zheng2021mixtwice} as an illustrative example, explicitly demonstrating how each step of the general procedure is implemented.

\subsubsection{Limitations of the Existing EBNMI Methods}
A multiple testing procedure is said to be ``valid'' if it controls the type~I error at a pre-specified level. The validity of existing EBNMI methods heavily depends on the extent of the discrepancy between the estimated prior distribution and the true prior distribution. When this discrepancy is small, the resulting decision rule typically achieves type~I error control at the nominal level. However, as the discrepancy increases, the validity of this guarantee deteriorates, potentially leading to a type~I error rate that substantially exceeds the nominal level.

The extent of of the discrepancy between the estimated and the true prior distributions can largely be attributed to two sources: \textit{prior misspecification} and \textit{estimation error}. Prior misspecification occurs when the assumed form of the prior fails to accurately reflect the true prior distribution. Estimation error, on the other hand, may arise even when the prior is correctly specified, due to factors such as a finite number of experimental units, suboptimal optimization procedures, or inherent non-identifiability.

To mitigate the risk of misspecification, one may employ a more flexible prior that accommodates a broader class of distributions. However, such flexibility often comes at the expense of less stable and more challenging estimation, thereby increasing estimation error. Conversely, adopting a simpler prior specification increases the risk of model misspecification, as it may fail to capture key features of the true prior distribution. This trade-off highlights the need for an EBNMI procedure that is robust against discrepancies between the estimated and the true prior distributions.

\subsubsection{An Illustrative Numerical Example}
We examine the sensitivity of EBNMI methods to prior misspecification. To this end, we compare existing empirical Bayes methods (\textit{LS}; \citealp{lu2019empirical}, \textit{MixTwice}; \citealp{zheng2021mixtwice}, and \textit{gg-Mix}; \citealp{seo2025empirical}) with one of our proposed methods (\textit{COIN-FS}), focusing on their ability to control the FDR under both correctly specified and misspecified priors.

We generate $m = 20{,}000$ independent observations $(X_i, S_i^2)$ from the hierarchical model
\begin{equation*}
\begin{gathered}
    \sigma_i^2 \overset{\text{i.i.d.}}{\sim} 6/\chi^2_6, 
    \quad 
    \mu_i \mid \sigma_i^2 \overset{\text{i.i.d.}}{\sim} (1-\pi)\,\delta_0(\cdot) + \pi\,f(\cdot \mid \sigma_i^2), \\
    (X_i, S_i^2) \mid \mu_i, \sigma_i^2 
    \overset{\text{ind.}}{\sim} 
    N(\mu_i, \sigma_i^2)\;\otimes\; \frac{\sigma_i^2}{18}\chi^2_{18}.
\end{gathered}
\end{equation*}
The non-null proportion $\pi$ takes values in $\{0.5, 0.6, 0.7, 0.8, 0.9\}$.
We consider two simulation scenarios that differ in the specification of the non-null distribution $f(\cdot \mid \sigma_i^2)$:
\begin{enumerate}[label=(\Roman*)]
    \item Scenario 1. 
    The non-null distribution $f(\cdot \mid \sigma_i^2)$ is specified as a Gaussian scale mixture,
    \begin{equation*}
        f(\cdot \mid \sigma_i^2) = 0.1\, N(0,1) + 0.2\, N(0,4) + 0.7\, N(0,16).
    \end{equation*}
    This specification yields a symmetric and unimodal density independent of $\sigma_i^2$.

    \item Scenario 2.
    The non-null distribution $f(\cdot \mid \sigma_i^2)$ is specified as a Gaussian location mixture,
    \begin{equation*}
        f(\cdot \mid \sigma_i^2) = 0.3\, N(-3\sigma_i^2, \sigma_i^2) + 0.7\, N(4\sigma_i^2, \sigma_i^2).
    \end{equation*}
    This specification yields an asymmetric and bimodal density that depends on $\sigma_i^2$.
\end{enumerate}

All methods rely on modeling assumptions for the unknown prior distribution. In our comparison, all methods assume that the non-null distribution $f(\cdot \mid \sigma_i^2)$ is unimodal and symmetric, with $\mu_i$ independent of $\sigma_i^2$. This assumption is consistent with the data-generating mechanism in Scenario~1, where the true non-null distribution is a symmetric Gaussian scale mixture independent of $\sigma_i^2$. In contrast, it is violated in Scenario~2, where the true non-null distribution is an asymmetric Gaussian location mixture that depends on $\sigma_i^2$. Apart from the specification of $f(\cdot \mid \sigma_i^2)$, all other components of the prior distribution are correctly specified for all methods. In particular, we assume independence between $\mathbb{I}(\mu_i = 0)$ and $\sigma_i^2$, and model the prior distribution of $\sigma_i^2$ either under a scaled inverse-$\chi^2$ distribution or nonparametrically with support on $[0,\infty)$. Accordingly, Scenario~1 represents a correctly specified setting, whereas Scenario~2 corresponds to a misspecified setting in which only $f(\cdot \mid \sigma_i^2)$ is misspecified.

The simulation results are summarized in Figure~\ref{fig:Simulation2}. Under Scenario~1, \textit{LS}, \textit{MixTwice}, and \textit{gg-Mix} successfully control the FDR at the nominal level of $0.1$ and consistently achieve higher power than \textit{COIN-FS}. 
In contrast, under Scenario~2 the FDR of \textit{LS}, \textit{MixTwice}, and \textit{gg-Mix} substantially exceeds the nominal level across most settings. This inflation becomes more pronounced as the proportion of non-null hypotheses approaches $0.5$, a regime in which the impact of prior misspecification is more severe. 
These findings highlight a fundamental limitation of existing EBNMI methods and underscore the need for approaches whose validity remains robust to prior misspecification, or more generally, to discrepancies between the estimated and the true prior distributions. 
In contrast, our proposed method, \textit{COIN-FS}, is specifically designed to be robust to prior misspecification and consistently maintains valid FDR control even when the prior is misspecified.

\begin{figure}[!htb]
    \centering
    \includegraphics[width=\textwidth]{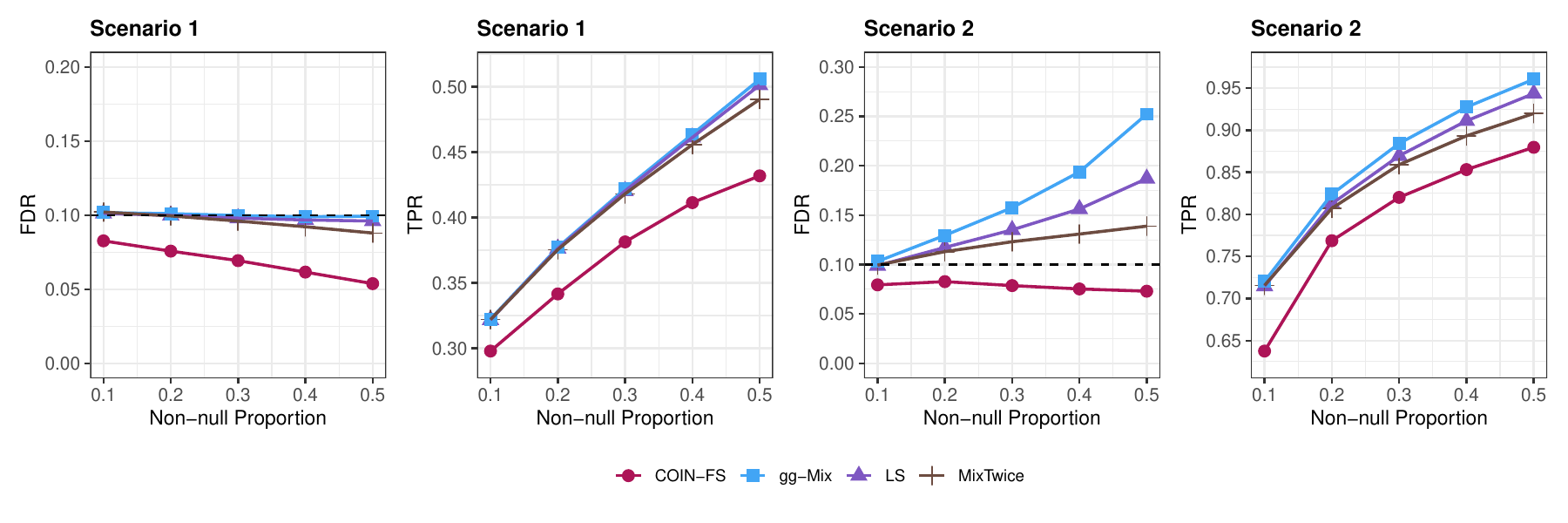}
    \caption{False discovery rates (FDRs) and true positive rates (TPRs) under Scenarios~1 and 2. Each panel compares the four methods \textit{LS}, \textit{MixTwice}, \textit{gg-Mix}, and \textit{COIN-FS} across varying non-null proportions $\pi$, with results averaged over 200 simulation replicates. The dashed horizontal line in the FDR panels indicates the target FDR level 0.1.}
    \label{fig:Simulation2}
\end{figure}

\subsection{Our Contributions}
The main contributions of this paper are as follows.

\begin{enumerate}[label=(\Roman*)]
    \item To address the limitations of existing EBNMI methods, we propose an algorithm called COIN (COnformal Inference for the Normal mean inference problem) that guarantees asymptotic FDR control regardless of discrepancies between the estimated and true prior distributions. The robustness of COIN allows greater flexibility in specifying the prior and choosing estimation procedures without compromising inferential validity.

    \item While COIN provides strong theoretical guarantees, it requires an external training dataset in addition to the test dataset, which may not always be available in practice. To address this limitation, we introduce two data-splitting strategies—\textit{sample-splitting} and \textit{feature-splitting}. The key idea is to construct an internal training dataset from the test dataset itself, enabling the use of the COIN framework even without external training data.

    \item We show that both data-splitting methods are theoretically valid, in the sense that they asymptotically control the FDR at the target level regardless of discrepancies between the estimated and true prior distributions. The validity of the sample-splitting method follows directly from that of the COIN framework. The validity of the feature-splitting method is established by leveraging properties of the $e$BH procedure \citep{wang2022false} and is motivated by the concept of asymptotic compound $e$-variables introduced by \citet{ignatiadis2024asymptotic}. To the best of our knowledge, these are the first EBNMI methods that achieve provable FDR control while remaining robust to prior misspecification.

    \item We evaluate two data-splitting variants of the COIN framework—\textit{sample-splitting} and \textit{feature-splitting}—under several forms of prior misspecification, including:
    \begin{enumerate}[label=(\roman*)]
    \item incorrectly assuming independence between $\mu_i$ and $\sigma_i^2$, i.e., $\mu_i \indep \sigma_i^2$;
    
    \item misspecifying the functional relationship between $\mu_i$ and $\sigma_i^2$;
    
    \item imposing restrictive assumptions on the distribution class of $G$ or the density $f$.
    \end{enumerate}
    
    The results show that the proposed methods reliably control the FDR across all forms of prior misspecification, even in finite-sample settings.
    \end{enumerate}

\subsection{Related Works}
\cite{fu2022heteroscedasticity} study the NMIP under an empirical Bayes framework and establish asymptotic optimality of their procedure, in the sense of achieving maximal detection power while controlling the marginal false discovery rate. These guarantees, however, rely on the assumptions that the variances $\sigma_i^2$ are known and that $\mu_i$ and $\sigma_i^2$ are independent. However, we consider a setting where $\sigma_i^2$ are unknown and only their estimates $s_i^2$ are available, and we allow for potential dependence between $\mu_i$ and $\sigma_i^2$. As a result, the asymptotic theory in \cite{fu2022heteroscedasticity} does not directly apply to our setting.
Recently, \cite{gang2025large} extend the work of \cite{fu2022heteroscedasticity} to composite nulls and allow arbitrary dependence between $\mu_i$ and $\sigma_i^2$, while still assuming that $\sigma_i^2$ are known. 
Although they suggest replacing $\sigma_i^2$ with $s_i^2$ for practical implementation, this approximation may be less accurate when the sample size for variance estimation is limited, as is common in practice. 

\cite{zhao2025conformalized} consider a multiple testing problem with side information, encompassing the NMIP as a special case, and provide a general framework that guarantees finite-sample FDR control, regardless of the accuracy of the assumed empirical Bayes working model. Their theory relies on the availability of calibration variables satisfying an exchangeability condition, as defined in (10) of \cite{zhao2025conformalized}. In practice, however, such variables may not be available, or it may be unclear whether the variables at hand satisfy the required condition. When the null distribution is independent of the side information and known, valid calibration variables can be obtained as discussed in \cite{zhao2025conformalized}. In our setting, however, the null distribution depends on $\sigma_i^2$, which serve as side information. Moreover, the variances $\sigma_i^2$ are unknown. Thus, this approach is not directly applicable to our setting.

A similar challenge arises in the knockoff filter for high-dimensional variable selection \citep{barber2015controlling, candes2018panning}, where finite-sample FDR control requires generating knockoff variables satisfying a similar exchangeability condition. Constructing such variables requires knowledge of true distribution of the features \citep{bates2021metropolized}; otherwise, one can only construct variables that approximately satisfy the required condition \citep{romano2020deep}. To quantify the impact of deviations from exchangeability on FDR control, \cite{barber2020robust} use the Kullback--Leibler divergence as a measure of deviation and establish an upper bound on the FDR as a function of this divergence.

Our work is closely related to \cite{barber2020robust} in the sense described above, but differs in several key aspects. \cite{barber2020robust} study high-dimensional variable selection in a regression setting and derive an FDR bound for given knockoff variables that holds regardless of the relationship between the response and the features. In contrast, we consider a multiple testing problem for normal means, propose a method for constructing calibration variables that asymptotically satisfy the required property in an appropriate sense, and establish asymptotic FDR control regardless of discrepancies between the estimated and true prior distributions.

\paragraph{Outline.}
The remainder of this paper is organized as follows. Section~\ref{sec_2} defines the problem setting. Section~\ref{sec_3} introduces our proposed method, COIN, and establishes its theoretical guarantees, showing that the resulting decision rule asymptotically controls the FDR at the nominal level. Section~\ref{sec_4} describes two data-splitting strategies—sample-splitting and feature-splitting—for implementing COIN when an external training dataset is unavailable. Section~\ref{sec_5} presents numerical studies evaluating the empirical performance of the two data-splitting methods and comparing them with existing approaches. Section~\ref{sec_6} applies our method to three real datasets. Section~\ref{sec_7} concludes the paper with a discussion.

\section{Problem Settings}\label{sec_2}
Consider two datasets, $\mathcal{D}_1$ and $\mathcal{D}_2$, each containing observations of paired statistics:
\begin{equation*}
    \mathcal{D}_k = \left\{(X_{i,k}, S_{i,k}^2) : i \in [m^{\mathcal{D}_k}] \right\}, \quad k = 1, 2,
\end{equation*}
where $m^{\mathcal{D}_k}$ denotes the number of pairs in dataset $\mathcal{D}_k$. 
Throughout, $\mathcal{D}_1$ and $\mathcal{D}_2$ are treated as the test and (external) training datasets, respectively.
We assume that each pair $(X_{i,k}, S_{i,k}^2)$ arises from the following hierarchical model:
\begin{equation}\label{hierar_model}
\begin{split}
    \sigma_{i,k}^2 
    &\overset{\text{i.i.d.}}{\sim} G(\cdot), \\
    \mu_{i,k} \mid \sigma_{i,k}^2 
    &\overset{\text{ind.}}{\sim} (1 - \pi) \delta_0(\cdot) + \pi f(\cdot; \sigma_{i,k}^2), \\
    X_{i,k}, S_{i,k}^2 \mid \mu_{i,k}, \sigma_{i,k}^2 
    &\overset{\text{ind.}}{\sim} N(\mu_{i,k}, \sigma_{i,k}^2) \otimes \frac{\sigma_{i,k}^2}{\nu} \chi^2_\nu.
\end{split}
\end{equation}
In this model, the variance component $\sigma_{i,k}^2$ is independently drawn from an arbitrary distribution $G$ supported on $[0,\infty)$. Conditional on $\sigma_{i,k}^2$, the mean parameter $\mu_{i,k}$ follows a two-component mixture: with probability $1-\pi$ it equals zero (the null), and with probability $\pi$ it is drawn from a non-null distribution with density $f(\cdot;\sigma_{i,k}^2)$, which may depend on $\sigma_{i,k}^2$. 
Given $\mu_{i,k}$ and $\sigma_{i,k}^2$, the observed effect size $X_{i,k}$ follows a normal distribution with mean $\mu_{i,k}$ and variance $\sigma_{i,k}^2$, while the corresponding variance estimate $S_{i,k}^2$ follows a scaled chi-squared distribution with $\nu$ degrees of freedom.

The problem considered here is the NMIP, in which we simultaneously test $m^{\mathcal{D}_1}$ hypotheses of the form
\begin{equation}\label{hypo}
    H_{i,0}: \mu_{i,1} = 0 
    \quad \text{vs.} \quad 
    H_{i,1}: \mu_{i,1} \neq 0, 
    \quad \text{for } i \in [m^{\mathcal{D}_1}].
\end{equation}
Our objective is to develop a provably valid EBNMI method by leveraging both the test dataset $\mathcal{D}_1$ and the training dataset $\mathcal{D}_2$. In particular, we seek valid control of the type~I error even when the estimated prior distribution differs from the true one.

To assess the validity of multiple testing methods, we adopt the false discovery rate (FDR; \citealp{benjamini1995controlling}) as the type~I error criterion. The FDR is defined as
\begin{equation*}
    \mathrm{FDR}(\boldsymbol{\delta}) = \mathbb{E}\big[\mathrm{FDP}(\boldsymbol{\delta})\big], 
    \quad 
    \mathrm{FDP}(\boldsymbol{\delta}) = 
    \frac{\sum_{i = 1}^{m^{\mathcal{D}_1}} (1 - \theta_i)\, \delta_i}
         {\left\{ \sum_{i = 1}^{m^{\mathcal{D}_1}} \delta_i \right\} \vee 1},
\end{equation*}
where $a \vee b = \max(a,b)$. Here $\theta_i$ denotes the true state of the $i$th hypothesis, with $\theta_i = 0$ if the null hypothesis $H_{i,0}$ is true and $\theta_i = 1$ otherwise. The decision rule is $\boldsymbol{\delta} = \{\delta_i : i \in [m^{\mathcal{D}_1}]\}$, where $\delta_i = 1$ indicates rejection of $H_{i,0}$ and $\delta_i = 0$ otherwise.

\begin{remark}[Availability of the External Training Dataset $\mathcal{D}_2$]
    A key requirement for the theoretical validity of our procedure is that the prior distribution $G$, used to generate calibration variables, and the conformity score function $u(\cdot,\cdot)$ be estimated independently of the test dataset $\mathcal{D}_1$.
    To ensure this independence, we assume the availability of an external training dataset $\mathcal{D}_2$ that is independent of $\mathcal{D}_1$. In many practical situations, however, such a dataset may not be available. To address this limitation, we introduce two data-splitting strategies in Section~\ref{sec_4}.
\end{remark}

\begin{remark}[Hierarchical Model]
The hierarchical model in~\eqref{hierar_model} is highly flexible and encompasses many existing prior specifications as special cases. For example, if the prior distribution $G$ is restricted to a scaled inverse chi-squared family and, for some positive constant $a$, the conditional distribution of $\mu_{i,k}$ given $\sigma_{i,k}^2$ satisfies
\begin{equation*}
    \frac{\mu_{i,k}}{\sigma_{i,k}^a} \mid \sigma_{i,k}^2 
    \sim 
    (1 - \pi)\, \delta_0(\cdot) + \pi\, f(\cdot),
    \quad \text{where } f \text{ is a unimodal density centered at zero},
\end{equation*}
then the model reduces to that considered in~\citet{lu2019empirical}. 
In contrast, if no restriction is imposed on $G$ and only the conditional distribution of $\mu_{i,k}$ given $\sigma_{i,k}^2$ is assumed to satisfy
\begin{equation*}
    \mu_{i,k} \mid \sigma_{i,k}^2 
    \sim 
    (1 - \pi)\, \delta_0(\cdot) + \pi\, f(\cdot),
    \quad \text{where } f \text{ is a unimodal density centered at zero},
\end{equation*}
then the model reduces to that of~\citet{zheng2021mixtwice}.
\end{remark}

We conclude this section by introducing some notation that will be used throughout the paper. Let $\mathcal{H}_0^{\mathcal{D}_1}$, $\mathcal{H}_1^{\mathcal{D}_1}$, and $\mathcal{H}^{\mathcal{D}_1}$ denote the index sets of true null, true non-null, and all hypotheses, respectively. Define $m_0^{\mathcal{D}_1} = |\mathcal{H}_0^{\mathcal{D}_1}|$, $m_1^{\mathcal{D}_1} = |\mathcal{H}_1^{\mathcal{D}_1}|$, and $m^{\mathcal{D}_1} = |\mathcal{H}^{\mathcal{D}_1}|$, where $|A|$ denotes the cardinality of a set $A$. For sequences of real numbers $\{x_n\}$ and $\{y_n\}$, we write $y_n = o(x_n)$ if $y_n / x_n \to 0$ as $n \to \infty$, and $y_n = O(x_n)$ if there exist constants $M > 0$ and $n_0 \in \mathbb{N}$ such that $|y_n| \le M |x_n|$ for all $n \ge n_0$.

\section{Methodology}\label{sec_3}

\subsection{Construction of Calibration Variables}\label{sub_sec_3.1}
Building on ideas from conformal inference \citep{vovk2005algorithmic, angelopoulos2024theoretical, zhao2025conformalized} and the knockoff filter \citep{barber2015controlling, candes2018panning}, COIN constructs calibration variables that act as negative controls, thereby mitigating the impact of discrepancies between the estimated and true prior distributions on FDR control. We begin by describing how the calibration variables are constructed.

\paragraph{Conditional Exchangeability.} 
A desirable property of calibration variables is \textit{conditional exchangeability}. Let $\tilde{X}_{i,1}$ denote the calibration variable corresponding to the $i$th hypothesis. For each hypothesis, we consider the triplet $(X_{i,1}, \tilde{X}_{i,1}, S_{i,1}^2)$. We say that the calibration variables satisfy conditional exchangeability if
\begin{equation}\label{cond_exch}
    (X_{i,1}, \tilde{X}_{i,1}) \mid S_{i,1}^2 
    \overset{d}{=} 
    (\tilde{X}_{i,1}, X_{i,1}) \mid S_{i,1}^2, 
    \quad \forall i \in \mathcal{H}_0^{\mathcal{D}_1}.
\end{equation}
This condition implies that, under the null hypothesis, 
$X_{i,1}$ and $\tilde{X}_{i,1}$ are conditionally exchangeable given $S_{i,1}^2$. Consequently, comparing $X_{i,1}$ with its calibration counterpart $\tilde{X}_{i,1}$ provides a meaningful way to assess how extreme the observed statistic is under the null.

\paragraph{Oracle Construction.} 
We first consider the oracle setting in which the prior distribution $G$ is known. In this case, the conditional distribution of $X$ given $\mu = 0$ and $S^2 = s^2$ is  
\begin{equation*}
\begin{split}
    X \mid \mu = 0, S^2 = s^2 
    \sim h_G(x \mid \mu = 0, S^2 = s^2) 
    \coloneqq 
    \frac{\int p(x \mid \mu = 0, \sigma^2)\, p(s^2 \mid \sigma^2)\, dG(\sigma^2)}
         {\int p(s^2 \mid \sigma^2)\, dG(\sigma^2)},
\end{split}
\end{equation*}
where the subscript $G$ emphasizes that this conditional distribution depends on the prior $G$. Based on this distribution, the calibration variables are generated as  
\begin{equation}\label{oracle_cal}
    \tilde{X}_{i,1}^* \overset{\text{ind.}}{\sim} h_G(x \mid \mu = 0, S^2 = S_{i,1}^2), 
    \quad i \in [m^{\mathcal{D}_1}],
\end{equation}
where the superscript $*$ indicates that $\tilde{X}_{i,1}^*$ is generated under the oracle setting. By construction, the calibration variables in~\eqref{oracle_cal} satisfy the conditional exchangeability property.

\paragraph{Data-Adaptive Construction.} 
The oracle construction relies on knowledge of the prior distribution $G$, which is typically unavailable in practice. To address this limitation, we approximate the oracle procedure in a data-adaptive manner by estimating $G$ from the training dataset $\mathcal{D}_2$. 
Specifically, we employ the nonparametric maximum likelihood estimator (NPMLE):
\begin{equation}\label{NPMLE_G}
    \hat{G} 
    = \underset{G \in \mathcal{G}}{\arg\max} 
      \sum_{i = 1}^{m^{\mathcal{D}_2}} 
      \log \left\{\int p(S^2 = S_{i,2}^2 \mid \sigma^2)\, dG(\sigma^2)\right\},
\end{equation}
where $\mathcal{G}$ denotes the set of all distributions supported on $[0, \infty)$. Using the plug-in principle, we estimate the conditional density by replacing the unknown prior $G$ with its NPMLE $\hat{G}$:
\begin{equation*}
    h_{\hat{G}}(x \mid \mu = 0, S^2 = s^2) 
    = 
    \frac{\int p(x \mid \mu = 0, \sigma^2)\, p(s^2 \mid \sigma^2)\, d\hat{G}(\sigma^2)}
         {\int p(s^2 \mid \sigma^2)\, d\hat{G}(\sigma^2)}.
\end{equation*}
Based on this estimated density, the calibration variables are generated as
\begin{equation}\label{pseudo_cal}
    \tilde{X}_{i,1} \overset{\text{ind.}}{\sim} 
    h_{\hat{G}}(x \mid \mu = 0, S^2 = S_{i,1}^2), 
    \quad i \in [m^{\mathcal{D}_1}].
\end{equation}
Unlike the oracle case, these calibration variables do not satisfy the conditional exchangeability property, due to estimation errors in $\hat{G}$.

\begin{remark}[Practical Implementation of the NPMLE]
   To construct the calibration variables in a data-adaptive manner, we need to solve the optimization problem described in~\eqref{NPMLE_G}. For practical implementation, rather than directly solving~\eqref{NPMLE_G}, we follow the approach of \citet{koenker2014convex}, which restricts the class $\mathcal{G}$ to discrete distributions with fixed finite support. Maximizing the objective in~\eqref{NPMLE_G} over this restricted class yields an estimate of $G$, serving as an approximation to the NPMLE. 
\end{remark}

\begin{remark}
     When the prior distribution $G$ is known, calibration variables can be constructed so as to satisfy the conditional exchangeability property in~\eqref{cond_exch}, which is key to finite-sample FDR control (see Section~\ref{oracle_case} of the Supplementary Material for details). When the calibration variables are constructed using an estimated prior $\hat{G}$, this conditional exchangeability property generally fails to hold, creating a fundamental obstacle to finite-sample FDR control. Nevertheless, asymptotic FDR control remains achievable even when $G$ is unknown, as established in Theorem~\ref{theorem1}.
\end{remark}

\subsection{COIN Algorithm}\label{sub_sec_3.2}
In this section we introduce the COIN algorithm for solving the NMIP. The procedure is designed to be robust to discrepancies between the estimated and true prior distributions. The overall algorithm is summarized in Algorithm~\ref{alg_1} in Section~\ref{algorithms} of the Supplementary Material and consists of four main steps.

\textit{Step 1: Construct Calibration Variables.} For each hypothesis $i \in \mathcal{H}^{\mathcal{D}_1}$, we construct a calibration variable $\tilde{X}_{i,1}$ following~\eqref{pseudo_cal}. By pairing each calibration variable with its corresponding variance estimate $S_{i,1}^2$ from the test dataset $\mathcal{D}_1$, we obtain the pseudo calibration dataset
\begin{equation*}
    \tilde{\mathcal{D}}_1 = \left\{ (\tilde{X}_{i,1}, S_{i,1}^2) : i \in [m^{\mathcal{D}_1}] \right\}.
\end{equation*}
This dataset is referred to as the \textit{pseudo calibration dataset} to distinguish it from the \textit{oracle calibration dataset} that would be obtained if the true prior $G$ were known.

\textit{Step 2: Specify and Estimate the Working Prior and Conformity Score Function.} A working prior distribution for $(\mu, \sigma^2)$ is specified, together with a conformity score function $u(\cdot, \cdot)$ that maps each data pair $(X, S^2)$ to a non-negative score. These quantities are then estimated using the external training dataset $\mathcal{D}_2$, which is independent of the test dataset $\mathcal{D}_1$. This independence is essential to ensure the validity of the procedure. A variety of prior specifications, conformity score functions, and estimation algorithms can be employed at this stage, including those proposed by \citet{lu2016variance}, \citet{stephens2017false}, \citet{lu2019empirical}, \citet{zheng2021mixtwice} and \citet{seo2025empirical}, among others.

\textit{Step 3: Compute Conformity Scores.} 
The estimated conformity score function $\hat{u}(\cdot, \cdot)$ is evaluated on both the test dataset and the pseudo calibration dataset:
\begin{equation*}
    \mathcal{U} \coloneqq \{u_i = \hat{u}(X_{i,1}, S_{i,1}^2) : i \in \mathcal{H}^{\mathcal{D}_1}\}, \quad 
    \tilde{\mathcal{U}} \coloneqq \{\tilde{u}_i = \hat{u}(\tilde{X}_{i,1}, S_{i,1}^2) : i \in \mathcal{H}^{\mathcal{D}_1}\}.
\end{equation*}
Throughout, we assume that smaller conformity scores indicate stronger evidence against the null hypothesis.

\textit{Step 4: Derive a Data-Adaptive Threshold and Define the Decision Rule.} Define a data-adaptive threshold $\tau$ as
\begin{equation}\label{thresh_unknownG}
    \tau = \max \left\{ t \in \mathcal{U} \cup \tilde{\mathcal{U}} :
    \frac{1 + \sum_{i \in \mathcal{H}^{\mathcal{D}_1}} \mathbb{I}(\tilde{u}_i \leq u_i \wedge t)}
    {\left[\sum_{i \in \mathcal{H}^{\mathcal{D}_1}} \mathbb{I}(u_i \leq \tilde{u}_i \wedge t)\right] \vee 1}
    \leq \alpha \right\},
\end{equation}
where $\alpha$ denotes the target FDR level, and $a \wedge b$ denotes $\min(a, b)$. By convention, we set $\tau = -\infty$ if the maximization set in~\eqref{thresh_unknownG} is empty, in which case no hypotheses are rejected. Otherwise, the decision rule is given by
\begin{equation*}
    \delta_i = \mathbb{I}\!\left(u_i \le \tilde{u}_i \wedge \tau\right),
    \qquad i \in \mathcal{H}^{\mathcal{D}_1}.
\end{equation*}

\begin{remark}[Data-Adaptive Threshold $\tau$]
    To understand the rationale behind the definition of the data-adaptive threshold $\tau$ in~\eqref{thresh_unknownG}, consider the oracle setting in which the prior distribution $G$ is known and the calibration variables satisfy the conditional exchangeability property in~\eqref{cond_exch}. In this case, for any threshold $t$, we have
    \begin{equation*}
    \begin{split}
        \text{FDP}(t) 
        &= \frac{\sum_{i \in \mathcal{H}_0^{\mathcal{D}_1}} \mathbb{I}(u_i \leq \tilde{u}_i \wedge t)}
        {\left[\sum_{i \in \mathcal{H}^{\mathcal{D}_1}} \mathbb{I}(u_i \leq \tilde{u}_i \wedge t)\right] \vee 1}
        \approx 
        \frac{\sum_{i \in \mathcal{H}_0^{\mathcal{D}_1}} \mathbb{I}(\tilde{u}_i \leq u_i \wedge t)}
        {\left[\sum_{i \in \mathcal{H}^{\mathcal{D}_1}} \mathbb{I}(u_i \leq \tilde{u}_i \wedge t)\right] \vee 1} \\[3pt]
        &\leq 
        \frac{1 + \sum_{i \in \mathcal{H}^{\mathcal{D}_1}} \mathbb{I}(\tilde{u}_i \leq u_i \wedge t)}
        {\left[\sum_{i \in \mathcal{H}^{\mathcal{D}_1}} \mathbb{I}(u_i \leq \tilde{u}_i \wedge t)\right] \vee 1}
        \eqqcolon \widehat{\text{FDP}}(t),
    \end{split}
    \end{equation*}
    where the approximation is justified by conditional exchangeability, which implies
    \begin{equation*}
        \sum_{i \in \mathcal{H}_0^{\mathcal{D}_1}} \mathbb{I}(u_i \leq \tilde{u}_i \wedge t)
        \overset{d}{=} 
        \sum_{i \in \mathcal{H}_0^{\mathcal{D}_1}} \mathbb{I}(\tilde{u}_i \leq u_i \wedge t).
    \end{equation*}
    Hence, the data-adaptive threshold $\tau$ can be equivalently written as
    \begin{equation*}
        \tau = \max \left\{ t \in \mathcal{U} \cup \tilde{\mathcal{U}} : \widehat{\text{FDP}}(t) \leq \alpha \right\}.
    \end{equation*}
\end{remark}

\begin{remark}(Decision Rule $\boldsymbol{\delta}$)
    Since smaller conformity scores indicate stronger evidence against the null hypothesis, a natural rejection condition is $u_i \leq \tilde{u}_i$, that is, the observed score is smaller than its corresponding calibration value. However, this condition may frequently hold even under the null hypothesis; if the prior distribution $G$ were known, then by conditional exchangeability, approximately half of the true nulls would satisfy $u_i \leq \tilde{u}_i$. To guard against such false discoveries, we further impose a global threshold~$\tau$. Hence, a hypothesis is rejected only when both $u_i \leq \tilde{u}_i$ and $u_i \leq \tau$ are satisfied.
\end{remark}

\subsection{Asymptotic FDR Control}\label{sub_sec_3.3}
Following \citet{ignatiadis2025empirical}, we assume that the support of $G$ is bounded away from both $0$ and $\infty$.

\begin{assumption}[Assumption 8 in \citealp{ignatiadis2025empirical}]\label{assum_1}
    Let $0 < \underaccent{\bar}{L}  \leq L \leq U \leq \bar{U} < \infty$. We assume that $\sigma_{i,2}^2 \in [L, U]$ for all $i$ (almost surely) and that we compute NPMLE $\hat{G}$ as in \eqref{NPMLE_G} under the additional condition $G[\underaccent{\bar}{L}, \bar{U}] = 1$.
\end{assumption}

Theorem~\ref{theorem1} establishes that the decision rule derived from COIN asymptotically controls the FDR at the nominal level~$\alpha$. We emphasize that the asymptotic statement is made with respect to the training sample size, i.e., $m^{\mathcal{D}_2} \to \infty$. The proof is provided in Section~\ref{proof_sec_3} of the Supplementary Material.

\begin{theorem}\label{theorem1}
    Suppose that the hierarchical model described in \eqref{hierar_model} and Assumption~\ref{assum_1} hold.  
    Also assume that $ \nu \geq 2 $ and that there are no ties between $ u_i $ and $ \tilde{u}_i $ for all $ i \in \mathcal{H}^{\mathcal{D}_1} $.  
    Then, the FDR of the decision rule $\boldsymbol{\delta}$, obtained from COIN, is:
    \begin{equation*}
        \text{FDR}(\boldsymbol{\delta}) = \alpha + O\left( \left(\frac{(\log{m^{\mathcal{D}_2}})^5}{m^{\mathcal{D}_2}}\right)^{1/4}\right),
    \end{equation*}
    where $\alpha$ denotes a pre-specified FDR level.
\end{theorem}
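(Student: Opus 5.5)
The plan is to compare the data-adaptive procedure with its oracle counterpart, working conditionally on the training data $\mathcal{D}_2$ throughout. Since $\hat{G}$ and $\hat{u}$ are computed from $\mathcal{D}_2$ only, conditioning on $\mathcal{D}_2$ fixes the score function $\hat{u}$. The test triplets $(X_{i,1},S_{i,1}^2,\tilde{X}_{i,1})$ are then independent across $i$. The only departure from the oracle setting of Section~\ref{sub_sec_3.1} is that $\tilde{X}_{i,1}$ is drawn from $h_{\hat{G}}(\cdot\mid S_{i,1}^2)$ rather than from $h_{G}(\cdot\mid S_{i,1}^2)$.

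\textbf{Oracle case.} Let $\boldsymbol{\delta}^*$ be the rule obtained from the oracle calibration variables \eqref{oracle_cal}. By conditional exchangeability \eqref{cond_exch} and the no-ties assumption, for each null $j$ the sign $\mathrm{sgn}(\tilde{u}_j-u_j)$ is a fair coin. It is independent of $|\tilde{u}_j-u_j|$, of $\min(u_j,\tilde{u}_j)$, and of all other coordinates. The threshold \eqref{thresh_unknownG} has exactly the knockoff$+$ form of \citet{barber2015controlling}. The usual leave-one-out and optional-stopping argument for the process $t\mapsto \#\{\text{null }u\le\tilde u\wedge t\}/(1+\#\{\text{null }\tilde u\le u\wedge t\})$ therefore gives
\[
\mathrm{FDR}(\boldsymbol{\delta}^*)=\sum_{j\in\mathcal{H}_0^{\mathcal{D}_1}}\mathbb{E}\Big[\tfrac{\delta_j^*}{R^*\vee 1}\Big]\le\alpha .
\]
No ties is exactly what makes the swap map $u_j\leftrightarrow\tilde u_j$ a measure-preserving involution that changes the sign and nothing else.

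\textbf{Perturbation step.} I then show the two-sided estimate
\[
\big|\mathrm{FDR}(\boldsymbol{\delta})-\mathrm{FDR}(\boldsymbol{\delta}^*)\big|\le \epsilon_{m^{\mathcal{D}_2}}=O\big(((\log m^{\mathcal{D}_2})^5/m^{\mathcal{D}_2})^{1/4}\big).
\]
Together with the oracle bound, this yields the statement: the FDR of $\boldsymbol{\delta}$ lies within $O(\epsilon)$ of the oracle FDR, hence equals at most $\alpha$ up to an $O(\epsilon)$ term, which is how I read $\alpha+O(\cdot)$.

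A crude coupling of all $\tilde X_{i,1}$ with all $\tilde X^*_{i,1}$ costs $\sum_i \mathrm{TV}_i$, which is too large. Instead I work null by null, following the robust-knockoff argument of \citet{barber2020robust}. For each null $j$, I replace only $\tilde{X}_{j,1}$ by its oracle version via a maximal coupling, with $P(\tilde X_{j,1}\ne\tilde X^*_{j,1}\mid S^2_{j,1},\mathcal{D}_2)=\mathrm{TV}\big(h_{\hat G}(\cdot\mid S_{j,1}^2),h_G(\cdot\mid S_{j,1}^2)\big)=:\Delta_j$. Each summand $\delta_j/(R\vee1)$ of the FDR then moves by at most $\mathbb{E}[\Delta_j/(R\vee1)]$, up to the change in $R$ caused by a single coordinate, which is $O(1/(R\vee1))$. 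Combining this with the knockoff$+$ bound $R\ge(1+\#\{\tilde u\le u\wedge\tau\})/\alpha$, the errors aggregate to a quantity of order $\mathbb{E}_{S^2}[\Delta(S^2)]$, the average conditional total variation. I apply the same argument in both directions to get the absolute value.

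\textbf{Bounding $\mathbb{E}_{S^2}[\Delta(S^2)]$: the main obstacle.} Write $f_G(s^2)=\int p(s^2\mid\sigma^2)\,dG$ and $g_G(x,s^2)=\int p(x\mid0,\sigma^2)p(s^2\mid\sigma^2)\,dG$, so that $h_G=g_G/f_G$. Integrating $x$ out of $g_{\hat G}-g_G$ gives $f_{\hat G}-f_G$. Hence $\mathbb{E}_{S^2}\Delta \le 2\,\mathrm{TV}(g_{\hat G},g_G)$, and the problem reduces to a joint total-variation bound between the mixtures $g_{\hat G}$ and $g_G$. Here the risk is that the NPMLE only controls the $S^2$-marginal $f$.

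Under Assumption~\ref{assum_1} with $\nu\ge2$, the chi-square kernel is smooth and identifiable on $[\underaccent{\bar}{L},\bar U]$. I would invoke the NPMLE Hellinger rate for scale mixtures from \citet{ignatiadis2025empirical}, namely $H^2(f_{\hat G},f_G)\lesssim(\log m^{\mathcal{D}_2})^{5}/m^{\mathcal{D}_2}$ with high probability. I would then convert this into closeness of $\hat G$ and $G$ in a weak sense, by matching the low-order moments of $\sigma^{-2}$ through a Taylor or polynomial approximation of $\sigma^2\mapsto p(x\mid0,\sigma^2)$ on the compact support. This yields $\mathrm{TV}(g_{\hat G},g_G)\lesssim H(f_{\hat G},f_G)^{1/2}$ up to logarithmic factors, which is the source of the fourth-root rate. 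A failure event of probability $O(1/m^{\mathcal{D}_2})$ is absorbed because FDP is at most $1$.
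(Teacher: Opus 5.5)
Your oracle step is fine. The gap is in the perturbation step.

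\textbf{Why the perturbation step fails.} Replacing only $\tilde X_{j,1}$ by $\tilde X^*_{j,1}$ does not produce the oracle summand $\delta^*_j/(R^*\vee1)$, because every other calibration variable is still drawn from $h_{\hat G}$. A single coordinate can also move $R$ by far more than $O(1)$: $\tau$ in \eqref{thresh_unknownG} is the last $t$ with $Q(t)\le\alpha$, and flipping one sign can carry it across a long stretch on which $Q$ sits just above $\alpha$. Finally, the maximal-coupling failure event occurs where $h_{\hat G}$ and $h_G$ disagree, so it is correlated with $\delta_j$ and $R$. It therefore cannot be replaced by $\Delta_j$ inside the expectation.

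If you run the comparison correctly, via the leave-one-out argument of \citet{barber2020robust}, set $\xi_j=\mathbb{I}(u_j\le\tilde u_j)$, $s_j=u_j\wedge\tilde u_j$ and $q_j=\mathbb{P}(\xi_j=1\mid s_j)$. Given $\mathcal{D}_2$ you obtain
\[
\mathrm{FDR}(\boldsymbol{\delta})\le\alpha+\alpha\sum_{j\in\mathcal{H}_0^{\mathcal{D}_1}}\mathbb{E}\Big[(2q_j-1)_+\,\frac{\mathbb{I}(s_j\le\tau_j)}{1+N_{-j}(\tau_j)}\Big],
\]
where $\tau_j$ is the threshold recomputed with $\xi_j$ set to $1$, and $N_{-j}(t)=\#\{k\in\mathcal{H}_0^{\mathcal{D}_1}\setminus\{j\}:\xi_k=0,\ s_k\le t\}$.
\begin{itemize}
\item The weights sit on the nulls at or below the threshold, so the error is the sign bias averaged over the rejection region.
\item The $f_G$-average of the bias, which is what $\mathbb{E}_{S^2}[\Delta(S^2)]$ controls, bounds this only after dividing by the fraction of nulls in that region.
\item The inequality $R\ge(1+N)/\alpha$ does not help, because $N$ can be $0$.
\end{itemize}
This is why the robust-knockoff bound needs $\max_{j\in\mathcal{H}_0}\widehat{\mathrm{KL}}_j$ to be small, and why the bound of \citet{ignatiadis2025empirical} carries the factor $1/\kappa_m$.

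The non-uniformity is real. Take $G=\delta_1$, $\hat G=\delta_{1-\eta}$ and a score decreasing in $|x|$. Then $\mathbb{E}_{S^2}[\Delta(S^2)]=O(\eta)$. But $h_{\hat G}$ has lighter tails than $h_G$, so $q(s)\to1$ at the most extreme scores. Under the global null, the conditional FDR therefore tends to one as $m^{\mathcal{D}_1}\to\infty$.

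So your claimed aggregation to $O(\mathbb{E}_{S^2}[\Delta(S^2)])$ fails uniformly in $m^{\mathcal{D}_1}$. You need a further ingredient, such as control of $h_G/h_{\hat G}$ on the tail region the nulls actually populate, or a lower bound on the rejection proportion. Alternatively, let constants depend on $m^{\mathcal{D}_1}$. In that case the crude coupling you discarded already gives $|\mathrm{FDR}(\boldsymbol{\delta})-\mathrm{FDR}(\boldsymbol{\delta}^*)|\le m^{\mathcal{D}_1}\,\mathbb{E}_{S^2}[\Delta(S^2)]$.

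\textbf{How the paper proceeds, and the rest of your argument.} The paper never sums per-null perturbations. It treats the null signs, given $\{s_k\}$, as i.i.d.\ Bernoulli$(\rho)$ with a single $\rho$ (Lemma~\ref{lemma5}). It then runs a reversed supermartingale with the inflated offset $M$ (Lemma~\ref{lemma6}). The averaged TV of Lemmas~\ref{lemma3}--\ref{lemma4} enters only through Markov's inequality, to make $\{1/2-\rho\ge\epsilon\}$ rare (Lemma~\ref{lemma7}). This gives the one-sided bound $\mathrm{FDR}\le\alpha+\mathbb{P}(E^c)$. Only this direction is needed, so your two-sided estimate asks for more than the statement.

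Be aware, though, that by the proof of Lemma~\ref{lemma5} the conditional probability defining $\rho$ equals $\mathbb{P}(\xi_i=0\mid s_i)=1-q_i$, which varies with $s_i$. Collapsing it to one $\rho$ is exactly where the non-uniformity above is suppressed. Importing that step would not close your gap.

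Your second reduction, by contrast, is correct and cleaner than the paper's. You have $\mathbb{E}_{S^2}[\Delta(S^2)]\le2\,\mathrm{TV}(g_{\hat G},g_G)$. Since $X^2+\nu S^2$ is sufficient for $\sigma^2$ under $\mu=0$, this gives exactly
\[
\mathrm{TV}(g_{\hat G},g_G)=\mathrm{TV}\big(f_{\hat G}(\cdot;\nu+1),f_G(\cdot;\nu+1)\big).
\]
This avoids the singular change of variables that forces the split at $|x|=a$ in Lemmas~\ref{lemma2}--\ref{lemma3}, which is the source of the paper's fourth root. What remains is a bound on the $(\nu+1)$-d.f.\ mixtures, which the paper imports in $L_2$ form from \citet{ignatiadis2025empirical}. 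Your moment-matching sketch for an $H^{1/2}$ inequality is unnecessary and, as written, not an argument.
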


In the oracle case, where the prior distribution~$G$ is known, the FDR is controlled at the nominal level~$\alpha$ in finite samples (see Theorem~\ref{theorem4} in Section~\ref{oracle_case} of the Supplementary Material). By contrast, when $G$ is unknown, the pairwise exchangeability condition generally fails to hold, leading to an additional $O(\cdot)$ term. This term represents the statistical cost incurred by estimating~$G$ from data.

\begin{remark}
Theorem~\ref{theorem1} is closely related to the result of \citet{ignatiadis2025empirical}, who derived a similar FDR bound under the compound setting. Let $\boldsymbol{\delta}_{\mathrm{IS}}$ denote their proposed decision rule. They showed that, under the compound setting,
\begin{equation*}
    \mathrm{FDR}(\boldsymbol{\delta}_{\mathrm{IS}}) 
    \leq \alpha + \frac{C}{\kappa_m} \!\left( \frac{(\log m)^5}{m} \right)^{1/4} + \eta_m,    
\end{equation*}
where $m$ is the number of tested hypotheses, and $\kappa_m$ and $\eta_m$ are tuning parameters such that $\mathbb{P}(R < m\kappa_m) \leq \eta_m$, with $R$ denoting the number of rejections made by the procedure. If the rejection ratio $R/m$ remains constant as $m \to \infty$, meaning that the procedure consistently rejects a fixed proportion of hypotheses, then for any $0 < \kappa_m < R/m$ we have $\eta_m = 0$. In this case, the bound simplifies to
\begin{equation*}
    \mathrm{FDR}(\boldsymbol{\delta}_{\mathrm{IS}}) 
    = \alpha + O\!\left( \left( \frac{(\log m)^5}{m} \right)^{1/4} \!\right),
\end{equation*}
which matches the order of the error term in Theorem~\ref{theorem1}.
\end{remark}

\section{Data Splitting Methods}\label{sec_4}

\subsection{Setting without External Training Dataset}
Thus far, our framework has assumed the availability of an external training dataset~$\mathcal{D}_2$. In many practical scenarios, however, such a dataset may not be available. To accommodate this more realistic setting, we introduce two data-splitting strategies, sample-splitting and feature-splitting. The key idea is to construct an internal training dataset from the test dataset itself, thereby enabling the application of the COIN algorithm even in the absence of external training data. To this end, we extend the definition of the test dataset. Specifically, we consider two forms of test datasets:

\begin{enumerate}[label=(\Roman*)]
    \item The \textit{individual-level test dataset}, denoted by $\mathcal{D}^{\text{raw}} \in \mathbb{R}^{m^{\mathcal{D}} \times n}$, is a data matrix in which each row corresponds to a feature (e.g., a gene) and each column corresponds to a sample (e.g., a subject), containing the full set of individual observations.
 
    \item The \textit{summary-level test dataset}, denoted by $\mathcal{D} \coloneqq \{(X_i, S_i^2)\}_{i \in \mathcal{H}^\mathcal{D}}$, consists of summary statistics derived from the individual-level dataset. Since we focus on the setting where only the test dataset is available, we simplify the notation by omitting subscripts that distinguish between the training and test datasets; for instance, we write $X_i$ instead of $X_{i,1}$. Compared with the individual-level dataset, the summary-level dataset is less informative because it retains only aggregated summary statistics rather than the original observations.
\end{enumerate}

\noindent In what follows, we distinguish between these two types of test datasets and use the corresponding terminology.

\subsection{Sample-Splitting Method}\label{sub_sec_4.1}
In this section, we consider the setting in which the individual-level test dataset $\mathcal{D}^{\text{raw}}$ is available. In other words, rather than assuming that the summary-level test dataset 
$\mathcal{D} = \{(X_i, S_i^2)\}_{i \in \mathcal{H}^\mathcal{D}}$ 
has been pre-computed and provided, we assume access to its underlying source data.

The approach begins by randomly dividing the columns (i.e., samples) of the individual-level test dataset $\mathcal{D}^{\text{raw}}$ into two disjoint subsets of equal size, denoted by $\mathcal{D}_1^{\text{raw}}$ and $\mathcal{D}_2^{\text{raw}}$, while taking into account the underlying data structure (see Remark~\ref{remark_ss} for concrete examples). We then compute summary statistics for each subset to obtain two datasets of the form
\begin{equation*}
    \mathcal{D}_1^{\text{split}} = \{(X_{i,1}^{(\text{split})}, S_{i,1}^{2(\text{split})})\}_{i \in \mathcal{H}^\mathcal{D}}, 
    \quad 
    \mathcal{D}_2^{\text{split}} = \{(X_{i,2}^{(\text{split})}, S_{i,2}^{2(\text{split})})\}_{i \in \mathcal{H}^\mathcal{D}}.
\end{equation*}
Each summary pair $(X_{i,k}^{(\text{split})}, S_{i,k}^{2(\text{split})})$ is computed from $\mathcal{D}_k^{\text{raw}}$ for $k = 1, 2$. The superscript ``split'' is used to explicitly distinguish between the test and training datasets considered in Section~\ref{sec_3} and those obtained by splitting the individual-level test dataset. Once the summary datasets are constructed, we apply COIN algorithm, treating $\mathcal{D}_1^{(\text{split})}$ as the test dataset and $\mathcal{D}_2^{(\text{split})}$ as the training dataset. The overall procedure of the sample-splitting method is summarized in Algorithm~\ref{alg_2}, presented in Section~\ref{algorithms} of the Supplementary Material.

Under the common assumption of independence among samples, sample-splitting yields two independent subsets, $\mathcal{D}_1^{\text{raw}}$ and $\mathcal{D}_2^{\text{raw}}$. Consequently, the corresponding summary-level datasets, $\mathcal{D}_1^{\text{split}}$ and $\mathcal{D}_2^{\text{split}}$, are also independent. Hence, $\mathcal{D}_2^{\text{split}}$ serves as an external training dataset independent of $\mathcal{D}_1^{\text{split}}$, and the theoretical guarantees established in Section~\ref{sub_sec_3.3} continue to hold for the decision rule derived from the sample-splitting method.

\begin{remark}\label{remark_ss}
    The sample-splitting method applies to a wide range of experimental designs. For illustration, consider first the one-group setting, where all samples are drawn from a single population and the goal is to infer properties of this common distribution. In this case, the total set of samples is divided into two subsets of equal size: one used to construct $\mathcal{D}_1^{\text{raw}}$ and the other to construct $\mathcal{D}_2^{\text{raw}}$. 
    As another example, consider the two-group setting, where samples are independently drawn from two distinct populations, typically representing different experimental conditions such as treatment and control. Here, the raw test data matrix has dimension $m \times (n_1 + n_2)$, where $n_1$ and $n_2$ denote the numbers of samples from each population, respectively. We independently split the samples within each group into two equal-sized parts: one part from each group is used to construct $\mathcal{D}_1^{\text{raw}}$, and the remaining parts are used to construct $\mathcal{D}_2^{\text{raw}}$.
\end{remark}

\subsection{Feature-Splitting Method}\label{sub_sec_4.2}
Although the sample-splitting method described in the previous section is statistically natural and conceptually simple, it suffers from a loss of power since only half of the original samples are used to construct the training dataset. To address this limitation, we propose an alternative splitting strategy designed to enhance power. This approach splits the features rather than the samples and, importantly, can be implemented using only the summary-level test dataset $\mathcal{D} = \{(X_i, S_i^2)\}_{i \in \mathcal{H}^{\mathcal{D}}}$ 
without requiring access to the individual-level data.

We adopt a $K$-fold splitting scheme inspired by \citet{ignatiadis2021covariate}. Specifically, we randomly partition the index set $\mathcal{H}^{\mathcal{D}}$ into $K$ disjoint subsets $\mathcal{H}^{(1)}, \dots, \mathcal{H}^{(K)}$ of approximately equal size, and define the corresponding data splits as
\begin{equation*}
    \mathcal{D}^{(k)} \coloneqq \{(X_i, S_i^2) : i \in \mathcal{H}^{(k)}\}, 
    \quad k = 1, \dots, K.
\end{equation*}
For each fold $k$, we treat $\mathcal{D}^{(k)}$ as the test dataset and define the corresponding training dataset as
\begin{equation*}
    \mathcal{D}^{(-k)} \coloneqq \{(X_i, S_i^2) : i \in \mathcal{H}^{\mathcal{D}} \setminus \mathcal{H}^{(k)}\}.
\end{equation*}
We then follow the steps of COIN algorithm up to the computation of the threshold. Specifically, we first estimate the prior distribution $G$ using $\mathcal{D}^{(-k)}$, yielding $\hat{G}^{(k)}$. 
Based on $\hat{G}^{(k)}$, we generate calibration variables $\tilde{X}_i^{(k)}$ for each $i \in \mathcal{H}^{(k)}$. 
We then specify a working prior and a conformity score function $u(\cdot,\cdot)$, which are estimated using $\mathcal{D}^{(-k)}$. 
Let $\hat{u}^{(k)}(\cdot,\cdot)$ denote the resulting estimated conformity score function. We then compute the conformity scores $u_i^{(k)} = \hat{u}^{(k)}(X_i^{(k)}, S_i^{2(k)})$ and $\tilde{u}_i^{(k)} = \hat{u}^{(k)}(\tilde{X}_i^{(k)}, S_i^{2(k)})$, for all $i \in \mathcal{H}^{(k)}$. Next, we determine a threshold $\tau^{(k)}$ as in~\eqref{thresh_unknownG}, using a fold-specific target FDR level $\alpha^{(k)}$. 
To maintain clarity, we explicitly index all fold-specific quantities by $k$, including $\hat{G}^{(k)}$, $X_i^{(k)}$, $\tilde{X}_i^{(k)}$, $S_i^{2(k)}$, $\hat{u}^{(k)}(\cdot, \cdot)$, $u_i^{(k)}$, $\tilde{u}_i^{(k)}$, $\tau^{(k)}$, and $\alpha^{(k)}$. Using these quantities, we define a test statistic for each $i \in \mathcal{H}^{(k)}$ as
\begin{equation}\label{e_variable}
    E_i^{(k)} \coloneqq 
    \frac{|\mathcal{H}^{(k)}| \cdot \xi_i^{(k)} 
    \cdot \mathbb{I}\!\left(s_i^{(k)} \leq \tau^{(k)}\right)}
    {1 + \sum_{j \in \mathcal{H}^{(k)}} 
    \left(1 - \xi_j^{(k)}\right) 
    \cdot \mathbb{I}\!\left(s_j^{(k)} \leq \tau^{(k)}\right)},
\end{equation}
where $\xi_i^{(k)} = \mathbb{I}\!\left(u_i^{(k)} \leq \tilde{u}_i^{(k)}\right)$ and $s_i^{(k)} = u_i^{(k)} \wedge \tilde{u}_i^{(k)}$.

By repeating this procedure for each fold $k = 1, \dots, K$, we obtain fold-specific collections of test statistics:
\begin{equation*}
    \mathcal{E}^{(k)} \coloneqq \{E_i^{(k)} : i \in \mathcal{H}^{(k)}\}, 
    \quad k = 1, \dots, K.
\end{equation*}
We then aggregate the test statistics across all folds into a single collection: 
\begin{equation*}
    \mathcal{E} \coloneqq \bigcup_{k = 1}^K \mathcal{E}^{(k)}.    
\end{equation*}
Although the random variables in $\mathcal{E}$ are not generalized/compound $e$-variables in the theoretical sense, 
we treat them as such and apply the $e$BH procedure \citep{wang2022false} with a target FDR level $\alpha_{e\text{BH}}$ to construct the rejection set. This constitutes the basic version of the feature-splitting method.

To further improve detection power, the U-$e$BH procedure proposed by \citet{xu2023more} can be used as an alternative to the standard $e$BH procedure. In this approach, an external random variable $U \sim \mathrm{Unif}(0,1)$, independent of all elements in $\mathcal{E}$, is introduced. Each statistic in $\mathcal{E}$ is then scaled by $U$, yielding
\begin{equation*}
    \mathcal{E}^* \coloneqq 
    \left\{ E_i^{(k)} / U : i \in \mathcal{H}^{(k)},\ k \in [K] \right\}.
\end{equation*}
The standard $e$BH procedure is subsequently applied to the scaled collection $\mathcal{E}^*$. When $\mathcal{E}$ consists of valid generalized/compound $e$-variables, this randomized scaling step enhances power while maintaining valid FDR control. The complete feature-splitting procedure is summarized in Algorithm~\ref{alg_3}, presented in Section~\ref{algorithms} of the Supplementary Material.

\begin{remark}
    If the prior distribution $G$ were known, the calibration variables would satisfy the conditional exchangeability property described in~\eqref{cond_exch}. In this oracle setting, the resulting set $\mathcal{E}$ forms a valid collection of generalized/compound $e$-variables regardless of the choice of the fold-specific target FDR levels $\alpha^{(k)} \in (0,1)$ (see Section~\ref{oracle_case} of the Supplementary Material for details). Consequently, applying the $e$BH or U-$e$BH procedure to $\mathcal{E}$ guarantees exact finite-sample FDR control \citep{wang2022false, xu2023more}.
    In practice, however, the prior distribution $G$ is unknown and must be estimated from data. As a result, the conditional exchangeability property generally fails to hold, and exact finite-sample FDR control is no longer guaranteed. Nevertheless, Theorems~\ref{theorem2} and~\ref{theorem3} show that, for fixed $K$, the feature-splitting method asymptotically controls the FDR at the target level as $m^{\mathcal{D}} \to \infty$. Empirical evidence supporting this theoretical result is provided in Section~\ref{sec_5}.
\end{remark}

\subsubsection{Asymptotic FDR Control of the Feature-Splitting Method}
We establish the asymptotic validity of the feature-splitting method, building on the concept of asymptotic compound $e$-variables introduced by \citet{ignatiadis2024asymptotic}. The main results are summarized in Theorems~\ref{theorem2} and~\ref{theorem3}.

Theorem~\ref{theorem2} establishes that the decision rule obtained by applying the $e$BH procedure to the collection $\mathcal{E}$ asymptotically controls the FDR at the target level~$\alpha$. A proof is given in Section~\ref{proof_sec_4} of the Supplementary Material.

\begin{theorem}\label{theorem2}
Let $\boldsymbol{\delta}_{e\text{BH}} = \big\{ \delta_{i, e\text{BH}}^{(k)} : i \in \mathcal{H}^{(k)},\ k \in [K] \big\}$ 
denote the collection of decision rules obtained by applying 
the $e$BH procedure to the collection $\mathcal{E} = \big\{ E_i^{(k)} : i \in \mathcal{H}^{(k)},\ k \in [K] \big\}$. For a fixed number of folds~$K$, as the total number of hypotheses $|\mathcal{H}^{\mathcal{D}}| \to \infty$, the procedure asymptotically controls the FDR at the pre-specified level~$\alpha_{e\text{BH}}$. Formally,
\begin{equation*}
    \mathrm{FDR}(\boldsymbol{\delta}_{e\text{BH}}) 
    = \alpha_{e\text{BH}} + o(1).
\end{equation*}
\end{theorem}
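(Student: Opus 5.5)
Our approach combines the standard $e$BH argument with the notion of asymptotic compound $e$-variables from \citet{ignatiadis2024asymptotic}. For the $e$BH procedure applied to any nonnegative collection $\{E_i\}_{i\in\mathcal{H}^{\mathcal{D}}}$, \citet{wang2022false} show that
\begin{equation*}
\mathrm{FDR}(\boldsymbol{\delta}_{e\text{BH}}) \le \frac{\alpha_{e\text{BH}}}{m^{\mathcal{D}}}\sum_{i\in\mathcal{H}_0^{\mathcal{D}}}\mathbb{E}[E_i].
\end{equation*}
This holds because rejecting $i$ requires $E_i \ge m^{\mathcal{D}}/(\alpha_{e\text{BH}} R)$, so $\mathbb{I}(\delta_i=1)/(R\vee 1)\le \alpha_{e\text{BH}}E_i/m^{\mathcal{D}}$. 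The inequality is deterministic and needs no dependence assumptions. It therefore suffices to show
\begin{equation*}
\sum_{k=1}^K\sum_{i\in\mathcal{H}_0^{(k)}}\mathbb{E}[E_i^{(k)}] \le m^{\mathcal{D}}+o(m^{\mathcal{D}}).
\end{equation*}
Since $K$ is fixed and $|\mathcal{H}^{(k)}|/m^{\mathcal{D}}\to 1/K$, this reduces to the fold-wise bound $\mathbb{E}\big[\sum_{i\in\mathcal{H}_0^{(k)}}E_i^{(k)}\big]\le |\mathcal{H}^{(k)}|(1+o(1))$ for each $k$. If the fold-wise bound gives only $\alpha_{e\text{BH}}$ times something tending to one from above, the resulting statement is $\mathrm{FDR}\le\alpha_{e\text{BH}}+o(1)$. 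We read the theorem's equality in this sense, as an upper bound matching the form of Theorem~\ref{theorem1}.

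For a fixed fold $k$, the numerator of $E_i^{(k)}$ contains the factor $|\mathcal{H}^{(k)}|$. The fold-wise claim is therefore equivalent to
\begin{equation*}
\mathbb{E}\left[\frac{\sum_{i\in\mathcal{H}_0^{(k)}}\xi_i^{(k)}\mathbb{I}(s_i^{(k)}\le\tau^{(k)})}{1+\sum_{j\in\mathcal{H}^{(k)}}(1-\xi_j^{(k)})\mathbb{I}(s_j^{(k)}\le\tau^{(k)})}\right]\le 1+o(1).
\end{equation*}
This is the knockoff-type ratio handled in the proof of Theorem~\ref{theorem1}. Conditional on $\mathcal{D}^{(-k)}$, the quantities $\hat G^{(k)}$ and $\hat u^{(k)}$ are fixed. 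The test fold consists of i.i.d.\ draws from \eqref{hierar_model} that are independent of $\mathcal{D}^{(-k)}$, because the hierarchical model makes features i.i.d.\ and the partition is random. So, conditionally, we are exactly in the Section~\ref{sec_3} setting with $\mathcal{D}^{(-k)}$ in the role of $\mathcal{D}_2$ and $m^{\mathcal{D}_2}=|\mathcal{H}^{\mathcal{D}}\setminus\mathcal{H}^{(k)}|\to\infty$. In the oracle case, with $G$ known, exchangeability~\eqref{cond_exch} makes the null signs $\xi_i^{(k)}$ i.i.d.\ fair coins given the $s_i^{(k)}$. The standard optional-stopping (supermartingale) argument for knockoffs then bounds the ratio's expectation by $1$ at any stopping time $\tau^{(k)}$, for every choice of $\alpha^{(k)}$. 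This is the compound $e$-variable property mentioned in the remark above.

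The main obstacle is the non-oracle case, where $\hat G^{(k)}\neq G$ and the null signs are no longer fair coins. We will follow the strategy we expect underlies Theorem~\ref{theorem1}. Conditionally on $S_i^2$ and the scores, each null sign satisfies $\mathbb{P}(\xi_i=1\mid\cdot)=1/2+\epsilon_i$. Here $|\epsilon_i|$ is controlled by the total variation distance between $h_{\hat G}(\cdot\mid\mu=0,S^2=S_i^2)$ and $h_G(\cdot\mid\mu=0,S^2=S_i^2)$. Using the Hellinger rate for the NPMLE under Assumption~\ref{assum_1} (as in \citet{ignatiadis2025empirical}) and $\nu\ge2$, this distance is $O(((\log m)^5/m)^{1/4})$ in expectation. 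We then bound the ratio either by a coupling, or by a likelihood-ratio change of measure to the oracle calibration variables, which gives the sign-flip supermartingale argument a multiplicative factor of $1+O(\epsilon)$. The subtle point is to keep this factor from being amplified by the small-denominator event. The $1+\sum$ in the denominator keeps the ratio at most $|\mathcal{H}_0^{(k)}|$. A high-probability event on which $\max_i|\epsilon_i|$ is small, with its complement contributing $o(1)$ after normalization by $|\mathcal{H}^{(k)}|$, should suffice.

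Once each fold satisfies the bound, summing over the $K$ folds and plugging into the $e$BH inequality gives $\mathrm{FDR}(\boldsymbol{\delta}_{e\text{BH}})\le\alpha_{e\text{BH}}(1+o(1))=\alpha_{e\text{BH}}+o(1)$. Dependence across folds, through overlapping training sets, is harmless, because the $e$BH bound only uses the sum of marginal expectations.
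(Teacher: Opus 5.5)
There is a genuine gap, and it appears in two places.

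First, you reduce the problem to $\sum_{k}\sum_{i\in\mathcal{H}_0^{(k)}}\mathbb{E}[E_i^{(k)}]\le m^{\mathcal{D}}(1+o(1))$. That forces you to pay for the bad event at the scale of the $e$-values, and your claim that the complement contributes $o(1)$ does not follow. With the bound you quote (the ratio is at most $|\mathcal{H}_0^{(k)}|$), the complement $B^c$ of your good event contributes up to $|\mathcal{H}_0^{(k)}|\,\mathbb{P}(B^c)$ to the fold-wise expectation, not $\mathbb{P}(B^c)$. This is $o(1)$ only if $\mathbb{P}(B^c)=o(1/m^{\mathcal{D}})$. The NPMLE-based tools (Lemma~\ref{lemma7}, Theorem~9 of \citet{ignatiadis2025empirical}) give only rates of order $r:=((\log m^{\mathcal{D}})^5/m^{\mathcal{D}})^{1/4}$. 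The paper avoids this by splitting at the level of the FDP, which is at most $1$:
\[
\mathrm{FDP}(\boldsymbol{\delta}_{e\text{BH}})\le\frac{\alpha_{e\text{BH}}}{|\mathcal{H}^{\mathcal{D}}|}\sum_{k=1}^{K}\sum_{i\in\mathcal{H}_0^{(k)}}E_i^{(k)}\,\mathbb{I}(A)+\mathbb{I}(A^c),\qquad A=\bigcap_{k=1}^{K}A^{(k)}.
\]
So the compound $e$-value inequality is needed only on $A$. Beyond that, one needs only $\mathbb{P}(A^c)\le\sum_k\mathbb{P}((A^{(k)})^c)\to0$, which is where fixed $K$ enters. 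You should restructure your argument this way.

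Second, and more fundamentally, you never establish the fold-wise bound on the good event, and the mechanisms you suggest do not deliver it.

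\emph{Likelihood-ratio change of measure.} Applied to the whole calibration vector, this cannot give a factor $1+O(\epsilon)$, because the discrepancies tensorize. The second moment of the product likelihood ratio is $\prod_i(1+d_i)$, where $d_i$ is the $\chi^2$-divergence between $h_{\hat{G}}(\cdot\mid S_i^2)$ and $h_G(\cdot\mid S_i^2)$. Per-coordinate bounds of order $r$, summed over $|\mathcal{H}^{(k)}|$ coordinates, diverge.

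\emph{The event $\{\max_i\epsilon_i\le\epsilon\}$.} Nothing shows this event is likely, even though only the anti-conservative side matters. Lemma~\ref{lemma4} controls the sign bias only after integrating over the law of $s_i$. A per-null Markov bound of order $r/\epsilon$ followed by a union bound over $|\mathcal{H}_0^{(k)}|$ nulls therefore diverges. The bias is also least controlled for the extreme scores, which are the first to enter the selection.

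\emph{Coupling.} Coupling with the oracle calibration variables leaves on the order of $m^{\mathcal{D}}r$ mismatched null signs. This is negligible only if the number of selections is of order $m^{\mathcal{D}}$, which the theorem does not assume.

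The paper does not perturb around the oracle at all. It encodes the null-sign bias in fold $k$ by the variable $\rho^{(k)}$ of Lemma~\ref{lemma5}. It then applies the reverse-time supermartingale bound of Lemma~\ref{lemma6}, stated for every value of $\rho^{(k)}$, with the correction $M^{(k)}$ absorbing the bias. This bound is used on the one-sided event $A^{(k)}=\{1/2-\rho^{(k)}<\epsilon^{(k)}\}$, and $\mathbb{P}((A^{(k)})^c)$ is bounded by Lemma~\ref{lemma7}.

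Be aware that this route treats the null signs in a fold as conditionally i.i.d.\ with a common $\rho^{(k)}$. As you correctly observe, the conditional sign probability is really a function of $s_i$. That is exactly the point where your heterogeneity concern must be confronted if you adopt the paper's route.
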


Theorem~\ref{theorem3} shows that replacing the $e$BH procedure 
with the U-$e$BH procedure yields a decision rule that also 
asymptotically controls the FDR. The proof is given in Section~\ref{proof_sec_4} of the Supplementary Material.

\begin{theorem}\label{theorem3}
Let $\boldsymbol{\delta}_{\text{U-}e\text{BH}} = 
\{\delta_{i, \text{U-}e\text{BH}}^{(k)} : i \in \mathcal{H}^{(k)},\ k \in [K]\}$ 
denote the collection of decision rules obtained by applying 
the U-$e$BH procedure to the collection 
$\mathcal{E}^* = \{ E_i^{(k)}/U : i \in \mathcal{H}^{(k)},\ k \in [K] \}$, 
where $U \sim \mathrm{Unif}(0,1)$ is an external random variable 
independent of all $E_i^{(k)}$. 
For a fixed number of folds~$K$, as the total number of hypotheses 
$|\mathcal{H}^{\mathcal{D}}| \to \infty$, 
the procedure asymptotically controls the FDR at the pre-specified level~$\alpha_{e\text{BH}}$. 
Formally,
\begin{equation*}
    \mathrm{FDR}(\boldsymbol{\delta}_{\text{U-}e\text{BH}}) 
    = \alpha_{e\text{BH}} + o(1).
\end{equation*}
\end{theorem}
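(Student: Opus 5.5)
The plan is to reduce Theorem~\ref{theorem3} to Theorem~\ref{theorem2} by working with asymptotic compound $e$-variables in the sense of \citet{ignatiadis2024asymptotic}, and then to reuse the randomization argument of \citet{xu2023more}.

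The key quantity extracted from the proof of Theorem~\ref{theorem2} is
\begin{equation*}
    \Delta_m \coloneqq \sum_{k=1}^K \sum_{i \in \mathcal{H}^{(k)} \cap \mathcal{H}_0} \mathbb{E}\big[E_i^{(k)}\big] - |\mathcal{H}^{\mathcal{D}}|.
\end{equation*}
That argument should show $\Delta_m^+ = o(|\mathcal{H}^{\mathcal{D}}|)$ as $|\mathcal{H}^{\mathcal{D}}|\to\infty$ with $K$ fixed. The mechanism is as follows. In the oracle case, the fold-wise null sums are bounded by $|\mathcal{H}^{(k)}|$ by the knockoff-type martingale argument. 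With $\hat G^{(k)}$ estimated from $\mathcal{D}^{(-k)}$, the estimation error enters through the total variation distance between $h_{\hat G^{(k)}}$ and $h_G$. This distance is $o(1)$ by the NPMLE rate used for Theorem~\ref{theorem1}, since $|\mathcal{D}^{(-k)}|\asymp |\mathcal{H}^{\mathcal{D}}|$.

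Given this, the U-$e$BH step is handled in three parts.

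\emph{Step 1.} Write the FDR in the standard self-consistency form. U-$e$BH applied to $\{E_i/U\}$ rejects $i$ only if $E_i/U \ge |\mathcal{H}^{\mathcal{D}}|/(\alpha_{e\text{BH}} R)$, where $R\ge1$ is the number of rejections. Hence
\begin{equation*}
    \mathrm{FDP} = \sum_{i\in\mathcal{H}_0}\frac{\mathbb{I}(\delta_i=1)}{R\vee 1} \le \sum_{i\in\mathcal{H}_0}\frac{\mathbb{I}\{U \le \alpha_{e\text{BH}} R E_i/|\mathcal{H}^{\mathcal{D}}|\}}{R\vee 1}.
\end{equation*}

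\emph{Step 2.} Condition on all data, so that $R$ becomes a function of $U$. Apply the lemma from \citet{xu2023more}: for the U-$e$BH rejection count, which is nonincreasing in $U$, one has
\begin{equation*}
    \mathbb{E}_U\!\left[\frac{\mathbb{I}\{U\le cR\}}{R\vee1}\right]\le c
\end{equation*}
for deterministic $c\ge0$. This gives
\begin{equation*}
    \mathbb{E}[\mathrm{FDP}\mid \text{data}] \le \frac{\alpha_{e\text{BH}}}{|\mathcal{H}^{\mathcal{D}}|}\sum_{i\in\mathcal{H}_0} E_i .
\end{equation*}
The monotonicity is what needs checking: a smaller $U$ inflates every $E_i/U$ uniformly, so the $e$BH rejection set can only grow.

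\emph{Step 3.} Take expectations to obtain
\begin{equation*}
    \mathrm{FDR}\le \alpha_{e\text{BH}}\bigl(1+\Delta_m/|\mathcal{H}^{\mathcal{D}}|\bigr)=\alpha_{e\text{BH}}+o(1).
\end{equation*}

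The upper bound is all that the ``control'' statement requires. The ``$=$'' in the theorem is read in the same sense as in Theorem~\ref{theorem2}, namely $\le \alpha_{e\text{BH}}+o(1)$.

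The main obstacle is the $e$-value bound $\Delta_m=o(|\mathcal{H}^{\mathcal{D}}|)$ itself. Each $E_i^{(k)}$ has a data-dependent threshold $\tau^{(k)}$ computed from the same fold, so bounding expectations under a perturbed, non-exchangeable calibration distribution requires a coupling. I would first couple $\tilde X_i^{(k)}$ with oracle variables $\tilde X_i^{*(k)}$, chosen so that
\begin{equation*}
    \mathbb{P}\bigl(\tilde X_i^{(k)}\neq \tilde X_i^{*(k)}\mid S_i^2,\hat G^{(k)}\bigr)=\mathrm{TV}\bigl(h_{\hat G^{(k)}},h_G\bigr).
\end{equation*}
I would then bound the difference in fold-wise sums by $|\mathcal{H}^{(k)}|$ times the expected fraction of decoupled nulls, using that the denominator is at least $1$ and the numerator factor $|\mathcal{H}^{(k)}|$ is bounded relative to fold size. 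If Theorem~\ref{theorem2}'s proof already establishes this, I simply cite it.
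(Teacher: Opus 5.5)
\paragraph{Gap in the main step.} The load-bearing claim is $\Delta_m^+=o(|\mathcal{H}^{\mathcal{D}}|)$, an \emph{unconditional} bound on the expected null $e$-value sum, and this claim is not available.

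The proof of Theorem~\ref{theorem2} does not supply it. It only gives the truncated bound \eqref{eqn:theorem2_2}, namely $\mathbb{E}\bigl[\sum_{k}\sum_{i\in\mathcal{H}_0^{(k)}}E_i^{(k)}\,\mathbb{I}(A)\bigr]\le|\mathcal{H}^{\mathcal{D}}|$ with $A=\bigcap_k\{1/2-\rho^{(k)}<\epsilon^{(k)}\}$, plus $\mathbb{P}(A^c)\to0$ via Lemma~\ref{lemma7}. Nothing controls the $e$-values on $A^c$. There a single fold's null sum can be as large as $|\mathcal{H}^{(k)}|\,|\mathcal{H}_0^{(k)}|$. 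Since $\mathbb{P}(A^c)$ is only shown to be of order $((\log m)^5/m)^{1/4}$ with $m\asymp|\mathcal{H}^{\mathcal{D}}|$, that piece is not $o(|\mathcal{H}^{\mathcal{D}}|)$. So you cannot simply cite Theorem~\ref{theorem2}.

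Your fallback coupling does not close the gap either. The fold-wise null sum built from \eqref{e_variable} has no bounded-differences property. Changing one null's calibration variable can flip its $\xi_i^{(k)}$, push the ratio in \eqref{thresh_unknownG} above $\alpha^{(k)}$ at the current threshold, and send $\tau^{(k)}$ to $-\infty$. Because every other null's $e$-value moves through the shared $\tau^{(k)}$ and the shared denominator, this changes the null sum by order $|\mathcal{H}^{(k)}|$, not $O(1)$. Meanwhile the expected number of decoupled nulls, $|\mathcal{H}_0^{(k)}|\cdot\mathrm{TV}(h_{\hat G^{(k)}},h_G)$, diverges, since the total variation distance decays only like $m^{-1/4}$ up to logarithms. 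With probability tending to one some null is decoupled, and the coupling gives nothing better than the trivial bound.

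\paragraph{Repair.} Use truncation instead of an expectation bound. Your Step~2 holds pointwise in the data, and $A$ is determined by the data and hence independent of $U$. Using $\mathrm{FDP}\le1$ on $A^c$ and \eqref{eqn:theorem2_2} on $A$, you get
\begin{equation*}
  \mathrm{FDR}\le\frac{\alpha_{e\text{BH}}}{|\mathcal{H}^{\mathcal{D}}|}\,\mathbb{E}\Bigl[\sum_{k=1}^K\sum_{i\in\mathcal{H}_0^{(k)}}E_i^{(k)}\,\mathbb{I}(A)\Bigr]+\mathbb{P}(A^c)\le\alpha_{e\text{BH}}+\mathbb{P}(A^c).
\end{equation*}
This is exactly the structure of the paper's proof.

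\paragraph{Comparison after the repair.} The only real difference from the paper is how the randomization is handled.
\begin{itemize}
\item The paper invokes Proposition~4 of \citet{xu2023more} to replace $E_i^{(k)}/U$ by the stochastically rounded values $S_{\alpha_{l(i)}}(\tilde E_i)$. It then checks $\mathbb{E}[S_{\alpha_{l(i)}}(\tilde E_i)\,\mathbb{I}(A)]=\mathbb{E}[\tilde E_i\,\mathbb{I}(A)]$.
\item Your monotonicity lemma uses that $R$ is nonincreasing in $U$ and that $\mathbb{E}_U[\mathbb{I}\{U\le cR\}/(R\vee1)]\le c$. It gives the same conditional bound directly, without the rounding representation.
\end{itemize}
Both routes are valid.
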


\begin{remark}[Uniform Power Improvement via Threshold Refinement]
    Following \cite{ren2024derandomised}, the feature-splitting method attains a uniform power improvement through a refinement of the fold-specific data-adaptive threshold $\tau^{(k)}$ without compromising type I error control. This refinement is applied across all numerical experiments and real-data applications in this paper. See Section~\ref{implementation_details_data_split} of the Supplementary Material for details.
\end{remark}

\section{Numerical Study}\label{sec_5}
\subsection{Simulated Data Generation}
To implement the sample-splitting method, access to the individual-level test dataset is required. We begin by describing the construction of the individual-level test dataset $\mathcal{D}^{\text{raw}} \in \mathbb{R}^{m^{\mathcal{D}} \times n}$ used in our simulation study. From this dataset, we derive the corresponding summary-level dataset $\mathcal{D} = \{(X_i, S_i^2)\}_{i=1}^{m^{\mathcal{D}}}$ and verify that it satisfies the hierarchical model structure specified in~\eqref{hierar_model}.

Without loss of generality, we consider a two-group comparison setting in which the objective is to test for differences in means between a treatment group and a control group across multiple features. Let $n = n_1 + n_2$ denote the total number of subjects, where $n_1$ and $n_2$ are the sample sizes of the treatment and control groups, respectively, and let $m^{\mathcal{D}}$ denote the number of features.

\paragraph{Construction of Individual-level Test Dataset.} 
For each feature $i \in [m^{\mathcal{D}}]$, we generate the latent parameters $(\mu_i, \sigma_i^2)$ from a pre-specified prior under the hierarchical model
\begin{equation*}
    \sigma_{i}^2 \overset{\text{i.i.d.}}{\sim} G(\cdot), 
    \quad
    \mu_i \mid \sigma_i^2 \overset{\text{ind.}}{\sim} 
    (1 - \pi)\, \delta_0(\cdot) + \pi\, f(\cdot;\sigma_i^2).
\end{equation*}
Let $r_i$ denote the $i$-th row of the individual-level test dataset:
\begin{equation*}
    r_i^\top = (a_{i1}, \ldots, a_{in_1},\, b_{i1}, \ldots, b_{in_2}),
\end{equation*}
where the first $n_1$ elements correspond to the treatment group and the remaining $n_2$ to the control group. Conditional on $(\mu_i, \sigma_i^2)$, the entries of the $i$-th row are generated as
\begin{equation*}
\begin{split}
    a_{ij} = \mu_{ai} + \epsilon_{aij}, 
    &\quad \epsilon_{aij} \overset{\text{i.i.d.}}{\sim} \mathcal{N}(0, \eta_i^2), 
    \quad j = 1, \ldots, n_1, \\
    b_{ij} = \mu_{bi} + \epsilon_{bij}, 
    &\quad \epsilon_{bij} \overset{\text{i.i.d.}}{\sim} \mathcal{N}(0, \eta_i^2), 
    \quad j = 1, \ldots, n_2,
\end{split}    
\end{equation*}
with the following parameter specifications:
\begin{equation*}
    \mu_{ai} = 
    \begin{cases}
        \mu_i, & \text{if } \mu_i \ne 0, \\
        0,     & \text{if } \mu_i = 0,
    \end{cases}
    \quad 
    \mu_{bi} = 0, 
    \quad \text{and} \quad 
    \eta_i^2 = \sigma_i^2 \left( \frac{1}{n_1} + \frac{1}{n_2} \right)^{-1}.
\end{equation*}
Repeating this process for all features yields the complete individual-level test dataset $\mathcal{D}^{\text{raw}}$.

\paragraph{Derivation of the Summary-Level Test Dataset.} 
From the individual-level test dataset $\mathcal{D}^{\text{raw}}$ obtained through the procedure described above, we compute summary statistics for each feature $i$. Since the goal is to test $H_{i,0} : \mu_i = 0$ for $i = 1, 2, \ldots, m^{\mathcal{D}}$, we adopt the sample mean difference as a natural estimator of $\mu_i$, defined by
\begin{equation*}
    X_i = \bar{a}_i - \bar{b}_i, 
    \quad \text{where} \quad 
    \bar{a}_i = \frac{1}{n_1} \sum_{j=1}^{n_1} a_{ij}, 
    \quad 
    \bar{b}_i = \frac{1}{n_2} \sum_{j=1}^{n_2} b_{ij}.
\end{equation*}
The corresponding variance estimator is given by
\begin{equation*}
    S_i^2 
    = 
    \left(
        \frac{1}{n_1 + n_2 - 2} 
        \big[
            (n_1 - 1) s_{ai}^2 + (n_2 - 1) s_{bi}^2
        \big]
    \right)
    \left( 
        \frac{1}{n_1} + \frac{1}{n_2} 
    \right),
\end{equation*}
where the within-group sample variances are defined as
\begin{equation*}
    s_{ai}^2 = \frac{1}{n_1 - 1} \sum_{j=1}^{n_1} (a_{ij} - \bar{a}_i)^2, 
    \quad 
    s_{bi}^2 = \frac{1}{n_2 - 1} \sum_{j=1}^{n_2} (b_{ij} - \bar{b}_i)^2.
\end{equation*}
By construction, the resulting summary statistics follow the distribution
\begin{equation*}
    X_i,\, S_i^2 \mid \mu_i, \sigma_i^2 
    \overset{\text{ind.}}{\sim} 
    N(\mu_i, \sigma_i^2) 
    \otimes 
    \frac{\sigma_i^2}{\nu} \chi^2_\nu,
    \quad \text{where } \nu = n_1 + n_2 - 2.
\end{equation*}
Thus, the summary-level test dataset $\mathcal{D} = \{(X_i, S_i^2)\}_{i=1}^{m^{\mathcal{D}}}$ follows the hierarchical model specified in~\eqref{hierar_model}.

Unless otherwise specified, in all simulation settings we set $m^{\mathcal{D}} = 20{,}000$ and assume equal sample sizes $n_1 = n_2 = 10$, yielding degrees of freedom $\nu = n_1 + n_2 - 2 = 18$.
\subsection{Performance Metric}
We use the FDR as our Type~I error metric and evaluate power using the true positive rate (TPR). The TPR is defined as the expected value of the true positive proportion (TPP), that is, the ratio of correctly rejected hypotheses among the truly non-null hypotheses:
\begin{equation*}
    \mathrm{TPR}(\boldsymbol{\delta}) = \mathbb{E}[\mathrm{TPP}(\boldsymbol{\delta})], 
    \quad
    \mathrm{TPP}(\boldsymbol{\delta}) = 
    \frac{\sum_{i=1}^m \theta_i \delta_i}
         {\left( \sum_{i=1}^m \theta_i \right) \vee 1}.
\end{equation*}
Among procedures that control the FDR, those attaining higher TPR are considered more powerful. In practice, we report empirical estimates of these metrics, computed as the averages of the observed FDP and TPP, respectively, over 200 independent replications for each simulation setting. The target FDR level is fixed at $\alpha = 0.1$ throughout.

\subsection{The Methods Considered in the Comparison}
We evaluate and compare the performance of five EBNMI methods: three existing approaches, \textit{LS} \citep{lu2019empirical}, \textit{MixTwice} \citep{zheng2021mixtwice}, and \textit{gg-Mix} \citep{seo2025empirical}, and two proposed methods, \textit{COIN-SS} and \textit{COIN-FS}:
\begin{itemize}
    \item \textit{LS} (\citealp{lu2019empirical}):  
    This method assumes a hierarchical prior structure of the form
    \begin{equation*}
        \sigma_i^2 \overset{\text{i.i.d.}}{\sim} \text{Scaled-Inv-}\chi^2, 
        \quad
        \frac{\mu_i}{\sigma_i^a} \mid \sigma_i^2 
        \overset{\text{ind.}}{\sim} 
        (1 - \pi)\, \delta_0(\cdot) + \pi\, f(\cdot),
    \end{equation*}
    where $ f $ is restricted to be a unimodal density and $ a $ is a nonnegative constant. For practical implementation, it has been suggested to approximate the density $ f $ using either a uniform or a half-uniform mixture. In our simulation studies, we consistently employ a uniform mixture to approximate the true underlying density $ f $. Although both $ a = 0 $ and $ a = 1 $ have been suggested as reasonable choices for the constant $ a $, we fix $ a = 0 $ throughout all simulation settings. The \textit{LS} is implemented using the \texttt{R} package \texttt{ashr}.

    \item \textit{MixTwice (\citealp{zheng2021mixtwice})}: 
    This method can be viewed as a generalization of \textit{LS} (with $a=0$), replacing the parametric prior on $\sigma_i^2$ with a nonparametric one. It assumes the following hierarchical model
    \begin{equation*}
        \sigma_i^2 \overset{\text{i.i.d.}}{\sim} G, 
        \quad
        \mu_i \mid \sigma_i^2 \overset{\text{i.i.d.}}{\sim} 
        (1 - \pi)\, \delta_0 + \pi\, f,
    \end{equation*}
    where $G$ is an arbitrary distribution supported on $[0,\infty)$ and $f$ is a unimodal density. The \textit{MixTwice} is implemented using the \texttt{R} package \texttt{MixTwice}.

    \item \textit{gg-Mix (\citealp{seo2025empirical})}:  
    This method can be viewed as a generalization of \textit{MixTwice} in that it relaxes the structural assumptions imposed on the prior for non-null effects $\mu_i$. Specifically, it assumes the hierarchical model
    \begin{equation*}
        \sigma_i^2 \overset{\text{i.i.d.}}{\sim} G, 
        \quad
        \mu_i \mid \sigma_i^2 \overset{\text{i.i.d.}}{\sim} 
        (1 - \pi)\, \delta_0 + \pi\, f,
    \end{equation*}
    where $G$ is an arbitrary distribution supported on $[0,\infty)$ and $f$ is left unconstrained (no unimodality assumption is imposed). For practical implementation, it has been suggested to approximate $f$ using flexible mixture distributions. In our simulations, unless otherwise stated, we consistently employ a Gaussian location-scale mixture to approximate the true underlying density $f$.

    \item \textit{COIN-SS} and \textit{COIN-FS}:      
    These are our proposed methods, which differ in their data-splitting strategies: \textit{COIN-SS} employs sample splitting, whereas \textit{COIN-FS} employs feature splitting. Both methods use the same working prior as \textit{gg-Mix}.\footnote{The working prior is not required to match the prior assumptions of \textit{gg-Mix} and may be chosen arbitrarily. The choice of the working prior does not affect the validity of the procedure, but influences its power.} Additional implementation details are provided in Section~\ref{implementation_details_data_split} of the Supplementary Material.
\end{itemize}

\noindent
We additionally consider the method of \cite{ignatiadis2025empirical} for comparison. However, as this method falls outside the EB framework, we do not report its results in the main manuscript and instead provide them in Section~\ref{add_sim} of the Supplementary Material.
 
\subsection{Simulation Setup}
We examine two distinct simulation scenarios, each characterized by a different dependence structure between the effect size $\mu_i$ and the variance component $\sigma_i^2$.

\begin{enumerate}[label=(\Roman*)]
    \item {Scenario 1 (Independent  $\sigma_i^2$ and $\mu_i$):}  
    We assume that $ \mu_i $ and $ \sigma_i^2 $ are independent. Under this assumption, the hierarchical prior structure is specified as
    \begin{equation*}
        \sigma_i^2 \overset{\text{i.i.d.}}{\sim} G(\cdot), \quad \mu_i \mid \sigma_i^2\overset{\text{i.i.d.}}{\sim} (1-\pi) \delta_0(\cdot) + \pi f(\cdot).
    \end{equation*}
    
    \item {Scenario 2 (dependent $\sigma_i^2$ and $\mu_i$):}  
    We assume the following hierarchical prior structure:
    \begin{equation*}
        \sigma_i^2 \overset{\text{i.i.d.}}{\sim} G(\cdot), \quad \frac{\mu_i}{\sigma_i} \mid \sigma_i^2 \overset{\text{ind.}}{\sim} (1-\pi) \delta_0(\cdot) + \pi f(\cdot),
    \end{equation*}
    which implies that $\sigma_i^2$ is independent of the signal indicator $\mathbb{I}(\mu_i \neq 0)$, but allows dependence between $\mu_i$ and $\sigma_i^2$.
\end{enumerate}
Scenarios~1 and~2 reflect the standard assumptions commonly adopted in many EBNMI methods.

Given the dependence structures described above, we specify the following prior distributions to define the simulation settings.
\begin{enumerate}[label=(\roman*)]
    \item {Prior distributions for $\sigma_i^2$, $G$:}
    \begin{enumerate}
        \item Scaled inverse-$\chi^2$ (SIC): $6/\chi^2_6$,
        \item Point mass at 1 (PM): $\delta_1$,
        \item Two-point discrete (TPD): $0.7 \delta_1 + 0.3 \delta_{10}$.
    \end{enumerate}
    
    \item {Distributions for non-zero $\mu_i$, $f$:}
    \begin{enumerate}
        \item Unimodal (Unimodal): $N(0, 16)$,
        \item Symmetric bimodal (Sym-Bimodal): $0.5 N(-4, 1) + 0.5 N(4, 1)$,
        \item Asymmetric bimodal (Asym-Bimodal): $0.3 N(-3, 1) + 0.7 N(4, 1)$.
    \end{enumerate}
    
    \item {Proportions of non-nulls:} $\pi \in \{0.1, 0.2, 0.3, 0.4, 0.5\}$.
\end{enumerate}

For Scenario~1, where $\mu_i$ and $\sigma_i^2$ are independent, we consider all three prior distributions for $G$ (SIC, PM, and TPD). In contrast, for Scenario~2, the PM prior for $G$ is excluded. This exclusion is due to the fact that, under the PM prior, all $\sigma_i^2$ values are fixed at~1 without variability, making it infeasible to induce meaningful dependence between $\mu_i$ and $\sigma_i^2$. Therefore, only the SIC and TPD priors are used for $G$ in Scenario~2. Combining all configurations, Scenario~1 comprises $3 \times 3 \times 5 = 45$ distinct simulation settings, whereas Scenario~2 comprises $2 \times 3 \times 5 = 30$ settings. 

\subsection{Results}

\paragraph{Simulation Results under Scenario 1.}
Figure~\ref{fig:sim1_fdr} reports the FDR of each method across all simulation settings, from which several key observations emerge.  First, \textit{LS} and \textit{MixTwice} achieve valid FDR control only when their prior assumptions hold (the top two panels of the left column for \textit{LS} and the left column for \textit{MixTwice}). When these assumptions are violated, both methods exhibit substantial FDR inflation, particularly in settings where the non-null density $f$ is bimodal and the non-null proportion $\pi$ is close to~0.5. Second, the remaining methods—\textit{gg-Mix}, \textit{COIN-SS}, and \textit{COIN-FS}—consistently maintain valid FDR control across all simulation settings. This result is expected, as these methods rely on a common prior specification that is sufficiently flexible to accommodate all true priors considered in the data-generating process.

Figure~\ref{fig:sim1_tpr} displays the TPR. Since \textit{LS} and \textit{MixTwice} fail to control the FDR in most cases and offer power comparable to \textit{gg-Mix} when FDR control is achieved, we restrict attention to \textit{gg-Mix}, \textit{COIN-FS}, and \textit{COIN-SS}. As expected, \textit{gg-Mix} uniformly attains the highest power, reflecting its use of the full dataset without splitting. The proposed methods exhibit some loss of power due to data splitting; however, the reduction for \textit{COIN-FS} is modest. In contrast, \textit{COIN-SS} is substantially less powerful and is consistently dominated by \textit{COIN-FS}. These findings demonstrate that \textit{COIN-FS} effectively mitigates the power loss inherent in \textit{COIN-SS}, achieving a marked improvement in detection performance.

\begin{figure}[!htb]
    \centering
    \includegraphics[width=\textwidth]{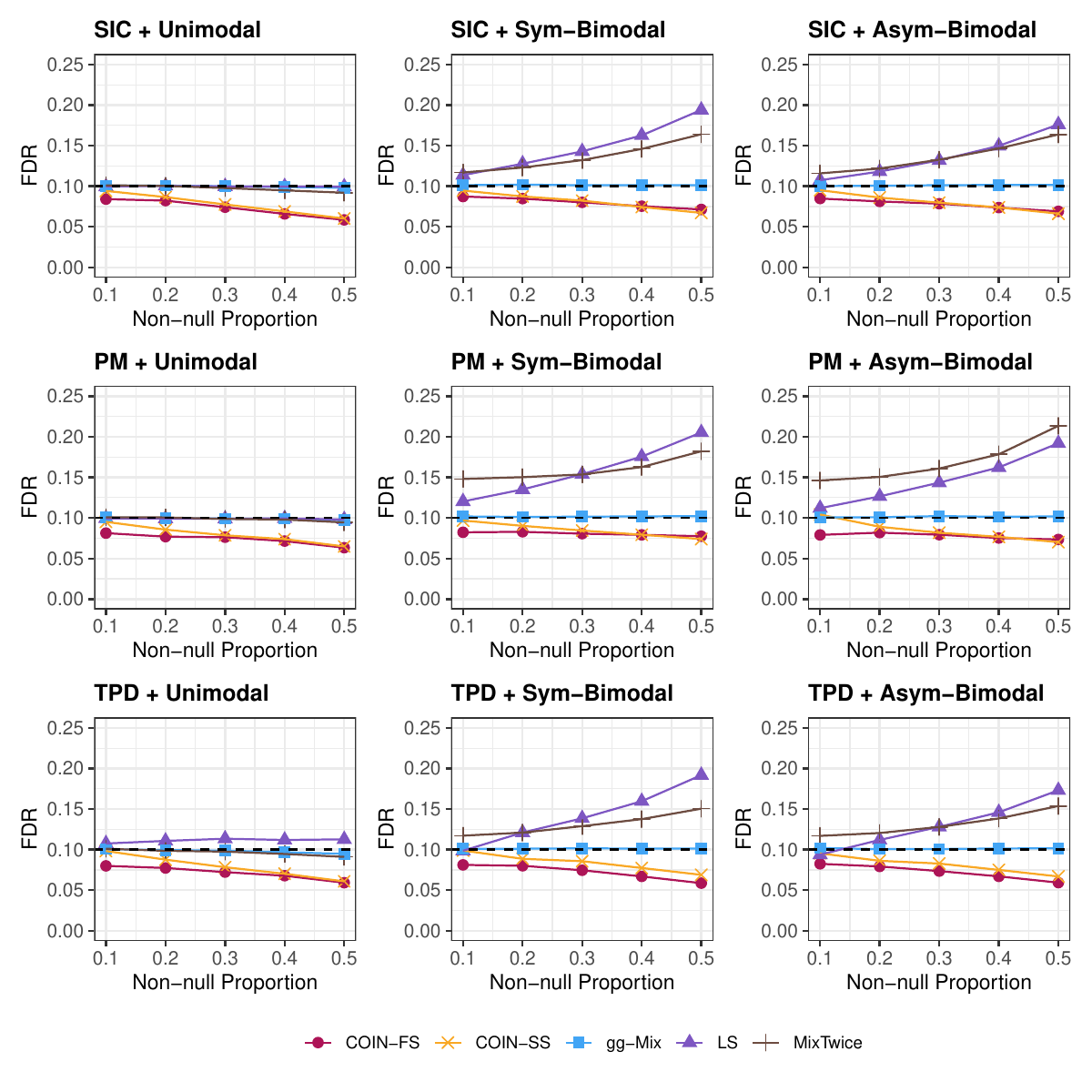}
    \caption{False discovery rates (FDRs) of five multiple testing methods---\textit{LS}, \textit{MixTwice}, \textit{gg-Mix}, \textit{COIN-SS}, and \textit{COIN-FS}---are evaluated under various prior specifications. In all simulation settings, $\mu_i$ and $\sigma_i^2$ are assumed to be independent. Each row corresponds to a different distribution for the variance $\sigma^2$, while each column represents a different distribution for the non-zero effect $\mu$. The x-axis indicates the proportion of non-null hypotheses $\pi$, and each point represents the average over 200 replications. The dashed horizontal line marks the target FDR level of 0.1.}
    \label{fig:sim1_fdr}
\end{figure}

\begin{figure}[!htb]
    \centering
    \includegraphics[width=\textwidth]{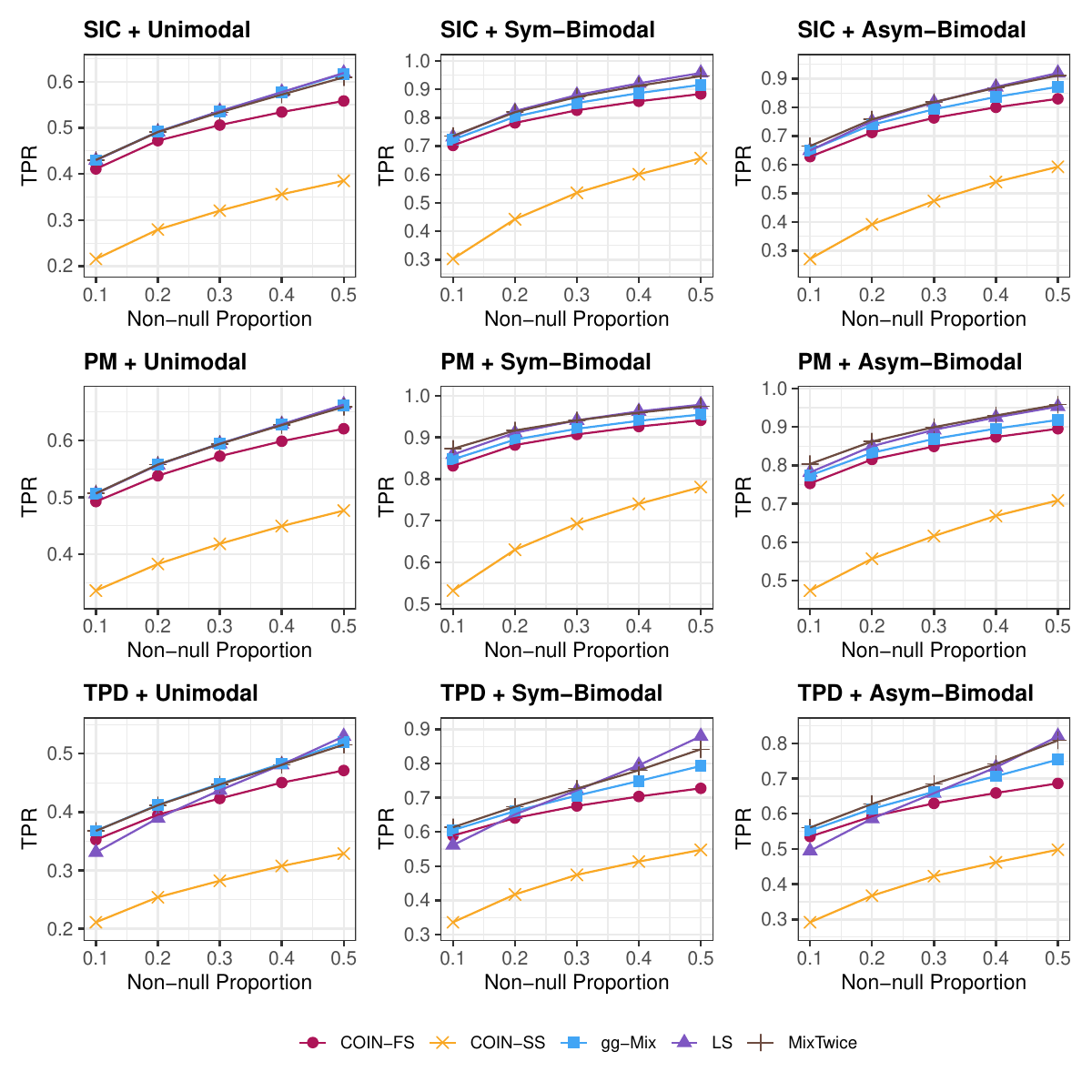}
    \caption{
    True positive rates (TPRs) of the five multiple testing methods. The simulation setup and layout are identical to those in Figure~\ref{fig:sim1_fdr}.
    }
    \label{fig:sim1_tpr}
\end{figure}

\paragraph{Simulation Results under Scenario 2.}
Figure~\ref{fig:sim2_fdr} reports the FDR, while Figure~\ref{fig:sim2_tpr} reports the TPR, across all simulation settings. It is important to note that all five methods are evaluated under prior misspecification, since $\mu_i$ and $\sigma_i^2$ are dependent in the data-generating process, whereas all methods assume independence between them. The following patterns are observed.

First, \textit{LS}, \textit{MixTwice}, and \textit{gg-Mix} often fail to control the FDR, with the extent of inflation varying across simulation settings, underscoring their sensitivity to prior misspecification. In contrast, \textit{COIN-SS} and \textit{COIN-FS} consistently achieve valid FDR control in all settings, demonstrating robustness to violations of prior assumptions. Second, although some existing methods appear to control the FDR under certain misspecified scenarios, such behavior is purely empirical and lacks theoretical justification. By comparison, the proposed methods are accompanied by formal guarantees for FDR control. Finally, \textit{COIN-FS} uniformly dominates \textit{COIN-SS} in terms of power across all simulation settings, as intended.

\begin{figure}[!htb]
    \centering
    \includegraphics[width=\textwidth]{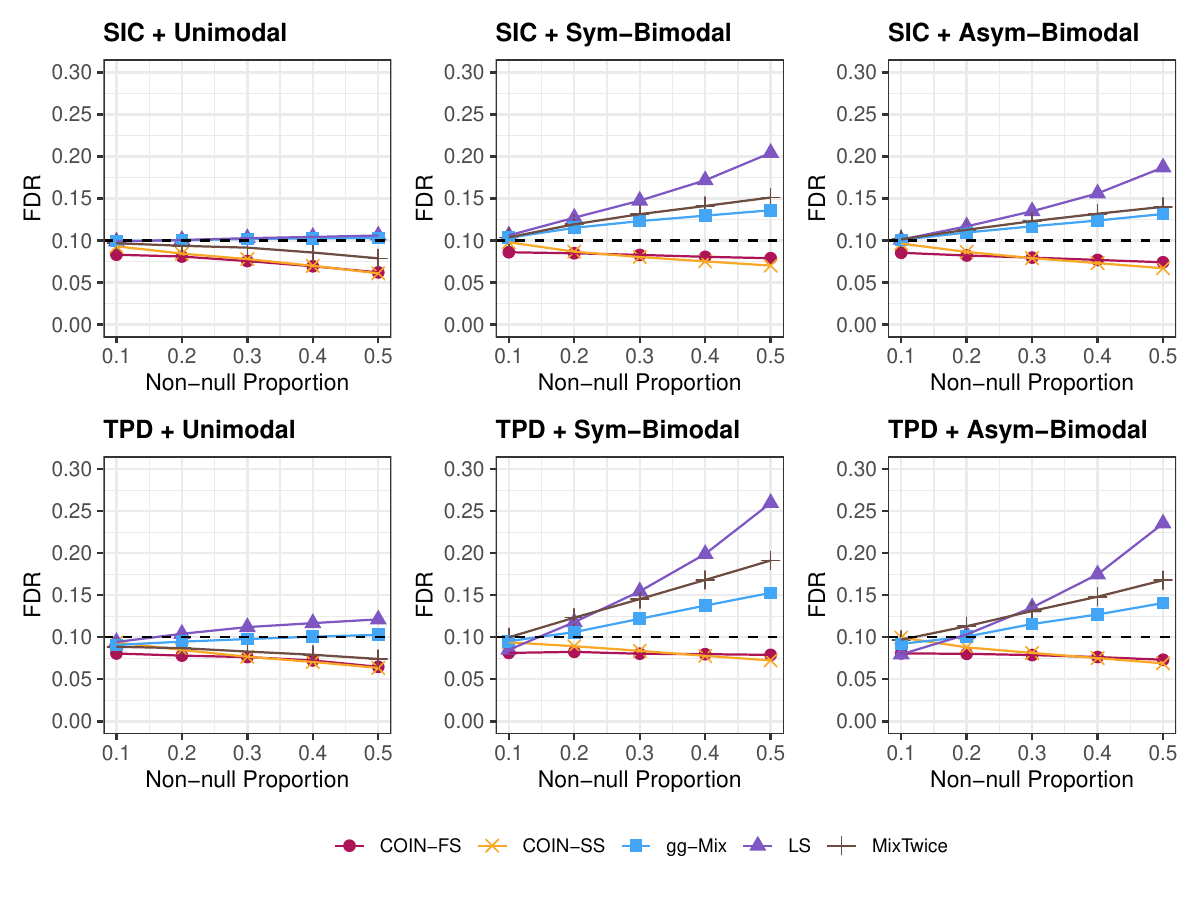}
    \caption{False discovery rates (FDRs) of five multiple testing methods---\textit{LS}, \textit{MixTwice}, \textit{gg-Mix}, \textit{COIN-SS}, and \textit{COIN-FS}---are evaluated under various prior specifications. In all simulation settings, $\mu_i$ and $\sigma_i^2$ are assumed to be dependent. Each row corresponds to a different distribution for the variance $\sigma^2$, while each column represents a different distribution for the non-zero effect $\mu$. The x-axis indicates the proportion of non-null hypotheses $\pi$, and each point represents the average over 200 replications. The dashed horizontal line marks the target FDR level of 0.1.}
    \label{fig:sim2_fdr}
\end{figure}

\begin{figure}[!htb]
    \centering
    \includegraphics[width=\textwidth]{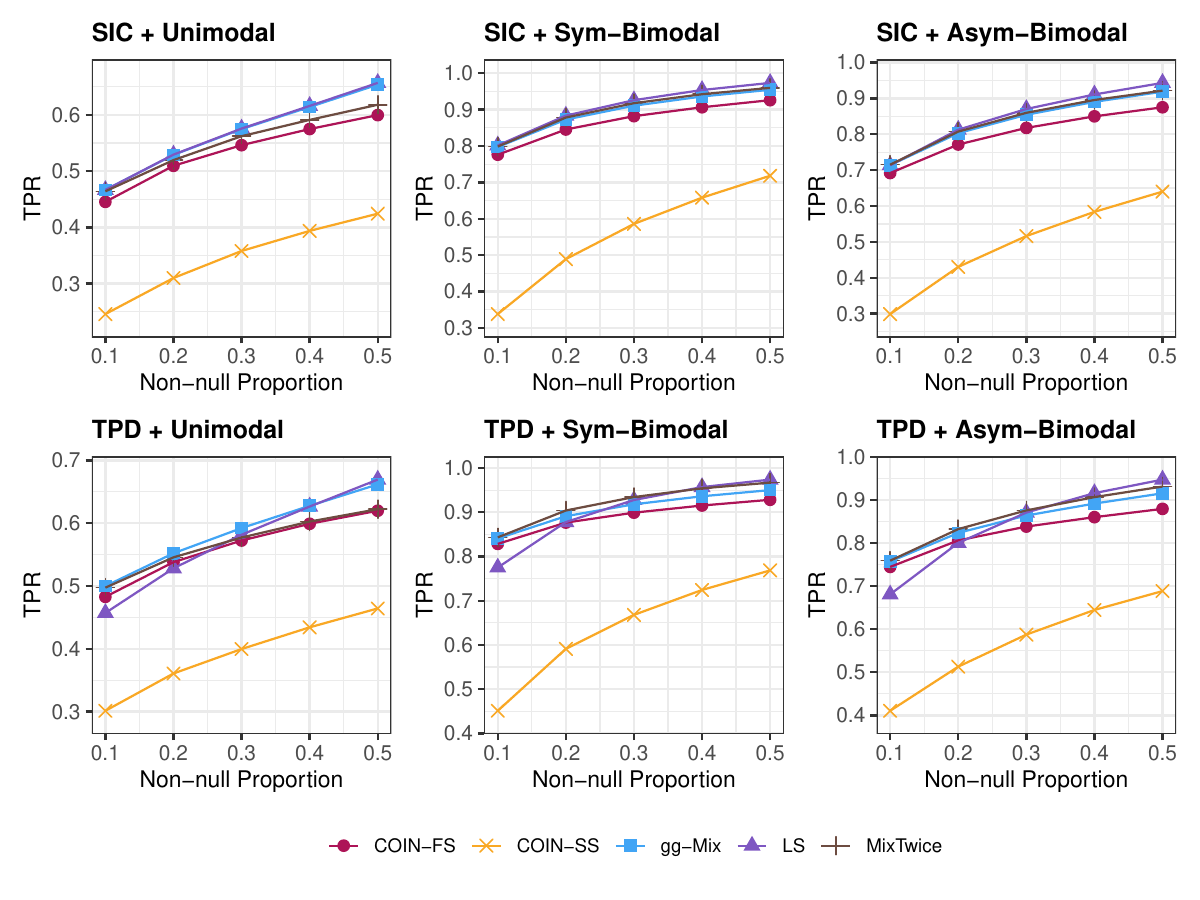}
    \caption{True positive rates (TPRs) of five multiple testing methods. The simulation setup and layout are identical to those in Figure~\ref{fig:sim2_fdr}.
    }
    \label{fig:sim2_tpr}
\end{figure}

\section{Real Data Analysis}\label{sec_6}
We evaluate the proposed methods on three real datasets. In all analyses we compare four procedures, \textit{LS}, \textit{gg-Mix}, \textit{MixTwice}, and \textit{COIN-FS}, at target FDR levels of 0.05 and 0.1. Because only summary-level test statistics are available, \textit{COIN-SS} is excluded from the comparison. We briefly summarize the datasets below; detailed descriptions are provided in Section~\ref{detail_real} of the Supplementary Material.

The first dataset consists of genome-wide DNA methylation measurements at 439,918 CpG sites from immune cell samples \citep{zhang2013genome}. The analysis aims to identify CpG sites exhibiting differential methylation between naive T-cells and antigen-activated naive T-cells. Summary statistics $(X_i, S_i^2)$ are obtained from site-wise linear regression models, with residual degrees of freedom $\nu = 4$.

The second dataset contains gene expression measurements for 48,107 genes from 12 patients with chronic lymphocytic leukemia (CLL), profiled before and after treatment with Ibrutinib \citep{dietrich2018drug}. The objective is to detect differentially expressed genes between the pre- and post-treatment conditions. Summary statistics are constructed from paired differences, with $\nu = 11$.

The third dataset comprises proteomic measurements for 6,763 proteins from 34 breast cancer tumor interstitial fluid samples across multiple experimental batches \citep{terkelsen2021high}. The analysis focuses on identifying proteins differentially expressed between luminal and Her2 subtypes. Site-wise linear regression models are fitted to obtain $(X_i, S_i^2)$, with residual degrees of freedom $\nu = 28$.

Table~\ref{tab:real_data_results} reports the number of discoveries for each method across the three real datasets at target FDR levels of 0.05 and 0.1. Several patterns emerge consistently across datasets and significance levels.
First, substantial differences are observed among \textit{LS}, \textit{MixTwice}, and \textit{gg-Mix}, with \textit{gg-Mix} uniformly producing the largest number of rejections. In light of our simulation results, where \textit{LS} and \textit{MixTwice} perform comparably to \textit{gg-Mix} under correct prior specification, these discrepancies suggest possible prior misspecification and raise concerns regarding the validity of FDR control for the former two methods. 
Second, although \textit{gg-Mix} yields the most discoveries, its validity relies on the independence assumption between $\mu_i$ and $\sigma_i^2$, which is typically unverifiable in practice; accordingly, its findings should be interpreted with caution.
Finally, \textit{COIN-FS} consistently produces the fewest discoveries. Because it maintains valid FDR control under substantially weaker prior assumptions than existing EBNMI methods, these discoveries are expected to be more reliable.

\begin{table}[!htb]
\centering
\small
\caption{Numbers of discoveries for \textit{LS}, \textit{gg-Mix}, \textit{MixTwice}, and \textit{COIN-FS} across the three real data examples at target FDR levels 0.05 and 0.1.}
\label{tab:real_data_results}
\begin{tabular}{lcccc}
\toprule
\textbf{Dataset (FDR Level)} & \textbf{\textit{LS}} & \textbf{\textit{gg-Mix}} & \textbf{\textit{MixTwice}} & \textbf{\textit{COIN-FS}} \\
\midrule
Data Example 1 (0.05) & 1,838 & 4,340 & 1,264 & 918 \\
Data Example 1 (0.1) & 7,594 & 24,583 & 2,423 & 1,524 \\
\midrule
Data Example 2 (0.05) & 105 & 130 & 77 & 0 \\
Data Example 2 (0.1) & 187 & 272 & 135 & 0 \\
\midrule
Data Example 3 (0.05) & 310 & 420 & 300 & 185 \\
Data Example 3 (0.1) & 740 & 1,019 & 738 & 230 \\
\bottomrule
\end{tabular}
\end{table}

\section{Discussion}\label{sec_7}
In this paper, we propose a new class of empirical Bayes normal mean inference methods, termed COIN, that asymptotically control the false discovery rate at the target level, even when the estimated prior distribution differs from the true one. To accommodate settings where an external training dataset is unavailable, we introduce two data-splitting strategies---sample splitting and feature splitting. We establish that both approaches achieve asymptotic FDR control at the target level regardless of such discrepancies. Extensive simulations demonstrate that the proposed methods maintain valid FDR control across a wide range of settings, including under various forms of prior misspecification. We further illustrate the practical utility of our approach through applications to real-world datasets.

A promising direction for future research is to reduce the inherent randomness in sample-splitting and feature-splitting methods. By design, these methods introduce randomness from two main sources: the random partitioning of the data and the stochastic generation of calibration variables. Such randomness can hinder reproducibility, as it may lead to different testing outcomes when the same dataset is analyzed repeatedly. Recent studies have proposed derandomization techniques in related settings. Some focus on mitigating the randomness introduced by sample splitting \citep{wasserman2020universal, dai2023false}, while others aim to reduce the variability arising from the generation of calibration variables \citep{ren2023derandomizing, ren2024derandomised, zhao2025false, zhao2025conformalized}. Extending such approaches to our framework is an important direction for future work.

\section*{Acknowledgement}
The authors declare that there are no conflicts of interest.

\bibliographystyle{apalike}
\bibliography{ref}

\clearpage
\appendix
\numberwithin{equation}{section}
\numberwithin{table}{section}
\numberwithin{figure}{section}

\numberwithin{theorem}{section}
\numberwithin{lemma}{section}
\numberwithin{proposition}{section}
\numberwithin{corollary}{section}
\begin{center}
    {\Large Supplementary Material for ``Conformalized Methods for Empirical Bayes Normal Mean Inference Problem with Heteroscedastic Variances''}
\end{center}

\section{Proofs for Section \ref{sec_3}}\label{proof_sec_3}

\subsection{Auxiliary Lemmas for the Proof of Theorem~\ref{theorem1}}
\begin{lemma}\label{lemma1}
    Suppose the hierarchical model described in \eqref{hierar_model} holds. Then, the conditional density of $X$ given $\mu = 0$ and $S^2 = s^2$ can be expressed as
    \begin{equation*}
    \begin{split}
        h_G(x) 
        \mathrel{\coloneqq}& ~p(x \mid \mu = 0, S^2 = s^2) \\
        =&  ~\dfrac{1}{\sqrt{2\pi}} \dfrac{\Gamma((\nu+1)/2)}{\Gamma(\nu/2)} \left(\dfrac{\nu}{2}\right)^{\frac{\nu}{2}} \left(\dfrac{\nu+1}{2}\right)^{-\frac{\nu+1}{2}} \\ 
        &\quad \times (s^2)^{\frac{\nu}{2}-1} \left(\dfrac{x^2 + \nu s^2}{\nu+1}\right)^{-\frac{\nu-1}{2}} \dfrac{f_G\left(\dfrac{x^2 + \nu s^2}{\nu+1}; \nu+1\right)}{f_G(s^2; \nu)},
    \end{split}
    \end{equation*}
    where $f_G(\cdot; \nu)$ denotes the marginal distribution of $s^2$, i,e., 
    \begin{equation*}
        f_G(s^2 ; \nu) \coloneqq \int p(s^2 \mid \sigma^2) ~dG(\sigma^2).
    \end{equation*}
\end{lemma}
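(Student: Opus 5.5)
The plan is a direct computation: write the numerator of $h_G$ as a product of a normal density and a scaled chi-square density, and recognize the product, as a function of $\sigma^2$, as a constant multiple of a scaled chi-square density with $\nu+1$ degrees of freedom evaluated at $(x^2+\nu s^2)/(\nu+1)$. Under $\mu=0$ we have $p(x\mid\mu=0,\sigma^2)=(2\pi\sigma^2)^{-1/2}\exp\{-x^2/(2\sigma^2)\}$. Since $S^2=(\sigma^2/\nu)W$ with $W\sim\chi^2_\nu$, its density is
\begin{equation*}
p(s^2\mid\sigma^2)=\frac{(\nu/2)^{\nu/2}}{\Gamma(\nu/2)}\,(s^2)^{\nu/2-1}(\sigma^2)^{-\nu/2}\exp\{-\nu s^2/(2\sigma^2)\}.
\end{equation*}
By conditional independence of $X$ and $S^2$ given $(\mu,\sigma^2)$, and because $\mu$ and $\sigma^2$ enter through $\mu\mid\sigma^2$, conditioning on $\mu=0$ does not alter the law of $\sigma^2$. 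The defining ratio in Section~\ref{sub_sec_3.1} is therefore the correct conditional density.

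The key step is to multiply the two densities. This gives
\begin{equation*}
\frac{(\nu/2)^{\nu/2}}{\sqrt{2\pi}\,\Gamma(\nu/2)}(s^2)^{\nu/2-1}(\sigma^2)^{-(\nu+1)/2}\exp\Big\{-\frac{(\nu+1)t}{2\sigma^2}\Big\},\qquad t:=\frac{x^2+\nu s^2}{\nu+1}.
\end{equation*}
Compare this with the scaled chi-square density with $\nu+1$ degrees of freedom at the point $t$:
\begin{equation*}
p_{\nu+1}(t\mid\sigma^2)=\frac{((\nu+1)/2)^{(\nu+1)/2}}{\Gamma((\nu+1)/2)}\,t^{(\nu-1)/2}(\sigma^2)^{-(\nu+1)/2}\exp\{-(\nu+1)t/(2\sigma^2)\}.
\end{equation*}
Solving for the $\sigma^2$-dependent factor, the product equals
\begin{equation*}
\frac{1}{\sqrt{2\pi}}\frac{\Gamma((\nu+1)/2)}{\Gamma(\nu/2)}\Big(\frac{\nu}{2}\Big)^{\nu/2}\Big(\frac{\nu+1}{2}\Big)^{-(\nu+1)/2}(s^2)^{\nu/2-1}\,t^{-(\nu-1)/2}\,p_{\nu+1}(t\mid\sigma^2).
\end{equation*}
The prefactor does not depend on $\sigma^2$. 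Integrating against $dG(\sigma^2)$, which Tonelli permits since everything is nonnegative, turns $p_{\nu+1}(t\mid\sigma^2)$ into $f_G(t;\nu+1)$. Dividing by the denominator $f_G(s^2;\nu)$ then yields exactly the stated formula.

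The main obstacle is simply keeping the normalizing constants and exponents straight, specifically matching the $\sigma^2$ power $-1/2-\nu/2$ with the $\chi^2_{\nu+1}$ kernel and extracting $t^{-(\nu-1)/2}$. I would verify the result by checking that the exponents of $s^2$, $t$ and $\sigma^2$ all match, and by checking the special case where $G$ is a point mass. In that case the ratio should reduce to the $N(0,\sigma^2)$ density, since for fixed $\sigma^2$ the variables $X$ and $S^2$ are independent.
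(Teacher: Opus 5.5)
Your proposal is correct and follows essentially the same route as the paper: multiply the $N(0,\sigma^2)$ and scaled-$\chi^2_\nu$ densities, rewrite the $\sigma^2$-dependent factor as a constant multiple of the scaled-$\chi^2_{\nu+1}$ density at $t=(x^2+\nu s^2)/(\nu+1)$, integrate against $G$ to obtain $f_G(t;\nu+1)$, and divide by $f_G(s^2;\nu)$. Your remark that conditioning on $\mu=0$ leaves $\sigma^2\sim G$ is right; the precise reason is that $\mathbb{P}(\mu=0\mid\sigma^2)=1-\pi$ does not depend on $\sigma^2$, whereas the paper simply takes the ratio as the definition of $h_G$.
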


\begin{proof} 
    By definition, the conditional density $h_G(x)$ is given by
    \begin{equation*}
    \begin{split}
        h_G (x) &= \frac{\int p(x \mid \mu = 0, \sigma^2) \cdot p(s^2 \mid \sigma^2) ~dG(\sigma^2)}{\int p(s^2 \mid \sigma^2)~dG(\sigma^2)}\\
        &= \int p(x \mid \mu = 0, \sigma^2) \cdot p(s^2 \mid \sigma^2) ~dG(\sigma^2) \cdot \frac{1}{f_G(s^2; \nu)}\\
        &= \int \frac{1}{\sqrt{2\pi \sigma^2}} e^{-\frac{1}{2\sigma^2}x^2} \cdot \frac{1}{\Gamma(\nu/2)} \left(\frac{\nu}{2\sigma^2}\right)^{\frac{\nu}{2}} (s^2)^{\frac{\nu}{2}-1} e^{-\frac{\nu}{2\sigma^2}s^2} dG(\sigma^2)\cdot \frac{1}{f_G(s^2; \nu)}\\
        &= \frac{1}{f_G(s^2;\nu)} \dfrac{\left(\frac{\nu}{2}\right)^{\frac{\nu}{2}} (s^2)^{\frac{\nu}{2}-1}}{\sqrt{2\pi} ~\Gamma(\nu/2)} \Gamma((\nu+1)/2) \left(\frac{\nu + 1}{2}\right)^{-\frac{\nu+1}{2}}\left(\frac{x^2 + \nu s^2}{\nu+1}\right)^{-\frac{\nu+1}{2} + 1}\\
        &\quad\quad \times \int \frac{1}{\Gamma((\nu+1)/2)} \left(\frac{\nu + 1}{2\sigma^2}\right)^{\frac{\nu+1}{2}} \left(\frac{x^2 + \nu s^2}{\nu+1}\right)^{\frac{\nu+1}{2} - 1} e^{-\frac{\nu+1}{2\sigma^2} \frac{x^2 + \nu s^2}{\nu + 1}}dG(\sigma^2)\\
        &= \dfrac{1}{\sqrt{2\pi}} \dfrac{\Gamma((\nu+1)/2)}{\Gamma(\nu/2)} \left(\dfrac{\nu}{2}\right)^{\frac{\nu}{2}} \left(\dfrac{\nu+1}{2}\right)^{-\frac{\nu+1}{2}} \\
        &\quad\quad \times (s^2)^{\frac{\nu}{2}-1} \left(\dfrac{x^2 + \nu s^2}{\nu+1}\right)^{-\frac{\nu-1}{2}} \dfrac{f_G\left(\dfrac{x^2 + \nu s^2}{\nu+1}; \nu+1\right)}{f_G(s^2; \nu)}
    \end{split}
    \end{equation*}
\end{proof}

\begin{lemma} \label{lemma2}
    Suppose that $\nu \geq 2$. For any positive constants $ a $ and $ b $, there exists a positive constant $ c(\nu) $, depending only on $\nu$, that satisfies the following inequality:
    \begin{equation*}
    \begin{split}
        &\int_{s^2 \leq b} \int_{|x|>a} |h_G (x)- h_{\hat{G}} (x)| ~dx\, f_G(s^2; \nu) \,ds^2 \\
        &\quad  \leq 2\left\{c(\nu) ~\frac{b}{a}~ ||f_G(\cdot ; \nu+1) -f_{\hat{G}}(\cdot ; \nu+1)||_{L_2} + ||f_G(\cdot ; \nu) -f_{\hat{G}}(\cdot ; \nu)||_{L_1}\right\}.
    \end{split}
    \end{equation*}
\end{lemma}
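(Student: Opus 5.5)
The plan is to start from the closed form of Lemma~\ref{lemma1} and split the difference $h_G-h_{\hat G}$ into a ``numerator'' part and a ``denominator'' part. The numerator part is controlled by $\|f_G(\cdot;\nu+1)-f_{\hat G}(\cdot;\nu+1)\|_{L_2}$ through a change of variables and Cauchy--Schwarz. The denominator part is controlled by $\|f_G(\cdot;\nu)-f_{\hat G}(\cdot;\nu)\|_{L_1}$ because $h_{\hat G}$ integrates to one in $x$.

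Write $w=w(x,s^2)=(x^2+\nu s^2)/(\nu+1)$ and let $C_\nu$ be the constant in Lemma~\ref{lemma1}, so that $h_G(x)=C_\nu (s^2)^{\nu/2-1}w^{-(\nu-1)/2}f_G(w;\nu+1)/f_G(s^2;\nu)$, and likewise for $\hat G$. Adding and subtracting $f_{\hat G}(w;\nu+1)/f_G(s^2;\nu)$ and multiplying by $f_G(s^2;\nu)$ gives the bound
\begin{equation*}
|h_G-h_{\hat G}|\,f_G(s^2;\nu)\le C_\nu (s^2)^{\nu/2-1}w^{-(\nu-1)/2}\,|\Delta(w)| + h_{\hat G}(x)\,|f_{\hat G}(s^2;\nu)-f_G(s^2;\nu)|,
\end{equation*}
where $\Delta=f_G(\cdot;\nu+1)-f_{\hat G}(\cdot;\nu+1)$. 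For the second term, $\int h_{\hat G}(x)\,dx= 1$, since $h_{\hat G}$ is a conditional density. Integrating over $x$ and then over all $s^2$ therefore yields at most $\|f_G(\cdot;\nu)-f_{\hat G}(\cdot;\nu)\|_{L_1}$, which is the second summand of the claimed bound; the factor $2$ leaves slack.

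For the first term, by symmetry it suffices to treat $x>a$ and double the result. There we substitute $x\mapsto w$: $x=\sqrt{(\nu+1)w-\nu s^2}$ and $dx=\frac{\nu+1}{2x}\,dw$. The range $x>a$ becomes $w>w_0:=(a^2+\nu s^2)/(\nu+1)$. For fixed $s^2$ the inner integral becomes
\begin{equation*}
\frac{\nu+1}{2}\int_{w_0}^\infty \frac{w^{-(\nu-1)/2}}{x(w)}\,|\Delta(w)|\,dw
\le \frac{\nu+1}{2}\,\|\Delta\|_{L_2}\left(\int_{w_0}^\infty \frac{w^{-(\nu-1)}}{(\nu+1)w-\nu s^2}\,dw\right)^{1/2}
\end{equation*}
by Cauchy--Schwarz. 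The key elementary observation is that $((\nu+1)w-\nu s^2)/((\nu+1)w)$ is increasing in $w$ and equals $a^2/(a^2+\nu s^2)$ at $w=w_0$. Hence the remaining integral is at most
\begin{equation*}
\frac{a^2+\nu s^2}{(\nu+1)a^2}\int_{w_0}^\infty w^{-\nu}\,dw=\frac{(a^2+\nu s^2)\,w_0^{-(\nu-1)}}{(\nu+1)(\nu-1)a^2}\le c_\nu\,\frac{(a^2+\nu s^2)^{-(\nu-2)}}{a^2}\le c_\nu'\,\frac{(s^2)^{-(\nu-2)}}{a^2}.
\end{equation*}
The convergence of $\int w^{-\nu}$ and the last step (dropping $a^2$ from a nonpositive power) are exactly where $\nu\ge 2$ is used.

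Taking square roots gives the factor $(s^2)^{-(\nu-2)/2}/a$. This cancels the prefactor $(s^2)^{\nu/2-1}$ exactly, leaving a bound $c(\nu)\|\Delta\|_{L_2}/a$ that is uniform in $s^2$. Integrating over $s^2\in[0,b]$ then produces $c(\nu)\,(b/a)\,\|\Delta\|_{L_2}$, which is the first summand. The main obstacle I anticipate is this kernel estimate. A naive bound such as $1/x\le 1/a$ followed by Cauchy--Schwarz fails at $\nu=2$, because $\int^\infty w^{-(\nu-1)}\,dw$ diverges there. Keeping the $1/x$ factor inside the $L_2$ pairing, and using the monotone-ratio trick to trade it for $w^{-1}$ at the cost of $(a^2+\nu s^2)/a^2$, is what makes the $s^2$-powers cancel and yields the linear dependence on $b/a$.
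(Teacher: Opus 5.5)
Your proof is correct and follows the paper's route. You split $|h_G-h_{\hat G}|\,f_G$ into the numerator difference $|h_Gf_G-h_{\hat G}f_{\hat G}|$, which you handle with the substitution $w=(x^2+\nu s^2)/(\nu+1)$, Cauchy--Schwarz, and the kernel bound $c_\nu\,a^{-2}(a^2+\nu s^2)^{-(\nu-2)}$ that cancels $(s^2)^{\nu/2-1}$. The remaining denominator term is controlled by $\|f_G(\cdot;\nu)-f_{\hat G}(\cdot;\nu)\|_{L_1}$ because the conditional densities integrate to one.

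The differences are cosmetic. You add and subtract directly rather than passing through the paper's midpoint $f_G^*=(f_G+f_{\hat G})/2$, so you do not need the factor $2$. You also justify the kernel-integral estimate explicitly via the monotone-ratio argument, where the paper simply asserts it.
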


\begin{proof}
    Throughout this proof, we simplify notation by writing $ f_G $ and $ h_G $ (resp. $f_{\hat{G}}$ and $h_{\hat{G}}$) instead of $ f_G(s^2; \nu) $ and $ h_G(x) $ (resp. $f_{\hat{G}}(s^2; \nu)$ and $h_{\hat{G}}(x)$), respectively. 
    
    Define $f_G^* \coloneqq \frac{f_G + f_{\hat{G}}}{2}$. Then, we have
    \begin{equation*}
    \begin{split}
        &\int_{s^2 \leq b}\int_{|x|>a} |h_G - h_{\hat{G}}|\,dx f_G \,ds^2 \\
        &\quad= \int_{s^2 \leq b}\int_{|x|>a}
        \left|\frac{h_G f_G}{f_G} - \frac{h_G f_G}{f_G^*}
        + \frac{h_G f_G}{f_G^*} - \frac{h_{\hat{G}} f_{\hat{G}}}{f_G^*}
        + \frac{h_{\hat{G}} f_{\hat{G}}}{f_G^*} - \frac{h_{\hat{G}} f_{\hat{G}}}{f_{\hat{G}}}\right|\,dx\,f_G\,ds^2.\\
        &\quad\overset{(\star)}{\leq}  \int_{s^2 \leq b}\int_{|x|>a}
        \frac{|h_Gf_G - h_{\hat{G}}f_{\hat{G}}|}{f_G^*} 
        + h_G \left|1-\frac{f_G}{f_G^*}\right|
        + h_{\hat{G}} \left|1-\frac{f_{\hat{G}}}{f_G^*}\right| \,dx\, f_G \,ds^2.\\
        &\quad\overset{(\star\star)}{\leq}  \int_{s^2 \leq b}\int_{|x|>a}
        2|h_Gf_G - h_{\hat{G}}f_{\hat{G}}| + |f_G - f_{\hat{G}}|(h_G + h_{\hat{G}}) \,dx \,ds^2.\\
        &\quad\overset{(\star\star\star)}{\leq} 2 \int_{s^2 \leq b}\underbrace{\int_{|x|>a} |h_Gf_G - h_{\hat{G}}f_{\hat{G}}| \, dx}_{\eqqcolon  A} \, ds^2 + 2 ||f_G -f_{\hat{G}}||_{L_1}
    \end{split}
    \end{equation*}
    The first inequality (marked by $(\star)$) follows from the triangle inequality; the second inequality (marked by $(\star\star)$) follows from the definition of $f_G^*$ and the fact that $f_G^* \geq f_G/2$; and the last inequality (marked by $(\star\star\star)$) follows from the fact that $h_G$ and $h_{\hat{G}}$ are densities.

    Now, let consider the term A. Using Lemma~\ref{lemma1}, change-of-variable technique, Hölder's inequality, and $\nu \geq 2$, we obtain
    \begin{equation*}
    \begin{split}
        A 
        &= \int_{|x|>a} C(\nu) (s^2)^{\frac{\nu}{2} -1} \left(\frac{x^2 + \nu s^2}{\nu+1}\right)^{-\frac{\nu-1}{2}} \\
        &\quad\quad \times\left|f_G\left(\frac{x^2 + \nu s^2}{\nu+1};\nu+1\right) - f_{\hat{G}}\left(\frac{x^2 + \nu s^2}{\nu+1}; \nu+1\right)\right|\,dx\\
        &= C^\prime(\nu) (s^2)^{\frac{\nu}{2} -1}  \int \dfrac{(t^2)^{-\frac{\nu-1}{2}}}{\sqrt{(\nu+1)t^2 - \nu s^2}} \mathbb{I}\left(t^2 \geq \frac{a^2 + \nu s^2}{\nu+1}\right) \\
        &\quad\quad \times|f_G(t^2; \nu+1) - f_{\hat{G}}(t^2; \nu+1)|\, dt^2\\
        &\leq  C^\prime(\nu) (s^2)^{\frac{\nu}{2} -1} \left[\int \dfrac{(t^2)^{-(\nu-1)}}{(\nu+1)t^2 - \nu s^2} \mathbb{I}\left(t^2 \geq \frac{a^2 + \nu s^2}{\nu+1}\right)\, dt^2\right]^{\frac{1}{2}}\\
        &\quad\quad \times \left[\int |f_G(t^2; \nu+1) - f_{\hat{G}}(t^2; \nu+1)|^2\, dt^2\right]^{\frac{1}{2}}\\
        &\leq  C^{\prime\prime}(\nu) (s^2)^{\frac{\nu}{2} -1}  \left[ \frac{1}{a^2(a^2 + \nu s^2)^{\nu-2}}\right]^{\frac{1}{2}} \left[\int |f_G(t^2; \nu+1) - f_{\hat{G}}(t^2; \nu+1)|^2\, dt^2\right]^{\frac{1}{2}}\\
        &\leq C^{\prime\prime}(\nu) ~\dfrac{1}{a}~  ||f_G(t^2; \nu+1) - f_{\hat{G}}(t^2; \nu+1)||_{L_2},
    \end{split}
    \end{equation*}
    where $ C(\nu) $, $ C'(\nu) $, and $ C''(\nu) $ are positive constants depending only on $ \nu $. It is important to note that the last term does not depends on the value of $s^2$. By combining all the results established above, we obtain the desired result.
\end{proof}

\begin{lemma}\label{lemma3}
    Suppose that Assumption~\ref{assum_1} holds and $\nu \geq 2$. Given the dataset $\mathcal{D}_2$ for $m^{\mathcal{D}_2} \geq e$,
    \begin{equation*}
    \begin{split}
        &\int \int |h_G(x) - h_{\hat{G}}(x)|~dx f_G(s^2; \nu)\,ds^2 \\
        &\quad= 2 \mathbb{I}\left(H(f_G(\cdot; \nu), f_{\hat{G}}(\cdot ; \nu)) \geq C \frac{\log{m^{\mathcal{D}_2}}}{\sqrt{m^{\mathcal{D}_2}}}\right) + O\left(\left(\frac{(\log{m^{\mathcal{D}_2}})^5}{m^{\mathcal{D}_2}}\right)^{\frac{1}{4}}\right),
    \end{split}
    \end{equation*}
    where $H(f, g)$ denotes the Hellinger distance between $f$ and $g$ defined as 
    \begin{equation*}
        H(f, g) = \left(\frac{1}{2} \int \left(\sqrt{f(x)} - \sqrt{g(x)}\right)^2\;dx\right)^{\frac{1}{2}},
    \end{equation*}
    and the constant $C = C(\nu, \underaccent{\bar}{L}, \bar{U})$ is defined as in Theorem 9 of \cite{ignatiadis2025empirical}.
\end{lemma}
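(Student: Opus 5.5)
The plan is to split the integral according to whether the NPMLE is good or bad, with the threshold set by the Hellinger event in the statement.

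\emph{Bad event.} On $\{H(f_G(\cdot;\nu),f_{\hat G}(\cdot;\nu))\ge C\log m^{\mathcal{D}_2}/\sqrt{m^{\mathcal{D}_2}}\}$ we use the trivial bound. For each $s^2$, both $h_G$ and $h_{\hat G}$ are densities in $x$, so $\int|h_G-h_{\hat G}|\,dx\le 2$. Integrating against the density $f_G(s^2;\nu)$ then gives at most $2$, which produces the indicator term. (The statement is read as an upper bound, with the $=$ meaning ``at most''.)

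\emph{Good event.} On the complement we decompose the $(x,s^2)$ domain with cutoffs $a,b>0$ to be tuned. The integral splits as
\[
\int_{s^2\le b}\int_{|x|>a}(\cdots)\;+\;\int_{s^2\le b}\int_{|x|\le a}(\cdots)\;+\;\int_{s^2> b}\int(\cdots).
\]
We bound each piece separately.
\begin{itemize}
\item The first piece is handled by Lemma~\ref{lemma2}. It is at most $2c(\nu)(b/a)\|f_G(\cdot;\nu+1)-f_{\hat G}(\cdot;\nu+1)\|_{L_2}+2\|f_G(\cdot;\nu)-f_{\hat G}(\cdot;\nu)\|_{L_1}$.
\item For the second piece we bound $h_G$ and $h_{\hat G}$ pointwise. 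By Lemma~\ref{lemma1} and Assumption~\ref{assum_1}, both mixtures are supported on $[\underaccent{\bar}{L},\bar U]$, so $h_G$ and $h_{\hat G}$ are normal scale mixtures with variance in $[\underaccent{\bar}{L},\bar U]$ after reweighting. Hence $h\le (2\pi\underaccent{\bar}{L})^{-1/2}$, and the piece is $O(a)$.
\item For the third piece we again use the bound $2$ pointwise in $s^2$, giving at most $2P_G(S^2>b)$. Since $\sigma^2\le U$ and $S^2\sim(\sigma^2/\nu)\chi^2_\nu$, this tail decays exponentially in $b$. Taking $b\asymp\log m^{\mathcal{D}_2}$ makes it $O(1/m^{\mathcal{D}_2})$.
\end{itemize}

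Next we convert Hellinger control into the norms used above. On the good event, $\|f_G(\cdot;\nu)-f_{\hat G}(\cdot;\nu)\|_{L_1}\le 2\sqrt2\,H\lesssim \log m/\sqrt m$. For the $\nu+1$ marginals we need an $L_2$ bound. I would try the following chain.
\begin{itemize}
\item First relate $H$ for the $\nu+1$ marginals to $H$ for the $\nu$ marginals. I expect this is provided by Theorem~9 of \citet{ignatiadis2025empirical} or its proof, since the $(\nu+1)$ marginal is a smooth functional of $G$ via the same mixture argument.
\item Then apply $\|f-g\|_{L_2}^2\le \|f-g\|_{L_1}\,\|f-g\|_\infty$. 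The sup norm is bounded by a constant because the chi-square mixture densities are uniformly bounded when $\sigma^2\in[\underaccent{\bar}{L},\bar U]$ and $\nu\ge2$.
\end{itemize}
This gives $\|f_G(\cdot;\nu+1)-f_{\hat G}(\cdot;\nu+1)\|_{L_2}\lesssim(\log m/\sqrt m)^{1/2}$.

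Collecting terms, the total on the good event is
\[
\lesssim \frac{\log m}{a}\Big(\frac{\log m}{\sqrt m}\Big)^{1/2}+a+\frac{\log m}{\sqrt m}.
\]
Optimizing at $a^2\asymp(\log m)^{3/2}m^{-1/4}$ gives $a\asymp(\log m)^{3/4}m^{-1/8}$. This is slower than the claimed rate, so the plan needs sharpening.

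The sharper route is to bound the $L_2$ term directly by a Hellinger distance without losing a square root. Since the densities are bounded, $\|f-g\|_{L_2}^2\le\int(\sqrt f-\sqrt g)^2(\sqrt f+\sqrt g)^2\lesssim H^2$, so the $L_2$ term is $\lesssim H\lesssim\log m/\sqrt m$. The first piece becomes $(\log m)^2/(a\sqrt m)$, and the optimal choice is $a^2\asymp(\log m)^2/\sqrt m$. This yields $a\asymp\log m\cdot m^{-1/4}=((\log m)^4/m)^{1/4}$, which lies within the claimed $((\log m)^5/m)^{1/4}$ after accounting for the extra log factors coming from $b$ and from the $\nu\to\nu+1$ transfer.

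The main obstacle is that transfer: showing the Hellinger (or $L_2$) distance between the $(\nu+1)$-marginals is controlled by the Hellinger distance between the $\nu$-marginals. Theorem~9 of \citet{ignatiadis2025empirical} only controls the latter. My first attempt would be to show the map $G\mapsto f_G(\cdot;\nu+1)$ is Lipschitz relative to $G\mapsto f_G(\cdot;\nu)$ on the compact support, via moment matching or a Fourier/Mellin argument in the style of that paper, accepting an extra $\sqrt{\log m}$ factor. That factor is exactly what turns the exponent $4$ into $5$.
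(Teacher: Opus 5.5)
\textbf{Where your plan stands.} Your decomposition is the paper's. The bad event gives the indicator term via the trivial bound $2$. On the good event you use the same three pieces, $\{s^2>b\}$, $\{s^2\le b,|x|\le a\}$ and $\{s^2\le b,|x|>a\}$, with Lemma~\ref{lemma2} on the last, $b\asymp\log m$ (writing $m=m^{\mathcal{D}_2}$), and $a$ tuned. Your elementary handling of the first two pieces is correct: the chi-square tail, and the bound $h_G,h_{\hat G}\le(2\pi\underaccent{\bar}{L})^{-1/2}$ because both are normal scale mixtures over $[\underaccent{\bar}{L},\bar U]$. So is $\|f-g\|_{L_1}\le 2\sqrt2\,H$.

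\textbf{The gap.} Your own rate bookkeeping shows you need
\[
\|f_G(\cdot;\nu+1)-f_{\hat G}(\cdot;\nu+1)\|_{L_2}\,\mathbb{I}(E^c)=O\bigl((\log m)^{3/2}/\sqrt{m}\bigr),\qquad E=\bigl\{H(f_G(\cdot;\nu),f_{\hat G}(\cdot;\nu))\ge C\log m/\sqrt m\bigr\},
\]
and you do not prove it.
\begin{itemize}
\item Theorem~9 of \citet{ignatiadis2025empirical} does not supply it. It controls only the Hellinger distance between the $\nu$-marginals, since the NPMLE is fitted to the $S_{i,2}^2$ alone.
\item A Lipschitz transfer should not be expected. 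With $\lambda=1/(2\sigma^2)$, $f_G(s^2;\nu)$ is, up to a power of $s^2$, the Laplace transform of $\lambda^{\nu/2}dG$. By contrast, $f_G(\cdot;\nu+1)$ is the Laplace transform of $\lambda^{1/2}\cdot\lambda^{\nu/2}dG$, i.e.\ a half-order derivative in the Laplace variable.
\item Passing from one to the other is ill-posed. At best you get a bound with logarithmic loss, and that bound must actually be proved, for instance by a frequency-truncation argument exploiting the common compact support of $G$ and $\hat G$.
\item Your extra $\sqrt{\log m}$ is chosen to reproduce the exponent $5$; it is not derived.
\end{itemize}

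\textbf{Why this step is the heart of the lemma.} Under the null, $(X^2+\nu S^2)/(\nu+1)$ is sufficient for $\sigma^2$. Hence the $L_1$ distance between the joint null densities $h_Gf_G$ and $h_{\hat G}f_{\hat G}$ equals $\|f_G(\cdot;\nu+1)-f_{\hat G}(\cdot;\nu+1)\|_{L_1}$. Writing $|h_G-h_{\hat G}|f_G\le|h_Gf_G-h_{\hat G}f_{\hat G}|+h_{\hat G}|f_G-f_{\hat G}|$, the target integral is at most that quantity plus $\|f_G(\cdot;\nu)-f_{\hat G}(\cdot;\nu)\|_{L_1}$. Everything therefore rests on the $\nu\to\nu+1$ transfer.

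\textbf{How the paper closes it.} The paper does not derive the transfer either. After the same decomposition it sets $a=((\log m)^5/m)^{1/4}$ and $b=4U\log m$. It then imports the needed bounds from Lemma~12, Lemma~S1 and Lemma~S6 of \citet{ignatiadis2025empirical}. These include $c(\nu)(b/a)\|f_G(\cdot;\nu+1)-f_{\hat G}(\cdot;\nu+1)\|_{L_2}\mathbb{I}(E^c)=O(a)$, which is exactly the displayed estimate. To complete your argument, cite that result, or prove an analogous log-loss transfer inequality, rather than appealing to Theorem~9.
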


\begin{proof}
    Throughout this proof, we assume that the training dataset $\mathcal{D}_2$ is given, which implies that the estimated prior distribution $\hat{G}$ is a deterministic function.
    
    We begin by introducing the following event:    
    \begin{equation*}
    E = \left\{ H(f_G(\cdot; \nu), f_{\hat{G}}(\cdot; \nu)) \geq C \frac{\log{m^{\mathcal{D}_2}}}{\sqrt{m^{\mathcal{D}_2}}} \right\}.
    \end{equation*}
    Using this event, we decompose the integral as follows:
    \begin{equation*}
    \begin{split}
        &\int \int |h_G(x) - h_{\hat{G}}(x)|\,dx\, f_G(s^2;\nu)\, ds^2 \\
        &\quad\leq \int \int |h_G(x) - h_{\hat{G}}(x)|\,dx\, f_G(s^2;\nu)\, ds^2 \cdot \mathbb{I}(E)\\
        &\quad\quad\quad + \int_{s^2 > b} \int |h_G(x) - h_{\hat{G}}(x)|\,dx\, f_G(s^2;\nu)\, ds^2 \cdot \mathbb{I}(E^c) \\
        &\quad\quad\quad + \int_{s^2 \leq b} \int_{|x| \leq a} |h_G(x) - h_{\hat{G}}(x)|\,dx\, f_G(s^2;\nu)\, ds^2 \cdot \mathbb{I}(E^c)\\
        &\quad\quad\quad + \int_{s^2 \leq b} \int_{|x| > a} |h_G(x) - h_{\hat{G}}(x)|\,dx\, f_G(s^2;\nu)\, ds^2 \cdot \mathbb{I}(E^c)\\
        &\quad\overset{(\star)}{\leq} 2 \left\{ \mathbb{I}(E) + \mathbb{P}(S^2 > b) + \underset{|x| \leq a}{\max} |h_G(x) - h_{\hat{G}}(x)| \,a \right. \\
        &\quad\quad\quad + \left. c(\nu) \frac{b}{a} \, \| f_G(\cdot ; \nu+1) - f_{\hat{G}}(\cdot ; \nu+1) \|_{L_2} \cdot \mathbb{I}(E^c) + \| f_G(\cdot ; \nu) - f_{\hat{G}}(\cdot ; \nu) \|_{L_1} \cdot \mathbb{I}(E^c) \right\}.
    \end{split}
    \end{equation*}
    The last inequality (marked by $(\star)$) follows from the fact that $ h_G $, $ h_{\hat{G}} $, and $f_G$ are density functions, as well as Lemma \ref{lemma2}.

    We choose the following values for $a$ and $b$:
    \begin{equation*}
        a = \left(\frac{(\log{m^{\mathcal{D}_2}})^5}{m^{\mathcal{D}_2}}\right)^{\frac{1}{4}}, \quad b = 4U\left|\log{\left(\frac{1}{m^{\mathcal{D}_2}}\right)}\right|.
    \end{equation*}
    Then, by using Lemma 12, Lemma S1 and Lemma S6 in \cite{ignatiadis2025empirical}, we establish the following bounds: For all $m^{\mathcal{D}_2} \geq e,$
    \begin{enumerate}[label=\roman*.]
        \item $ \mathbb{P}(S^2 > b) \leq \frac{1}{m^{\mathcal{D}_2}}$ 
        \item $ \underset{|x| \leq a}{\max} ~|h_G(x) - h_{\hat{G}}(x)|~ a = O\left(\left(\frac{(\log{m^{\mathcal{D}_2}})^5}{m^{\mathcal{D}_2}}\right)^{\frac{1}{4}} \right)$.
        \item $ c(\nu) ~\dfrac{b}{a}~ \| f_G(\cdot ; \nu+1) - f_{\hat{G}}(\cdot ; \nu+1) \|_{L_2} \cdot \mathbb{I}(E^c) = O\left(\left(\frac{(\log{m^{\mathcal{D}_2}})^5}{m^{\mathcal{D}_2}}\right)^{\frac{1}{4}}\right) $.
        \item $ \| f_G(\cdot ; \nu) - f_{\hat{G}}(\cdot ; \nu) \|_{L_1} \cdot \mathbb{I}(E^c) = O\left(\frac{\log{m^{\mathcal{D}_2}}}{\sqrt{m^{\mathcal{D}_2}}}\right)$.
    \end{enumerate}
    By combining all the results established above, we obtain the desired result.
\end{proof}

\begin{lemma}\label{lemma4}
    Assume that there are no ties between $ u_i $ and $ \tilde{u}_i $ for all $ i \in \mathcal{H}^{\mathcal{D}_1} $. Given the dataset $\mathcal{D}_2$, for any Borel measurable set $E$ with positive measure and $i \in \mathcal{H}_0^{\mathcal{D}_1}$, we have
    \begin{equation*}
    \begin{split}
        &\left|\left\{\mathbb{P}(\text{sgn}(u_i - \tilde{u}_i) = 1 \mid u_i\wedge \tilde{u}_i \in E) - \frac{1}{2}\right\} \cdot \mathbb{P}(u_i \wedge \tilde{u}_i \in E) \right| \\
        &\quad\leq 2\int\int |h_G(x) - h_{\hat{G}}(x)|\,dx\, f_G(s^2; \nu)\,ds^2 
    \end{split}
    \end{equation*}
\end{lemma}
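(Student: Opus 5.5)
We condition on $\mathcal{D}_2$ throughout, so $\hat{G}$ and $\hat{u}$ are fixed deterministic functions, and we work with a single null index $i \in \mathcal{H}_0^{\mathcal{D}_1}$. Given $S_{i,1}^2 = s^2$, the test statistic $X_{i,1}$ has density $h_G(\cdot)$ from Lemma~\ref{lemma1}, and the calibration variable $\tilde{X}_{i,1}$ independently has density $h_{\hat{G}}(\cdot)$. The marginal density of $S_{i,1}^2$ is $f_G(\cdot;\nu)$. The key comparison is with the oracle pair $(X_{i,1}, \tilde{X}^*_{i,1})$, in which both coordinates are i.i.d.\ from $h_G$ given $s^2$. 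For that pair the law is symmetric under swapping, so
\begin{equation*}
\mathbb{P}^*(u_i > \tilde{u}_i,\ u_i\wedge\tilde{u}_i\in E) = \mathbb{P}^*(u_i < \tilde{u}_i,\ u_i\wedge\tilde{u}_i\in E).
\end{equation*}
The no-ties assumption then gives $\mathbb{P}^*(\text{sgn}(u_i-\tilde{u}_i)=1,\ u_i\wedge\tilde{u}_i\in E) = \tfrac12 \mathbb{P}^*(u_i\wedge\tilde{u}_i\in E)$.

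Multiplying out the left-hand side of the claim gives $\bigl|\mathbb{P}(\text{sgn}(u_i-\tilde{u}_i)=1,\ u_i\wedge\tilde{u}_i\in E) - \tfrac12\mathbb{P}(u_i\wedge\tilde{u}_i\in E)\bigr|$, which is well defined because $E$ has positive measure. Subtracting the oracle identity, we bound this by
\begin{equation*}
\bigl|\mathbb{P}(A) - \mathbb{P}^*(A)\bigr| + \tfrac12\bigl|\mathbb{P}(B)-\mathbb{P}^*(B)\bigr|,
\end{equation*}
with $A=\{\text{sgn}(u_i-\tilde{u}_i)=1,\ u_i\wedge\tilde{u}_i\in E\}$ and $B=\{u_i\wedge\tilde{u}_i\in E\}$. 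Both events are measurable functions of the triple $(X_{i,1},\tilde{X}_{i,1},S_{i,1}^2)$, since $\hat{u}$ is fixed. So each difference is at most the total variation distance between the actual joint law and the oracle joint law.

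These two joint laws share the marginal of $S^2$ and the conditional law of $X$ given $S^2$; they differ only in the conditional density of the second coordinate, $h_{\hat{G}}$ versus $h_G$. Hence
\begin{equation*}
\sup_{A}|\mathbb{P}(A)-\mathbb{P}^*(A)| \le \int\!\!\int h_G(x)\,|h_{\hat G}(y)-h_G(y)|\,dx\,dy\, f_G(s^2;\nu)\,ds^2 = \int\!\!\int |h_G(y)-h_{\hat{G}}(y)|\,dy\, f_G(s^2;\nu)\,ds^2.
\end{equation*}
Strictly, total variation is half of this $L_1$ integral, so the bound has slack. Summing the two contributions gives at most $\tfrac32$ times the integral, which is below the stated factor $2$.

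The only delicate point is that $\text{sgn}$ and the no-ties convention must behave identically under both laws. Ties have probability zero under the actual law by assumption. Under the oracle law they need not vanish, but ties lie outside $A$ and contribute equally to nothing, so the swap argument gives $\mathbb{P}^*(A) \le \tfrac12\mathbb{P}^*(B)$, with the remainder coming only from ties. To keep things clean, I would instead apply the swap symmetry to the events $\{u_i>\tilde u_i\}\cap B$ and $\{u_i<\tilde u_i\}\cap B$ under the oracle law. I would then transfer each to the actual law by the same total-variation bound, and use the absence of ties under the actual law to write $\mathbb{P}(B)=\mathbb{P}(u_i>\tilde u_i,B)+\mathbb{P}(u_i<\tilde u_i,B)$. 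This yields $|\mathbb{P}(A)-\tfrac12\mathbb{P}(B)| \le \tfrac12(\mathrm{TV}+\mathrm{TV})\le$ the integral, comfortably within the constant $2$.
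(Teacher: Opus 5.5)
Your proposal is correct and follows essentially the paper's route: compare with the oracle calibration score $\tilde u_i^*$, use swap symmetry to get the $\tfrac12$, apply the triangle inequality, and bound each probability difference by integrating $|h_G - h_{\hat G}|$ against $h_G(y)\, f_G(s^2;\nu)$. The differences are minor refinements. You bound $|\mathbb{P}(B)-\mathbb{P}^*(B)|$ directly by total variation rather than splitting $B$ into the two sign events, which gives a constant of at most $\tfrac32$ instead of the paper's $\tfrac32+\tfrac12=2$. Your final swap-symmetry version also dispenses with the paper's extra assumption that there are no ties between $u_i$ and $\tilde u_i^*$.
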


\begin{proof}
    Throughout this proof, we assume that the dataset $\mathcal{D}_2$ is given, which implies that both the estimated prior $\hat{G}$ and the estimated conformity score function $\hat{u}(\cdot, \cdot)$ are deterministic.

    For each $i \in \mathcal{H}^{\mathcal{D}_1}$, let $\tilde{X}_{i,1}^*$ denote the oracle calibration variable, which satisfies the conditional exchangeability property stated in \eqref{cond_exch}. Based on this oracle calibration variable, we define the corresponding conformity score:
    \begin{equation*}
        \tilde{u}_i^* = \hat{u}(\tilde{X}_{i,1}^*, S_{i,1}^2).    
    \end{equation*}
    By construction, the pair $(u_i, \tilde{u}_i^*)$ is exchangeable for each $i \in \mathcal{H}_0^{\mathcal{D}_1}$ (since the conformity score function is fixed and $\tilde{X}_{i,1}^*$ and $X_{i,1}$ are conditionally exchangeable given $S_{i,1}^2$). Under the additional assumption that there are no ties between $u_i$ and $\tilde{u}_i^*$, the exchangeability of $(u_i, \tilde{u}_i^*)$ implies that, for any Borel measurable set $E$ with positive measure,
    \begin{equation*}
        \mathbb{P}\left( \operatorname{sgn}(u_i - \tilde{u}_i^*) = 1 \,\middle|\, u_i \wedge \tilde{u}_i^* \in E \right) = \frac{1}{2}, \quad \forall i \in \mathcal{H}_0^{\mathcal{D}_1}.
    \end{equation*}
    This fact allows us to establish the key bound:
    \begin{equation}\label{eqn:lemma4_1}
    \begin{split}
        &\left|\left\{ \mathbb{P}(\text{sgn}(u_i - \tilde{u}_i) = 1 \mid u_i \wedge \tilde{u}_i \in E) - 1/2 \right\} \cdot \mathbb{P}(u_i \wedge \tilde{u}_i \in E)\right| \\
        &\quad = \left|\{\mathbb{P}(\text{sgn}(u_i - \tilde{u}_i) = 1, u_i \wedge \tilde{u}_i \in E) -\mathbb{P}(\text{sgn}(u_i - \tilde{u}_i^*) = 1, u_i \wedge \tilde{u}_i^* \in E)\right. \\
        &\quad\quad\quad\left. +~ \mathbb{P}(\text{sgn}(u_i - \tilde{u}_i^*) = 1, u_i \wedge \tilde{u}_i^* \in E) - 1/2 \cdot \mathbb{P}(u_i \wedge \tilde{u}_i \in E)  \}\right|\\
        &\quad\overset{(\star)}{\leq} \left| \mathbb{P}(\text{sgn}(u_i - \tilde{u}_i) = 1, u_i \wedge \tilde{u}_i \in E) - \mathbb{P}(\text{sgn}(u_i - \tilde{u}_i^*) = 1, u_i \wedge \tilde{u}_i^* \in E) \right| \\
        &\quad\quad\quad + 1/2 \left| \mathbb{P}(u_i \wedge \tilde{u}_i \in E) - p(u_i \wedge \tilde{u}_i^* \in E) \right| \\
        &\quad\overset{(\star\star)}{\leq} 3/2 \left| \mathbb{P}(\text{sgn}(u_i - \tilde{u}_i) = 1, u_i \wedge \tilde{u}_i \in E) - \mathbb{P}(\text{sgn}(u_i - \tilde{u}_i^*) = 1, u_i \wedge \tilde{u}_i^* \in E) \right| \\
        &\quad\quad\quad + 1/2 \left| \mathbb{P}(\text{sgn}(u_i - \tilde{u}_i) = -1, u_i \wedge \tilde{u}_i \in E) - \mathbb{P}(\text{sgn}(u_i - \tilde{u}_i^*) = -1, u_i \wedge \tilde{u}_i^* \in E) \right|,
    \end{split}
    \end{equation}
    where the first inequality (marked by $(\star)$) is a direct consequence of the triangle inequality, and the second inequality (marked by $(\star\star)$) follows from the absence of ties between $u_i$ and $\tilde{u}_i$ (or $\tilde{u}_i^*$), together with another application of the triangle inequality.

    We next analyze the joint probabilities appearing on the right-hand side. First, observe that
    \begin{equation*}
    \begin{split}
        \mathbb{P}(\text{sgn}(u_i - \tilde{u}_i) = 1, u_i \wedge \tilde{u}_i \in E) 
        &= \mathbb{P}(\text{sgn}(u_i - \tilde{u}_i) = 1, \tilde{u}_i \in E)\\
        &= \mathbb{E}\left[\mathbb{P}(\text{sgn}(u_i - \tilde{u}_i) = 1, \tilde{u}_i \in E\mid S_{i,1}^2)\right]\\
        &= \int\int\int_{B(y, s^2, E)} h_{\hat{G}}(x) \,dx\, h_G(y) \,dy\, f_G(s^2;\nu) \,ds^2,
        \end{split}
    \end{equation*}
    where $B(y, s^2, E)$ denotes a Borel measurable set that depends on $y$, $s^2$ and $E$. Similarly,
    \begin{equation*}
    \begin{split}
        \mathbb{P}(\text{sgn}(u_i - \tilde{u}_i) = 1, u_i \wedge \tilde{u}_i^* \in E) 
        &= \mathbb{P}(\text{sgn}(u_i - \tilde{u}_i) = 1, \tilde{u}_i^* \in E)\\
        &= \mathbb{E}[\mathbb{P}(\text{sgn}(u_i - \tilde{u}_i^*) = 1, \tilde{u}_i^* \in E\mid S_{i,1}^2)]\\
        &= \int\int\int_{B(y, s^2, E)} h_G(x) \,dx\, h_G(y) \,dy\, f_G(s^2;\nu) \,ds^2.
        \end{split}
    \end{equation*}
    These identities immediately imply that
    \begin{equation}\label{eqn:lemma4_2}
    \begin{split}
        &\left| \mathbb{P}(\text{sgn}(u_i - \tilde{u}_i) = 1, u_i \wedge \tilde{u}_i \in E) - \mathbb{P}(\text{sgn}(u_i - \tilde{u}_i^*) = 1, u_i \wedge \tilde{u}_i^* \in E) \right| \\
        &\quad \leq \int\int\int |h_G(x) - h_{\hat{G}}(x)|\,dx\, h_G(y) \,dy\, f_G(s^2;\nu) \,ds^2 \\
        &\quad = \int\int |h_G(x) - h_{\hat{G}}(x)|\,dx\, f_G(s^2;\nu) \,ds^2.
    \end{split}
    \end{equation}
    By an entirely similar argument, we also obtain
    \begin{equation}\label{eqn:lemma4_3}
    \begin{split}
        &\left| \mathbb{P}(\text{sgn}(u_i - \tilde{u}_i) = -1, u_i \wedge \tilde{u}_i \in E) - \mathbb{P}(\text{sgn}(u_i - \tilde{u}_i^*) = -1, u_i \wedge \tilde{u}_i^* \in E) \right| \\
        &\quad\leq \int\int |h_G(x) - h_{\hat{G}}(x)|\,dx\, f_G(s^2;\nu) \,ds^2.
    \end{split}
    \end{equation}
    Combining inequalities \eqref{eqn:lemma4_1}--\eqref{eqn:lemma4_3} completes the proof.
\end{proof}

\begin{lemma}\label{lemma5}
    Suppose the hierarchical model described in \eqref{hierar_model} holds, and assume that there are no ties between $ u_i $ and $ \tilde{u}_i $ for all $ i \in \mathcal{H}^{\mathcal{D}_1} $. Given the dataset $ \mathcal{D}_2 $, we have
    \begin{equation*}
        \left(\{1+\text{sgn}(u_i - \tilde{u}_i)\}/2 \mid u_k \wedge \tilde{u}_k :  k \in \mathcal{H}^{\mathcal{D}_1}\right) \overset{\text{i.i.d.}}{\sim} \text{Bernoulli}(\rho), \quad \forall i \in \mathcal{H}_0^{\mathcal{D}_1},
    \end{equation*}
    where $ \text{sgn}(\cdot) $ denotes the sign function, and $ \rho $ is a random variable defined as 
    \begin{equation*}
        \rho \coloneqq \mathbb{P}\left(\text{sgn}(u_i - \tilde{u}_i) = 1 \mid u_k \wedge \tilde{u}_k : k \in \mathcal{H}^{\mathcal{D}_1}\right) \quad \text{for} \quad i \in \mathcal{H}_0^{\mathcal{D}_1}.
    \end{equation*}
\end{lemma}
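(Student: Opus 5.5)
The plan is to work conditionally on $\mathcal{D}_2$ throughout, so that $\hat{G}$ and $\hat{u}(\cdot,\cdot)$ are fixed deterministic objects, as in the proofs of Lemmas~\ref{lemma3} and~\ref{lemma4}. I would also condition on the vector of hypothesis states $(\theta_i)_{i\in\mathcal{H}^{\mathcal{D}_1}}$, so that $\mathcal{H}_0^{\mathcal{D}_1}$ is a fixed set. Write $W_k \coloneqq u_k\wedge\tilde{u}_k$ and $\zeta_i \coloneqq \{1+\operatorname{sgn}(u_i-\tilde{u}_i)\}/2$. By the no-ties assumption, $\zeta_i\in\{0,1\}$ almost surely, so its conditional law is automatically Bernoulli. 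The whole task is therefore to show two things: the $\zeta_i$, $i\in\mathcal{H}_0^{\mathcal{D}_1}$, are conditionally independent given $(W_k)_k$, and they share one common success parameter $\rho$.

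\textbf{Step 1 (i.i.d.\ structure of the quadruples).} Under the hierarchical model~\eqref{hierar_model}, the tuples $(\sigma_{i,1}^2,\mu_{i,1},X_{i,1},S_{i,1}^2)$ are i.i.d.\ across $i$ and independent of $\mathcal{D}_2$. The calibration variable $\tilde{X}_{i,1}$ is drawn independently from the fixed kernel $h_{\hat{G}}(\cdot\mid \mu=0,S^2=S_{i,1}^2)$ in~\eqref{pseudo_cal}. Hence, given $\mathcal{D}_2$ and $\theta$, the null quadruples $(X_{i,1},\tilde{X}_{i,1},S_{i,1}^2)$, $i\in\mathcal{H}_0^{\mathcal{D}_1}$, are i.i.d. Each has the law $h_G(x)\,h_{\hat{G}}(\tilde{x})\,f_G(s^2;\nu)$ described in Lemmas~\ref{lemma1} and~\ref{lemma4}. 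These quadruples are also independent of the non-null quadruples. Since $\hat{u}$ is a fixed map, the pairs $(\zeta_i,W_i)$ are i.i.d.\ over $i\in\mathcal{H}_0^{\mathcal{D}_1}$ and independent of $(W_k)_{k\in\mathcal{H}_1^{\mathcal{D}_1}}$.

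\textbf{Step 2 (reduction of the conditioning and independence).} Using Step 1 and the standard fact that, for independent blocks, $\mathbb{P}(A_i\mid \mathcal{F}_i\vee\mathcal{F}_{-i})=\mathbb{P}(A_i\mid\mathcal{F}_i)$, I would show
\[
\mathbb{P}\bigl(\zeta_i=z_i,\ i\in\mathcal{H}_0^{\mathcal{D}_1}\,\big|\,W_k: k\in\mathcal{H}^{\mathcal{D}_1}\bigr)=\prod_{i\in\mathcal{H}_0^{\mathcal{D}_1}}\mathbb{P}(\zeta_i=z_i\mid W_i).
\]
I would verify this by testing against product functions $\prod_k g_k(W_k)$ and applying Fubini across the independent blocks. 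This establishes the conditional independence, and it also shows that the conditional probability $\rho$ in the statement does not depend on the other minima.

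\textbf{Step 3 (common parameter).} Because the pairs $(\zeta_i,W_i)$ are identically distributed, the disintegration $\mathbb{P}(\zeta_i=1\mid W_i=w)=r(w)$ can be taken with one measurable function $r$ that does not depend on $i$. Explicitly, $r(w)$ is the Radon--Nikodym derivative of $E\mapsto \mathbb{P}(\zeta_i=1,W_i\in E)$ with respect to the law of $W_i$. Both measures are the same for every null $i$, because they are the integrals over $B(y,s^2,E)$ appearing in the proof of Lemma~\ref{lemma4}, and those depend only on $h_G$, $h_{\hat{G}}$, $f_G$ and $\hat{u}$. Thus $\rho$ as defined in the statement is given by the same kernel $r$ for every $i\in\mathcal{H}_0^{\mathcal{D}_1}$, and each null sign is conditionally Bernoulli$(\rho)$ with this common $\rho$. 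Combined with Step 2, this is the claimed i.i.d.\ Bernoulli$(\rho)$ statement.

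\textbf{Main obstacle.} The delicate point is Step 3: making precise in what sense $\rho$ is ``common''. The conditional probability is a common function $r$ evaluated at each hypothesis's own minimum $W_i$, so the precise statement I would write down is ``i.i.d.\ given $\mathcal{D}_2$, with common success kernel $r$, i.e.\ $\rho=r(W_i)$.'' I would add a remark that the only property used downstream is the uniform bound $|\rho-1/2|\,\mathbb{P}(W_i\in E)\le 2\iint|h_G-h_{\hat{G}}|\,dx\,f_G\,ds^2$ from Lemma~\ref{lemma4}, which holds with the same constant for all nulls. In the oracle case $r\equiv 1/2$, and then $\rho$ is literally a single constant.
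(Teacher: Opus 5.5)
Your Steps 1--2 are the paper's argument, done more carefully. The paper uses the same three facts: the triplets are mutually independent, the null triplets are i.i.d.\ given $\mathcal{D}_2$, and $\operatorname{sgn}(u_i-\tilde u_i)$ and $u_i\wedge\tilde u_i$ are functions of the $i$th triplet. Your conditioning on $\theta$ makes $\mathcal{H}_0^{\mathcal{D}_1}$ a fixed set, and your product factorization proves the conditional independence of the null signs, which the paper leaves implicit.

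In Step 3 your kernel formulation is exactly what this argument yields. The paper instead passes from ``the $r(W_i)$, $i\in\mathcal{H}_0^{\mathcal{D}_1}$, are i.i.d.'' to ``Bernoulli$(\rho)$ with one $\rho$'', which does not follow.

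You should therefore not also claim that Steps 2--3 give ``the claimed i.i.d.\ Bernoulli$(\rho)$ statement''. Given $(W_k)_k$, the $\zeta_i$ are independent with parameters $r(W_i)$. They are conditionally identically distributed iff $r(W_i)$ is a.s.\ constant. That holds in the oracle case but not in general. For instance, take $G=\delta_1$, $\hat G=\delta_2$ and $\hat u$ strictly decreasing in $|x|$: then $r\to 1/2$ when both $|X_i|,|\tilde X_i|$ are small, and $r\to 1$ as the larger one grows. So neither you nor the paper proves the literal single-$\rho$ statement. State the lemma in your kernel form: the pairs $(\zeta_i,W_i)$ are i.i.d., and the signs are conditionally independent given the $W$'s with common kernel $r$.

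Your closing remark is wrong, though: downstream uses more than the Lemma~\ref{lemma4} bound. Lemma~\ref{lemma6} needs a common $\rho$ in two places:
\begin{itemize}
\item the binomial law of $B=\sum_{i\in\mathcal{H}_0^{\mathcal{D}_1}}\xi_i$ given $\{s_j\}$;
\item its step $(\star)$, $\mathbb{P}(V_i=V_{i+1},\,\tilde V_i=\tilde V_{i+1}-1\mid\mathcal{F}_{i+1})=\tilde V_{i+1}/(V_{i+1}+\tilde V_{i+1})$, which requires the null signs to be exchangeable given the $s_j$.
\end{itemize}
With heterogeneous $r(W_j)$, the law of $B$ is Poisson--binomial and $(\star)$ fails. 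So the correct version of this lemma does not feed Lemma~\ref{lemma6} as written, and that is where a repair would be needed.
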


\begin{proof}
    Throughout this proof, we assume that the dataset $\mathcal{D}_2$ is given, which implies that both the estimated prior distribution $\hat{G}$ and the conformity score function $\hat{u}(\cdot, \cdot)$ are deterministic. Thus, all sources of randomness are due to the collection of triplets $\{(X_{i,1}, \tilde{X}_{i,1}, S_{i,1}^2)\}_{i \in \mathcal{H}^{\mathcal{D}_1}}$.
     
    It is also worth noting that, under the hierarchical model in \eqref{hierar_model}, the following properties hold due to the construction of the calibration variable $\tilde{X}_{i,1}$ and the definition of the conformity score:
    \begin{enumerate}[label=(\roman*)]
        \item The triplets $ \{(X_{i,1}, \tilde{X}_{i,1}, S_{i,1}^2)\}_{i \in \mathcal{H}^{\mathcal{D}_1}} $ are mutually independent. \label{prop:indep_all}
        \item The triplets $\{(X_{i,1}, \tilde{X}_{i,1}, S_{i,1}^2)\}_{i \in \mathcal{H}_0^{\mathcal{D}_1}}$ are i.i.d.. \label{prop:iid_null}
        \item Both $ \text{sgn}(u_i - \tilde{u}_i) $ and $ u_i \wedge \tilde{u}_i $ are functions of the triplet $ (X_{i,1}, \tilde{X}_{i,1}, S_{i,1}^2) $. \label{prop:func}
    \end{enumerate}
    Leveraging properties~\ref{prop:indep_all} and~\ref{prop:func}, we establish the following conditional independence relation for each $i \in \mathcal{H}^{\mathcal{D}_1}$:
    \begin{equation*}
        \mathbb{P}\left(\text{sgn}(u_i - \tilde{u}_i) = 1 \,\middle|\, u_k \wedge \tilde{u}_k : k \in \mathcal{H}^{\mathcal{D}_1}\right)
        = \mathbb{P}\left(\text{sgn}(u_i - \tilde{u}_i) = 1 \,\middle|\, u_i \wedge \tilde{u}_i\right).
    \end{equation*}
    Combining this with property~\ref{prop:iid_null}, we conclude that
    \begin{equation*}
        \left\{\mathbb{P}\left(\text{sgn}(u_i - \tilde{u}_i) = 1 \,\middle|\, u_k \wedge \tilde{u}_k : k \in \mathcal{H}^{\mathcal{D}_1}\right)\right\}_{ i \in \mathcal{H}_0^{\mathcal{D}_1}} \text{ are i.i.d.}.
    \end{equation*}
    Since there are no ties between $ u_i $ and $ \tilde{u}_i $, for all $ i \in \mathcal{H}_0^{\mathcal{D}_1} $,
    \begin{equation*}
        \left(\{1 + \text{sgn}(u_i - \tilde{u}_i)\}/2 \,\middle|\, u_k \wedge \tilde{u}_k : k \in \mathcal{H}^{\mathcal{D}_1}  \right)
        \overset{\text{i.i.d.}}{\sim} \text{Bernoulli}(\rho).
    \end{equation*}
\end{proof}

Before proceeding to the next result, we introduce the relevant notation. Let $ \xi_i = \mathbb{I}(u_i \leq \tilde{u}_i) $, $ s_i = u_i \wedge \tilde{u}_i $, and let $ s_{(i)} $ denote the $ i $-th smallest value among $ \{s_j\}_{j \in \mathcal{H}^{\mathcal{D}_1}} $. With these notations, define the cumulative counts denoted by $V_i$ and $\tilde{V}_i$:
\begin{equation*}
    V_i = \sum_{j \in \mathcal{H}_0^{\mathcal{D}_1}} \xi_j \cdot \mathbb{I}(s_j \leq s_{(i)}), \quad 
    \tilde{V}_i = \sum_{j \in \mathcal{H}_0^{\mathcal{D}_1}} (1 - \xi_j) \cdot \mathbb{I}(s_j \leq s_{(i)}).
\end{equation*}
Furthermore, for a target FDR level $\alpha$, define
\begin{equation}\label{k_hat}
    \hat{k} = \hat{k}(\alpha) \coloneqq \max \left\{k \in [m^{\mathcal{D}_1}] : \frac{1 + \sum_{i \in \mathcal{H}^{\mathcal{D}_1}} (1 - \xi_i) \cdot \mathbb{I}(s_i \leq s_{(k)})}{\sum_{i \in \mathcal{H}^{\mathcal{D}_1}} \xi_i \cdot \mathbb{I}(s_i \leq s_{(k)})} \leq \alpha \right\}.
\end{equation}
    
\begin{lemma}\label{lemma6}
    Suppose the hierarchical model described in \eqref{hierar_model} holds, and assume that there are no ties between $ u_i $ and $ \tilde{u}_i $ for all $ i \in \mathcal{H}^{\mathcal{D}_1} $. Given the dataset $ \mathcal{D}_2 $, the following inequality holds:
    \begin{equation*}
        \mathbb{E}\left[ \frac{V_{\hat{k}}}{M + \tilde{V}_{\hat{k}}} \right] \leq 1,
    \end{equation*}
    where  $M = \left(1 + \left(\frac{1 - 2\rho - (1-\rho)\rho^{m_0^{\mathcal{D}_1}}}{\rho}\right) m_0^{\mathcal{D}_1} \right) \vee 1$ and $\rho$ is a random variable defined in Lemma~\ref{lemma5}.
\end{lemma}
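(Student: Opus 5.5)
The plan is a backward-martingale (knockoff-type) argument, carried out conditionally on $\mathcal{D}_2$ and on the minima $\{s_k = u_k\wedge\tilde u_k\}_{k\in\mathcal{H}^{\mathcal{D}_1}}$.

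\textbf{Step 1 (conditional coin flips).} By Lemma~\ref{lemma5}, conditionally on $\{s_k\}$ the null indicators $\{\xi_j\}_{j\in\mathcal{H}_0^{\mathcal{D}_1}}$ are i.i.d. Since there are no ties, $\operatorname{sgn}(u_j-\tilde u_j)=1$ exactly when $\xi_j=0$, so each $\xi_j$ is Bernoulli$(1-\rho)$. The non-null $\xi_j$'s are independent of these, again by property (i) in the proof of Lemma~\ref{lemma5}. Hence, after conditioning, the only randomness in $(V_k,\tilde V_k)$ comes from i.i.d. coins attached to the null positions among the ordered $s_{(1)}<\dots<s_{(m)}$.

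\textbf{Step 2 (backward filtration and stopping time).} Define $\mathcal{F}_k=\sigma\bigl(\{s_j\},\ \{\xi_j\}_{j\in\mathcal{H}_1^{\mathcal{D}_1}},\ V_k,\tilde V_k,V_{k+1},\tilde V_{k+1},\dots\bigr)$, which increases as $k$ decreases from $m^{\mathcal{D}_1}$ to $1$. The ratio in \eqref{k_hat} at index $k$ depends only on the non-null coins and on the totals $V_k+(\text{non-null heads})$ and $\tilde V_k+(\text{non-null tails})$ up to $s_{(k)}$. Therefore $\hat k$, the largest $k$ satisfying the inequality, is a stopping time with respect to $\{\mathcal{F}_k\}$ run backward.

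\textbf{Step 3 (supermartingale).} Because the null coins are i.i.d., conditionally on $\mathcal{F}_k$ with $V_k=v$ and $\tilde V_k=t$, the $v+t$ null coins in positions $\le k$ are exchangeable. So if position $k$ is null, it is a head ($\xi=1$) with probability $v/(v+t)$. Hence
\[
\mathbb{E}\Bigl[\tfrac{V_{k-1}}{M+\tilde V_{k-1}}\Bigm|\mathcal{F}_k\Bigr]=\tfrac{v}{v+t}\cdot\tfrac{v-1}{M+t}+\tfrac{t}{v+t}\cdot\tfrac{v}{M+t-1}.
\]
Subtracting $v/(M+t)$ leaves $v(1-M)/\{(v+t)(M+t)(M+t-1)\}$, which is $\le 0$ because $M\ge1$. (If $M=1$ and $t=0$, the second term vanishes, so there is no division by zero.) If position $k$ is non-null, the ratio is unchanged. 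Thus $V_k/(M+\tilde V_k)$ is a backward supermartingale. Note that $M$ is a function of $\rho$, and $\rho$ is measurable with respect to the conditioning on $\{s_k\}$, so $M$ is a constant at this level. Optional stopping then gives
\[
\mathbb{E}\Bigl[\tfrac{V_{\hat k}}{M+\tilde V_{\hat k}}\Bigm|\{s_k\}\Bigr]\le \mathbb{E}\Bigl[\tfrac{V_{m}}{M+\tilde V_{m}}\Bigm|\{s_k\}\Bigr].
\]

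\textbf{Step 4 (binomial computation, main obstacle).} At the final index $V_m=X\sim\mathrm{Bin}(n,1-\rho)$ and $\tilde V_m=n-X$, where $n=m_0^{\mathcal{D}_1}$. It remains to show $\mathbb{E}[X/(M+n-X)]\le1$ for the stated $M$. I would write $Y=n-X\sim\mathrm{Bin}(n,\rho)$ and $X/(M+Y)=(n+M)/(M+Y)-1$. Using the identity $\mathbb{E}[1/(1+Y)]=\{1-(1-\rho)^{n+1}\}/\{(n+1)\rho\}$ and the bound $1/(M+Y)\le 1/(1+Y)$, I would compare the resulting expression with the explicit form $M=1+\{(1-2\rho-(1-\rho)\rho^{n})/\rho\}\,n$; the factor $\rho^{n}$ should arise from an analogous identity for the extreme-count term. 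Getting the constants to close exactly at $1$ is the delicate part. If the crude bound $1/(M+Y)\le1/(1+Y)$ loses too much, I would instead apply Jensen/convexity in $M$ separately in the two regimes $\rho\ge1/2$ (where $M=1$ and the bound follows from the $M=1$ identity) and $\rho<1/2$. Finally, taking expectation over $\{s_k\}$ yields the lemma.
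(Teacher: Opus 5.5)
Your Steps 1--3 reproduce the paper's argument. The filtration is the same backward one, $\hat k$ is shown to be a stopping time in the same way, and your one-step computation is the same one. Its deficit $v(1-M)/\{(v+t)(M+t)(M+t-1)\}$ shows that the supermartingale property only uses $M\ge 1$.

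The gap is Step 4. This is the only place where the particular form of $M$ matters, and you leave it open. The route you propose cannot close:
\begin{itemize}
\item After $X/(M+Y)=(n+M)/(M+Y)-1$ and $1/(M+Y)\le 1/(1+Y)$, your bound is $(n+M)\{1-(1-\rho)^{n+1}\}/\{(n+1)\rho\}-1$. This is increasing in $M$. At $M=1$ it equals the exact value $\{1-(1-\rho)^{n+1}\}/\rho-1$, which exceeds $1$ for $\rho<1/2$ once $n$ is moderately large. So enlarging $M$ only makes it worse.
\item The natural sharpening $(1+Y)/(M+Y)\le (n+1)/(M+n)$ just returns $\{1-(1-\rho)^{n+1}\}/\rho-1$, which does not depend on $M$ at all.
\item Jensen applied to the convex map $y\mapsto 1/(M+y)$ gives a lower bound on $\mathbb{E}[1/(M+Y)]$, so that fallback points the wrong way.
\end{itemize}

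What is missing is the size-biasing shift $i\binom{n}{i}=(n-i+1)\binom{n}{i-1}$, applied \emph{before} any bounding. It removes the factor $X$ from the numerator:
\[
\mathbb{E}\Bigl[\tfrac{X}{M+n-X}\Bigr]=\tfrac{1-\rho}{\rho}\sum_{l=0}^{n-1}\binom{n}{l}(1-\rho)^{l}\rho^{n-l}\,\tfrac{n-l}{M+n-l-1}.
\]
Since $j\mapsto (j+1)/(M+j)$ is nondecreasing when $M\ge 1$, each ratio is at most $n/(M+n-1)$. The remaining weights sum to $1-(1-\rho)^{n}$, because only the $l=n$ term is missing. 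Hence
\[
\mathbb{E}\Bigl[\tfrac{X}{M+n-X}\Bigr]\le \tfrac{1-\rho}{\rho}\cdot\tfrac{n}{M+n-1}\{1-(1-\rho)^n\},
\]
which is at most $1$ as soon as $M\ge 1+n\{1-2\rho-(1-\rho)^{n+1}\}/\rho$. This is the paper's final step.

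Note that the boundary term that naturally appears is $(1-\rho)^{n}$, not $\rho^{n}$. The paper writes $1-\rho^{m_0^{\mathcal{D}_1}}$ for this weight sum, which is a slip. So do not look for an identity that produces $\rho^n$. Instead, finish with a comparison:
\begin{itemize}
\item For $\rho\le 1/2$, $(1-\rho)\rho^{n}\le(1-\rho)^{n+1}$, so the stated $M$ dominates the threshold.
\item For $\rho>1/2$, the threshold is already below $1\le M$.
\end{itemize}
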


\begin{proof}
    Throughout this proof, we assume that the dataset $\mathcal{D}_2$ is given, so that both the estimated prior distribution $\hat{G}$ and the conformity score function $\hat{u}(\cdot, \cdot)$ are deterministic. 
    
    We claim that the process $\{M_i\}$, defined as
    \begin{equation*}
        M_i = \frac{V_i}{M + \tilde{V}_i},
    \end{equation*}
    is a time-reversible supermartingale with respect to the filtration $\{\mathcal{F}_i\}$, where
    \begin{equation*}
        \mathcal{F}_i = \sigma\left(\{\xi_j : j \in \mathcal{H}_1^{\mathcal{D}_1}\},\ \{s_j : j \in \mathcal{H}^{\mathcal{D}_1}\},\ \{V_j, \tilde{V}_j : i \leq j \leq m^{\mathcal{D}_1}\} \right).
    \end{equation*}
    By the definition of $\rho$, $ M_i $ and $ \mathcal{F}_i $, it follows that $ M_i $ is $ \mathcal{F}_i $-measurable for all $ i \in \mathcal{H}^{\mathcal{D}_1} $, i.e., $ M_i \in \mathcal{F}_i $ for all $ i \in \mathcal{H}^{\mathcal{D}_1} $. Thus, it suffices to show that $ \mathbb{E}(M_i \mid \mathcal{F}_{i+1}) \leq M_{i+1} $. Consider the following cases: (i) If $ \tilde{V}_{i+1} = 0 $, then by the non-negativity and monotonicity of $ V_i $ and $ \tilde{V}_i $, we have $ M_i \leq M_{i+1} $. (ii) If $ \tilde{V}_{i+1} \neq 0 $ and the original index corresponding to $ s_{(i+1)} $ belongs to $ \mathcal{H}_1^{\mathcal{D}_1} $, then since $ V_i $ and $ \tilde{V}_i $ depend only on indices in $ \mathcal{H}_0^{\mathcal{D}_1} $, it follows that $ V_i = V_{i+1} $ and $ \tilde{V}_i = \tilde{V}_{i+1} $, which implies $ M_i = M_{i+1} $. (iii) If $ \tilde{V}_{i+1} \neq 0 $ and the original index corresponding to $ s_{(i+1)} $ belongs to $ \mathcal{H}_0^{\mathcal{D}_1} $, then
    \begin{equation*}
    \begin{split}
        \mathbb{E}(M_i \mid \mathcal{F}_{i+1}) 
        &= \mathbb{E}(M_i \cdot \mathbb{I}(V_i = V_{i+1}, \tilde{V}_i = \tilde{V}_{i+1}-1) \mid \mathcal{F}_{i+1}) \\
        &\quad\quad + ~\mathbb{E}(M_i \cdot \mathbb{I}(V_i = V_{i+1} -1, \tilde{V}_i = \tilde{V}_{i+1}) \mid \mathcal{F}_{i+1})\\
        &= \frac{V_{i+1}}{M + \tilde{V}_{i+1}-1} \cdot \mathbb{P}(V_i = V_{i+1}, \tilde{V}_i = \tilde{V}_{i+1}-1 \mid \mathcal{F}_{i+1}) \\
        &\quad\quad +~\frac{V_{i+1}-1}{M + \tilde{V}_{i+1}}\cdot \mathbb{P}(V_i = V_{i+1} -1, \tilde{V}_i = \tilde{V}_{i+1} \mid \mathcal{F}_{i+1})\\
        &\overset{(\star)}{=} \frac{V_{i+1}}{M + \tilde{V}_{i+1}-1} \cdot \frac{\tilde{V}_{i+1}}{ V_{i+1} +  \tilde{V}_{i+1}} + \frac{V_{i+1}-1}{M + \tilde{V}_{i+1}}\cdot  \frac{ V_{i+1}}{ V_{i+1} + \tilde{V}_{i+1}}\\
        &\overset{(\star\star)}{=} M_{i+1} \cdot \left\{\frac{(M + \tilde{V}_{i+1})  \tilde{V}_{i+1} + (M + \tilde{V}_{i+1}-1)(V_{i+1}-1)}{(M+\tilde{V}_{i+1}-1)\{ V_{i+1} + \tilde{V}_{i+1}\}}\right\}.
    \end{split}
    \end{equation*}
    The third equality (marked by $(\star)$) is justified by the exchangeability of the triplets $\{(X_{i,1}, \tilde{X}_{i,1}, S_{i,1}^2)\}_{i \in \mathcal{H}_0^{\mathcal{D}_1}}$, which follows from the fact that they are i.i.d. Since the second term in the last equality (marked by $(\star\star)$) is at most 1 by the definition of $M$, it follows that $\mathbb{E}(M_i \mid \mathcal{F}_{i+1}) \leq M_{i+1}$, thereby establishing the first claim.

    Next, we claim that $\hat{k}$ is a stopping time with respect to the filtration $\{\mathcal{F}_i\}$. By definition, the event $\{\hat{k} = i\}$ can be written as
    \begin{equation*}
        \{\hat{k} = i\} = E_1(i) \cap E_2(i),    
    \end{equation*}
    where
    \begin{equation*}
        E_1(i) = \{Q(s_{(i)}) \leq \alpha\}, \quad
        E_2(i) = \{Q(s_{(j)}) > \alpha \text{ for all } j > i\}.    
    \end{equation*}
    Since $ Q(s_{(j)}) $ can be expressed as
    \begin{equation*}
    \begin{split}
        Q(s_{(k)}) 
        &= \frac{1 + \tilde{V}_k + \sum_{i \in \mathcal{H}_1^{\mathcal{D}_1}}(1-\xi_i) I(s_i \leq s_{(k)})}{[V_{k} + \sum_{i \in \mathcal{H}_1^{\mathcal{D}_1}} \xi_i I(s_i \leq s_{(k)})] \vee 1},
        \end{split}
    \end{equation*}
    It follows that both $E_1(i)$ and $E_2(i)$ are $\mathcal{F}_i$-measurable, and thus $\{\hat{k} = i\} \in \mathcal{F}_i$, which implies that $\hat{k}$ is a stopping time.

    Up to this point, we have established that the sequence $ \{M_i\} $ is a time-reversible supermartingale, and that $ \hat{k} $ is a stopping time with respect to the filtration $ \{\mathcal{F}_i\} $. Then, by the optional stopping theorem, we obtain
    \begin{equation}\label{eqn:lemma2_1}
    \begin{split}
        \mathbb{E}(M_{\hat{k}}) 
        \leq \mathbb{E}(M_m) 
        &= \mathbb{E}\left[\frac{\sum_{i \in \mathcal{H}_0^{\mathcal{D}_1}} \xi_i}{M + \sum_{i \in \mathcal{H}_0^{\mathcal{D}_1}} (1-\xi_i)}\right] \\
        &= \mathbb{E}\left[ \mathbb{E}\left[\frac{\sum_{i \in \mathcal{H}_0^{\mathcal{D}_1}} \xi_i}{M + \sum_{i \in \mathcal{H}_0^{\mathcal{D}_1}} (1-\xi_i)} \Bigg| s_i : i \in \mathcal{H}^{\mathcal{D}_1}\right]\right].
    \end{split}
    \end{equation}
    Let $ B = \sum_{i \in \mathcal{H}_0^{\mathcal{D}_1}} \xi_i $. Then, the inner expectation of the last term in \eqref{eqn:lemma2_1} is at most 1:
    \begin{equation*}
    \begin{split}
        &\mathbb{E}\left[ \frac{B}{M + m_0^{\mathcal{D}_1} - B} \bigg| s_i : i \in \mathcal{H}^{\mathcal{D}_1} \right]\\
        &\quad= \sum_{i = 1}^{m_0^{\mathcal{D}_1}} \mathbb{P}(B = i \mid s_i : i \in \mathcal{H}^{\mathcal{D}_1}) \frac{i}{M + m_0^{\mathcal{D}_1} - i} \\
        &\quad\overset{(\star)}{=} \sum_{i = 1}^{m_0^{\mathcal{D}_1}} \frac{m_0^{\mathcal{D}_1}!}{(i-1)! (m_0^{\mathcal{D}_1} - i + 1)!} (1-\rho)^{i-1} \rho^{m_0^{\mathcal{D}_1} - i + 1} \cdot \frac{m_0^{\mathcal{D}_1}-i+1}{M+m_0^{\mathcal{D}_1}-i} \cdot \frac{1-\rho}{\rho} \\
        &\quad \leq \frac{m_0^{\mathcal{D}_1}}{M + m_0^{\mathcal{D}_1}-1}\cdot \frac{1-\rho}{\rho} \cdot (1-\rho^{m_0^{\mathcal{D}_1}})\\
        &\quad\overset{(\star\star)}{\leq} 1.
        \end{split}
    \end{equation*}
    The first equality (marked by $(\star)$) holds by Lemma \ref{lemma5}, and the last inequality (marked by $(\star\star)$) hold by definition of $M$. Thus, we conclude that $ \mathbb{E}[M_{\hat{k}}] \leq 1 $.
\end{proof}

\begin{lemma}\label{lemma7}
    Suppose that Assumption~\ref{assum_1} holds and $\nu \geq 2$. For all $\epsilon > 0$,
    \begin{equation*}
       \mathbb{P}\left(\frac{1}{2} - \rho \geq \epsilon \right) = O\left(\left(\frac{(\log{m^{\mathcal{D}_2}})^5}{m^{\mathcal{D}_2}}\right)^{1/4}\right).
    \end{equation*}
\end{lemma}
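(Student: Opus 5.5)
The plan is to reduce the probability statement about the random variable $\rho$ to the deterministic bound of Lemma~\ref{lemma4}, and then to control that bound with Lemma~\ref{lemma3}. Throughout, $\rho=\mathbb{P}(\mathrm{sgn}(u_i-\tilde u_i)=1\mid u_i\wedge\tilde u_i)$ for a null index $i$, by the reduction in the proof of Lemma~\ref{lemma5}. Condition first on $\mathcal{D}_2$, so that $\hat G$ and $\hat u$ are fixed. Write $\rho=r(s_i)$ with $r(s)=\mathbb{P}(\mathrm{sgn}(u_i-\tilde u_i)=1\mid s_i=s)$, and define the Borel set
\[
E_\epsilon=\{s:\tfrac12-r(s)\ge\epsilon\}.
\]
Then
\[
\mathbb{P}\bigl(\tfrac12-\rho\ge\epsilon\mid\mathcal{D}_2\bigr)=\mathbb{P}(s_i\in E_\epsilon\mid\mathcal{D}_2).
\]
If this probability is zero there is nothing to prove.

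Otherwise, average $\tfrac12-r(s_i)$ over $s_i\in E_\epsilon$. By the definition of $E_\epsilon$,
\[
\Bigl\{\tfrac12-\mathbb{P}(\mathrm{sgn}(u_i-\tilde u_i)=1\mid s_i\in E_\epsilon)\Bigr\}\,\mathbb{P}(s_i\in E_\epsilon)\;\ge\;\epsilon\,\mathbb{P}(s_i\in E_\epsilon).
\]
Lemma~\ref{lemma4}, applied with this $E=E_\epsilon$, bounds the left-hand side by $2\int\!\!\int|h_G-h_{\hat G}|\,dx\,f_G\,ds^2$. If $E_\epsilon$ has Lebesgue measure zero but positive probability, the argument of Lemma~\ref{lemma4} still works verbatim, since it only uses probabilities. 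Hence
\[
\mathbb{P}\bigl(\tfrac12-\rho\ge\epsilon\mid\mathcal{D}_2\bigr)\le\frac{2}{\epsilon}\int\!\!\int|h_G(x)-h_{\hat G}(x)|\,dx\,f_G(s^2;\nu)\,ds^2 .
\]

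Next, take expectations over $\mathcal{D}_2$. By Lemma~\ref{lemma3}, the inner double integral equals
\[
2\,\mathbb{I}\Bigl(H(f_G,f_{\hat G})\ge C\log m^{\mathcal{D}_2}/\sqrt{m^{\mathcal{D}_2}}\Bigr)+O\Bigl(\bigl((\log m^{\mathcal{D}_2})^5/m^{\mathcal{D}_2}\bigr)^{1/4}\Bigr),
\]
with a deterministic $O(\cdot)$ term. Taking expectations therefore gives a bound of
\[
\tfrac{4}{\epsilon}\,\mathbb{P}\Bigl(H(f_G,f_{\hat G})\ge C\log m^{\mathcal{D}_2}/\sqrt{m^{\mathcal{D}_2}}\Bigr)+O\bigl(((\log m^{\mathcal{D}_2})^5/m^{\mathcal{D}_2})^{1/4}\bigr).
\]
The Hellinger-rate result for the NPMLE under Assumption~\ref{assum_1} (Theorem 9 of \cite{ignatiadis2025empirical}, which fixes the constant $C$) makes this probability at most of order $1/m^{\mathcal{D}_2}$. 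That is dominated by the $(\cdot)^{1/4}$ term. Since $\epsilon$ is fixed, the factor $1/\epsilon$ is absorbed into the constant, which gives the claim.

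The step I expect to be the main obstacle is the first one: identifying $\rho$ with a measurable function of $s_i$ alone and checking that the set $E_\epsilon$ is admissible in Lemma~\ref{lemma4}. This needs the conditional-independence reduction from Lemma~\ref{lemma5} and the no-ties assumption, so that the sign events partition the space. A second point to verify is that the tail probability supplied by \cite{ignatiadis2025empirical} is indeed $O(1/m^{\mathcal{D}_2})$, or at least no larger than the stated rate.
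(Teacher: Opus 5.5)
Your proposal is correct and matches the paper's proof essentially step for step. The paper likewise writes $\rho$ as a measurable function $g$ of $u_i\wedge\tilde u_i$ and sets $E=g^{-1}([0,\tfrac12-\epsilon])$. It then applies the same Markov-type bound with the factor $\epsilon^{-1}$, invokes Lemma~\ref{lemma4} followed by Lemma~\ref{lemma3}, and takes the expectation over $\mathcal{D}_2$ using Theorem~9 of \cite{ignatiadis2025empirical}. Your additional checks (the zero-probability case for $E_\epsilon$, and that the Hellinger tail probability is dominated by the stated rate) are points the paper leaves implicit.
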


\begin{proof}
    Note that for any index $i \in \mathcal{H}_0^{\mathcal{D}_1}$, there exist a measureable function $g$ such that
    \begin{equation*}
    \begin{split}
        \left\{\frac{1}{2} - \rho \geq \epsilon ~\middle|~ \mathcal{D}_2\right\}
        &=\left\{\frac{1}{2} - \mathbb{P}(\text{sgn}(u_i - \tilde{u}_i) = 1 \mid u_k \wedge \tilde{u}_k : k \in \mathcal{H}^{\mathcal{D}_1}) \geq \epsilon ~\middle|~ \mathcal{D}_2\right\}\\
        &=\left\{\frac{1}{2} - \mathbb{P}(\text{sgn}(u_i - \tilde{u}_i)= 1 \mid u_i \wedge \tilde{u}_i ) \geq \epsilon ~\middle|~ \mathcal{D}_2\right\} \\
        &= \left\{g(u_i \wedge \tilde{u}_i) \in \left[0, \frac{1}{2}- \epsilon\right] ~\middle|~ \mathcal{D}_2\right\}\\
        &= \left\{u_i \wedge \tilde{u}_i \in g^{-1}\left(\left[0, \frac{1}{2} - \epsilon\right]\right) ~\middle|~ \mathcal{D}_2\right\}.
        \end{split}
    \end{equation*}
    Let $E = g^{-1}\left(\left[0, \frac{1}{2} -\epsilon\right]\right)$. Then, by the definition of the set $E$, we have
    \begin{equation*}
    \begin{split}
         &\mathbb{P}\left(\frac{1}{2} - \rho \geq \epsilon ~\middle|~ \mathcal{D}_2\right)\\
         &\quad= \mathbb{E}\left[\mathbb{I}(u_i \wedge \tilde{u}_i \in E) ~\middle|~ \mathcal{D}_2\right]\\
         &\quad\leq \epsilon^{-1} \cdot \mathbb{E}\left[\left\{\frac{1}{2}-\mathbb{P}(\text{sgn}(u_i - \tilde{u}_i)= 1 \mid u_i \wedge \tilde{u}_i ) \right\}\cdot \mathbb{I}(u_i \wedge \tilde{u}_i \in E) ~\middle|~ \mathcal{D}_2\right].
         \end{split}
    \end{equation*}
    Therefore, it suffices to show that 
    \begin{equation*}
    \begin{split}
         &\mathbb{E}\left[\underbrace{\mathbb{E}\left[\left\{\frac{1}{2}-\mathbb{P}(\text{sgn}(u_i - \tilde{u}_i)= 1 \mid u_i \wedge \tilde{u}_i ) \right\} \cdot \mathbb{I}(u_i \wedge \tilde{u}_i \in E) \bigg| \mathcal{D}_2\right] }_{\eqqcolon A}\right]\\
         &\quad= O\left(\left(\frac{(\log{m^{\mathcal{D}_2}})^5}{m^{\mathcal{D}_2}}\right)^{1/4} \right).
    \end{split}
    \end{equation*}
    
    We now turn to the term $A$, and establish the following bound:
    \begin{equation*}
    \begin{split}
        A
        &= \frac{1}{2} ~\mathbb{P}(u_i \wedge \tilde{u}_i \in E \mid \mathcal{D}_2) - \mathbb{P}(\text{sgn}(u_i - \tilde{u}_i) = 1, u_i \wedge \tilde{u}_i \in E \mid \mathcal{D}_2)\\
        &\leq \left|\left\{ \mathbb{P}(\text{sgn}(u_i - \tilde{u}_i) = 1 \mid u_i \wedge \tilde{u}_i \in E, \mathcal{D}_2) - \frac{1}{2} \right\} \cdot \mathbb{P}(u_i \wedge \tilde{u}_i \in E \mid \mathcal{D}_2)\right|\\
        &\overset{(\star)}{\leq}  2\int\int |h_G(x) - h_{\hat{G}}(x)|\,dx\, f_G(s^2; \nu)\,ds^2 \\
        &\overset{(\star\star)}{=} 4 I\left(H(f_G, f_{\hat{G}}) \geq C \frac{\log{m^{\mathcal{D}_2}}}{\sqrt{m^{\mathcal{D}_2}}}\right) + O\left(\left(\frac{(\log{m^{\mathcal{D}_2}})^5}{m^{\mathcal{D}_2}}\right)^{\frac{1}{4}}\right).
        \end{split}
    \end{equation*}
    The second inequality (marked by $(\star)$) holds by Lemma \ref{lemma4}, and last equality (marked by $(\star\star)$) holds by Lemma \ref{lemma3}.
    By taking expectation with respect to $\mathcal{D}_2$, and using the result from Theorem 9 in \cite{ignatiadis2025empirical}, we conclude that
    \begin{equation*}
        \mathbb{E}[A] = O\left(\left(\frac{(\log{m^{\mathcal{D}_2}})^5}{m^{\mathcal{D}_2}}\right)^{\frac{1}{4}}\right).
    \end{equation*}
    This completes the proof.
\end{proof}

\subsection{Proof of Theorem~\ref{theorem1}}
\begin{proof}[Proof of Theorem~\ref{theorem1}]
    Recall the definition of $\hat{k}$ given in~\eqref{k_hat}. We begin the proof by showing that the threshold defined as $s_{(\hat{k})}$ coincides with the data-adaptive threshold $\tau$ introduced in \eqref{thresh_unknownG}. Recall that $\tau$ is defined as
    \begin{equation*}
        \tau \coloneqq \max \left\{ t \in \mathcal{U} \cup \tilde{\mathcal{U}} : Q(t) \leq \alpha \right\},
    \end{equation*}
    where the function $Q(t)$ is given by
    \begin{equation*}
        Q(t) 
        \coloneqq \frac{1 + \sum_{i \in \mathcal{H}^{\mathcal{D}_1}} \mathbb{I}(\tilde{u}_i \leq u_i \wedge t)}{\sum_{i \in \mathcal{H}^{\mathcal{D}_1}} \mathbb{I}(u_i \leq \tilde{u}_i \wedge t)}  
        = \frac{1 + \sum_{i \in \mathcal{H}^{\mathcal{D}_1}} (1 - \xi_i) \cdot \mathbb{I}(s_i \leq t)}{\sum_{i \in \mathcal{H}^{\mathcal{D}_1}} \xi_i \cdot \mathbb{I}(s_i \leq t)},
    \end{equation*}
    with $\xi_i = \mathbb{I}(u_i \leq \tilde{u}_i)$ and $s_i = u_i \wedge \tilde{u}_i$.
    Note that the function $Q(t)$ is piecewise constant and only changes at points in the set $\{s_i : i \in [m^{\mathcal{D}_1}]\} \subset \mathcal{U} \cup \tilde{\mathcal{U}}$.
    Hence, the maximization in the definition of $\tau$ can be restricted to this discrete set, so that
    \begin{equation*}
        \tau = \max \left\{ t \in \{s_i : i \in [m^{\mathcal{D}_1}] \} : Q(t) \leq \alpha \right\},
    \end{equation*}
    which implies that $\tau = s_{(\hat{k})}$.
    
    Based on the above result, we derive an upper bound for $\text{FDP}(\boldsymbol{\delta})$. To this end, consider the following event $E$ defined as
    \begin{equation*}
        E = \left\{ \frac{1}{2} - \rho < \epsilon \right\}.    
    \end{equation*}
    By the definition of $\text{FDP}(\boldsymbol{\delta})$, we have
    \begin{equation*}
    \begin{split}
        \text{FDP}(\boldsymbol{\delta}) 
        &= \frac{1 + \sum_{i \in \mathcal{H}^{\mathcal{D}_1}} (1-\xi_i) I(s_i \leq s_{(\hat{k})})}{\left[\sum_{i \in \mathcal{H}^{\mathcal{D}_1}} \xi_i I(s_i \leq s_{(\hat{k})})\right] \vee 1}\\
        &\quad\quad \times  \frac{1 + \sum_{i \in \mathcal{H}_0^{\mathcal{D}_1}} (1-\xi_i) I(s_i \leq s_{(\hat{k})})}{1 + \sum_{i \in \mathcal{H}^{\mathcal{D}_1}} (1-\xi_i) I(s_i \leq s_{(\hat{k})})} \\
        &\quad\quad \times \frac{\sum_{i \in \mathcal{H}_0^{\mathcal{D}_1}} \xi_i I(s_i \leq s_{(\hat{k})})}{1 + \sum_{i \in \mathcal{H}_0^{\mathcal{D}_1}} (1-\xi_i) I(s_i \leq s_{(\hat{k})})} \\
        &\leq \alpha \cdot 1 \cdot \frac{\sum_{i \in \mathcal{H}_0^{\mathcal{D}_1}} \xi_i I(s_i \leq s_{(\hat{k})})}{M + \sum_{i \in \mathcal{H}_0^{\mathcal{D}_1}} (1-\xi_i) I(s_i \leq s_{(\hat{k})})} \\
        &\quad\quad \times \left\{ \frac{M + \sum_{i \in \mathcal{H}_0^{\mathcal{D}_1}} (1-\xi_i) I(s_i \leq s_{(\hat{k})})}{1 + \sum_{i \in \mathcal{H}_0^{\mathcal{D}_1}} (1-\xi_i) I(s_i \leq s_{(\hat{k})})} \right\} \cdot \mathbb{I}(E) + \mathbb{I}(E^c).
    \end{split}
    \end{equation*}
    The inequality follows from the definition of $\hat{k}$ given in \eqref{k_hat}, the fact that $\mathcal{H}_0^{\mathcal{D}_1} \subset \mathcal{H}^{\mathcal{D}_1}$, and the trivial bound $\text{FDP}(\boldsymbol{\delta}) \leq 1$. 
    For sufficiently small $\epsilon$, we have $M = 1$ on the event $E$. Thus, we obtain
    \begin{equation*}
        \text{FDP}(\boldsymbol{\delta}) 
        \leq \alpha \cdot \frac{\sum_{i \in \mathcal{H}_0^{\mathcal{D}_1}} \xi_i \mathbb{I}(s_i \leq s_{(\hat{k})})}{M + \sum_{i \in \mathcal{H}_0^{\mathcal{D}_1}} (1-\xi_i) I(s_i \leq s_{(\hat{k})})} + \mathbb{I}(E^c).
    \end{equation*}
    Taking expectations on both sides and applying Lemma~\ref{lemma6} and Lemma~\ref{lemma7}, we obtain
    \begin{equation*}
        \text{FDR}(\boldsymbol{\delta}) 
        = \alpha + O\left( \left( \frac{(\log{m^{\mathcal{D}_2}})^5}{m^{\mathcal{D}_2}} \right)^{1/4} \right),    
    \end{equation*}
    which completes the proof.
\end{proof}

\section{Proofs for Section \ref{sec_4}}\label{proof_sec_4}

\subsection{Proof of Theorem~\ref{theorem2}}
\begin{proof}[Proof of Theorem~\ref{theorem2}]
   For each $k \in [K]$, we begin by defining notation. Given $\mathcal{D}^{(-k)}$, let
    \begin{equation*}
    \begin{split}
        \rho^{(k)} &= \mathbb{P}(\text{sgn}(u_i^{(k)} - \tilde{u}_i^{(k)}) = 1 \mid u_j^{(k)} \wedge \tilde{u}_j^{(k)} : j \in \mathcal{H}^{(k)}), \quad \text{and}\\
        M^{(k)} &= \left(1+ \frac{1 - 2\rho^{(k)} - (1 - \rho^{(k)})(\rho^{(k)})^{|\mathcal{H}_0^{(k)}|}}{\rho^{(k)}} |\mathcal{H}_0^{(k)}|\right) \vee 1.
    \end{split}
    \end{equation*}
    For $\epsilon^{(k)} > 0$, define the event $A^{(k)} = \left\{\frac{1}{2} - \rho^{(k)} < \epsilon^{(k)}\right\}$.

    We first claim that for sufficiently small $\epsilon^{(k)}$,
    \begin{equation*}
        \mathbb{E} \left[\sum_{i \in \mathcal{H}_0^{(k)}} E_i^{(k)} ~\mathbb{I}(A^{(k)}) \right] \leq |\mathcal{H}^{(k)}|.
    \end{equation*}
    To see this, note that
    \begin{equation}\label{eqn:theorem2_1}
    \begin{split}
        \mathbb{E}\left[\sum_{i \in \mathcal{H}_0^{(k)}} E_i^{(k)} ~\mathbb{I}(A^{(k)})\right]
        &\overset{(\star)}{=} \mathbb{E}\left[\sum_{i \in \mathcal{H}_0^{(k)}} \frac{|\mathcal{H}^{(k)}| \cdot \xi_i^{(k)} \cdot \mathbb{I}(s_i^{(k)} \leq \tau^{(k)})}{M^{(k)} + \sum_{j \in \mathcal{H}^{(k)}} (1 - \xi_j^{(k)}) \cdot \mathbb{I}(s_j^{(k)} \leq \tau^{(k)})}~ \mathbb{I}(A^{(k)})\right]\\
        &\leq |\mathcal{H}^{(k)}| \cdot \mathbb{E}\left[\frac{\sum_{i \in \mathcal{H}_0^{(k)}} \xi_i^{(k)} \cdot \mathbb{I}(s_i^{(k)} \leq \tau^{(k)})}{M^{(k)} + \sum_{j \in \mathcal{H}_0^{(k)}} (1 - \xi_j^{(k)}) \cdot \mathbb{I}(s_j^{(k)} \leq \tau^{(k)})}\right]\\
        &\overset{(\star\star)}{\leq} |\mathcal{H}^{(k)}|.
    \end{split}
    \end{equation}
    The equality (marked by $(\star)$) holds because, for sufficiently small $\epsilon^{(k)}$, $M^{(k)} = 1$ on the event $A^{(k)}$. The last inequality (marked by $(\star\star)$) holds by Lemma~\ref{lemma6}.

    Next, let $A = \bigcap_{k = 1}^K A^{(k)}$. Then, we have
    \begin{equation}\label{eqn:theorem2_2}
        \mathbb{E}\left[\sum_{k = 1}^K \sum_{i \in \mathcal{H}_0^{(k)}} E_i^{(k)} ~\mathbb{I}(A)\right]
        \leq \sum_{k = 1}^K \mathbb{E}\left[\sum_{i \in \mathcal{H}_0^{(k)}} E_i^{(k)} ~\mathbb{I}(A^{(k)})\right]
        \overset{(\star)}{\leq} \sum_{k = 1}^K |\mathcal{H}^{(k)}| = |\mathcal{H}^{\mathcal{D}}|.
    \end{equation}
    The inequality (marked by ($\star$)) follows from \eqref{eqn:theorem2_1}.
    Furthermore, we have
    \begin{equation*}
        \mathbb{P}(A^c) = \mathbb{P}\left(\bigcup_{k = 1}^K (A^{(k)})^c\right) \leq \sum_{k = 1}^K \mathbb{P}\left((A^{(k)})^c\right).
    \end{equation*}
    Since we fix the number of folds $K$ and by Lemma~\ref{lemma7}, the last term converges to 0 as $|\mathcal{H}^{\mathcal{D}}| \to \infty$.
    
    Let $R$ be the number of rejections obtained by applying the $e$BH procedure to the set $\mathcal{E}$, defined as
    \begin{equation*}
        R = \sum_{k = 1}^K \sum_{i \in \mathcal{H}^{(k)}} \delta_{i, e\text{BH}}^{(k)}.
    \end{equation*}
    We obtain the following upper bound on the FDP($\boldsymbol{\delta}_{e\text{BH}}$):
    \begin{equation*}
    \begin{split}
        \text{FDP}(\boldsymbol{\delta}_{e\text{BH}})
        &= \frac{\sum_{k = 1}^K \sum_{i \in \mathcal{H}_0^{(k)}} \mathbb{I}\left(E_i^{(k)} \geq \frac{|\mathcal{H}^{\mathcal{D}}|}{(\alpha_{e\text{BH}} R)}\right)}{R \vee 1}\\
        &\leq \frac{\alpha_{e\text{BH}}}{|\mathcal{H}^{\mathcal{D}}|} \sum_{k = 1}^K \sum_{i \in \mathcal{H}_0^{(k)}} E_i^{(k)} ~\mathbb{I}(A) + \mathbb{I}(A^c).
    \end{split}
    \end{equation*}
    The last inequality holds due to two facts: (i) the indicator function $\mathbb{I}\left(E_i^{(k)} \geq \frac{|\mathcal{H}^{\mathcal{D}}|}{\alpha_{e\text{BH}} R}\right)$ can be upper bounded by $\frac{\alpha_{e\text{BH}} R}{|\mathcal{H}^{\mathcal{D}}|} E_i^{(k)}$, and (ii) by definition, the false discovery proportion $\mathrm{FDP}(\boldsymbol{\delta}_{e\text{BH}})$ is always less than or equal to 1. Taking expectation on both sides, we conclude:
    \begin{equation*}
        \text{FDR}(\boldsymbol{\delta}_{e\text{BH}}) \leq \frac{\alpha_{e\text{BH}}}{|\mathcal{H}^{\mathcal{D}}|} ~\mathbb{E}\left[\sum_{k = 1}^K \sum_{i \in \mathcal{H}_0^{(k)}} E_i^{(k)} ~\mathbb{I}(A)\right] + \mathbb{P}(A^c) \overset{(\star)}{\leq} \alpha_{e\text{BH}} + \mathbb{P}(A^c).
    \end{equation*}
    The inequality (marked by ($\star$)) follows from \eqref{eqn:theorem2_2}.
    Since $\mathbb{P}(A^c) \to 0$ as $|\mathcal{H}^{\mathcal{D}}| \to \infty$, the desired result follows.
\end{proof}

\subsection{Proof of the Theorem~\ref{theorem3}}

\begin{proof}[Proof of Theorem~\ref{theorem3}]
    We begin by defining the necessary notation. 
    Let $\tilde{E}_i \in \mathcal{E}$ denote the test statistic corresponding to the $i$-th hypothesis. Then, the set $\mathcal{E}$ can be expressed as $\mathcal{E} = \{\tilde{E}_i : i \in \mathcal{H}^{\mathcal{D}}\}$. Similarly, the set $\mathcal{E}^*$ can be written as $\mathcal{E}^* = \{\tilde{E}_i / U : i \in \mathcal{H}^{\mathcal{D}}\}$. Let $\tilde{E}_{(i)}$ denote the $i$-th largest order statistic among $\mathcal{E}$ and define  
    \begin{equation*} 
        r[i] := \sum_{j \in \mathcal{H}^{\mathcal{D}}} \mathbb{I}(\tilde{E}_j   \geq \tilde{E}_i) \quad \text{and} \quad l(i) := \underset{j \geq r[i]}{\arg\max} ~ j \tilde{E}_{(j)}.
    \end{equation*}
    
    Now, consider the stochastic rounding function proposed by \cite{xu2023more}, defined as follows:
    \begin{equation*}
        S_{\alpha_{l(i)}}(\tilde{E}_i) := \tilde{E}_i \cdot \mathbb{I}(\tilde{E}_i \geq \alpha_{l(i)}^{-1}) 
        + \alpha_{l(i)}^{-1} \cdot \mathbb{I}(\alpha_{l(i)}^{-1} > \tilde{E}_i \geq U \alpha_{l(i)}^{-1}),
    \end{equation*}
    where $\alpha_{l(i)}^{-1} := \frac{|\mathcal{H}^{\mathcal{D}}|}{l(i) \alpha_{e\text{BH}}}$.
    \cite{xu2023more} have shown that applying the $e$BH procedure to the set $\mathcal{E}^*$ yields the same rejection set as applying it to the set $\{S_{\alpha_{l(i)}}(\tilde{E}_i) : i \in \mathcal{H}^{\mathcal{D}}\}$ (see Proposition~4 of \cite{xu2023more}). Therefore, it suffices to show that applying the $e$BH procedure to $\{S_{\alpha_{l(i)}}(\tilde{E}_i) : i \in \mathcal{H}^{\mathcal{D}}\}$ asymptotically controls the FDR as $|\mathcal{H}^{\mathcal{D}}| \to \infty$.

    Let $A$ denote the event defined in the proof of Theorem~\ref{theorem2}. 
    Then, we have
    \begin{equation*}
    \begin{split}
        \mathbb{E}\left[S_{\alpha_{l(i)}}(\tilde{E}_i) \cdot \mathbb{I}(A)\right]
        &= \mathbb{E}\left[\tilde{E}_i \cdot \mathbb{I}(\tilde{E}_i \geq \alpha_{l(i)}^{-1}) \cdot \mathbb{I}(A)\right] \\
        &\quad\quad + \mathbb{E}\left[\alpha_{l(i)}^{-1} \cdot \mathbb{I}(\alpha_{l(i)}^{-1} > \tilde{E}_i \geq U \alpha_{l(i)}^{-1}) \cdot \mathbb{I}(A)\right] \\
        &= \mathbb{E}\left[\tilde{E}_i \cdot \mathbb{I}(\tilde{E}_i \geq \alpha_{l(i)}^{-1}) \cdot \mathbb{I}(A)\right] \\
        &\quad\quad + \mathbb{E}\left[\alpha_{l(i)}^{-1} \cdot \mathbb{I}(\alpha_{l(i)}^{-1} > \tilde{E}_i) \cdot \mathbb{P}(U \leq \alpha_{l(i)} \tilde{E}_i \mid \alpha_{l(i)}, \tilde{E}_i, A) \cdot \mathbb{I}(A)\right] \\
        &\overset{(\star)}{=} \mathbb{E}\left[\tilde{E}_i \cdot \mathbb{I}(\tilde{E}_i \geq \alpha_{l(i)}^{-1}) \cdot \mathbb{I}(A)\right] \\
        &\quad\quad + \mathbb{E}\left[\alpha_{l(i)}^{-1} \cdot \mathbb{I}(\alpha_{l(i)}^{-1} > \tilde{E}_i) \cdot (\alpha_{l(i)} \tilde{E}_i \wedge 1) \cdot \mathbb{I}(A)\right] \\
        &\overset{(\star\star)}{=} \mathbb{E}\left[\tilde{E}_i \cdot \mathbb{I}(\tilde{E}_i \geq \alpha_{l(i)}^{-1}) \cdot \mathbb{I}(A)\right] 
        + \mathbb{E}\left[\tilde{E}_i \cdot \mathbb{I}(\tilde{E}_i < \alpha_{l(i)}^{-1}) \cdot \mathbb{I}(A)\right] \\
        &= \mathbb{E}\left[\tilde{E}_i \cdot \mathbb{I}(A)\right].
    \end{split}
    \end{equation*}
    The third equality (marked by $(\star)$) holds because $U$ is a uniform random variable independent of $\alpha_{l(i)}$, $\tilde{E}_i$, and the event $A$. The fourth equality (marked by $(\star\star)$) follows from the fact that, under the event $\{\alpha_{l(i)}^{-1} > \tilde{E}_i\}$, we have $\alpha_{l(i)} \tilde{E}_i \wedge 1 = \alpha_{l(i)} \tilde{E}_i$.
    Summing over all null hypotheses, we obtain:
    \begin{equation}\label{eqn:theorem3}
        \mathbb{E}\left[\sum_{i \in \mathcal{H}_0^{\mathcal{D}}} S_{\alpha_{l(i)}}(\tilde{E}_i) \cdot \mathbb{I}(A)\right] 
        = \mathbb{E}\left[\sum_{i \in \mathcal{H}_0^{\mathcal{D}}} \tilde{E}_i \cdot \mathbb{I}(A)\right] 
        \leq |\mathcal{H}^{\mathcal{D}}|,
    \end{equation}
    where the last inequality follows from~\eqref{eqn:theorem2_2}.

    Let $R$ denote the number of rejections from the $e$BH procedure applied to $\mathcal{E}^*$:
    \begin{equation*}
        R = \sum_{k = 1}^K \sum_{i \in \mathcal{H}^{(k)}} \delta_{i, \text{U-}e\text{BH}}^{(k)}.
    \end{equation*}
    Then the false discovery proportion satisfies:
    \begin{equation*}
    \begin{split}
        \text{FDP}(\boldsymbol{\delta}_{\text{U-}e\text{BH}})
        &\overset{(\star)}{=} \frac{\sum_{i \in \mathcal{H}_0^{\mathcal{D}}} \mathbb{I}\left(S_{\alpha_{l(i)}}(\tilde{E}_i) \geq \frac{|\mathcal{H}^{\mathcal{D}}|}{\alpha_{e\text{BH}} R} \right) }{R \vee 1} \\
        &\leq \frac{\alpha_{e\text{BH}}}{|\mathcal{H}^{\mathcal{D}}|} \sum_{i \in \mathcal{H}_0^{\mathcal{D}}} S_{\alpha_{l(i)}}(\tilde{E}_i) \cdot \mathbb{I}(A) + \mathbb{I}(A^c).
    \end{split}
    \end{equation*}
    The equality (marked by $(\star)$) follows from Proposition~4 of \cite{xu2023more}.
    Taking expectation on both sides yields:
    \begin{equation*}
        \text{FDR}(\boldsymbol{\delta}_{\text{U-}e\text{BH}}) 
        \leq \frac{\alpha_{e\text{BH}}}{|\mathcal{H}^{\mathcal{D}}|} ~\mathbb{E}\left[\sum_{i \in \mathcal{H}_0^{\mathcal{D}}} S_{\alpha_{l(i)}}(\tilde{E}_i) \cdot \mathbb{I}(A)\right] + \mathbb{P}(A^c) 
        \overset{(\star)}{\leq} \alpha_{e\text{BH}} + \mathbb{P}(A^c).
    \end{equation*}
    The last inequality (marked by $(\star)$) follows from~\eqref{eqn:theorem3}. Since $\mathbb{P}(A^c) \to 0$ as $|\mathcal{H}^{\mathcal{D}}| \to \infty$, the desired result follows.
\end{proof}

\section{An Illustrative Example of the General Procedure}\label{ex_MixTwice}
We use the \textit{MixTwice} method of \citet{zheng2021mixtwice} to demonstrate how the general procedure is implemented in practice.

\textit{Step 1: Specify the Prior Distribution and the Conformity Score Function.} The method assumes the following prior distributions:
\begin{equation*}
    \sigma_i^2 \overset{\text{i.i.d.}}{\sim} G(\cdot), 
    \quad 
    \mu_i \mid \sigma_i^2 \overset{\text{i.i.d.}}{\sim} (1-\pi)\,\delta_0(\cdot) + \pi\,f(\cdot),
\end{equation*}
where $G$ is an arbitrary distribution on $[0, \infty)$, $\pi \in (0,1)$ denotes the proportion of non-null effects, $\delta_0$ is the Dirac measure at zero (representing the null component), and $f$ is a unimodal distribution centered at zero (representing the non-null component). This prior specification implies independence between $\mu_i$ and $\sigma_i^2$. Given this prior, the conformity score function $u(\cdot, \cdot)$ is defined as
\begin{equation*}
    u(x, s^2) \coloneqq p(\mu = 0 \mid x, s^2) 
    = \frac{(1-\pi)\,p(x \mid \mu = 0, s^2)}{p(x \mid s^2)},
\end{equation*}
where 
\begin{equation*}
\begin{split}
    p(x \mid \mu = 0, s^2)
    &= \frac{\int p(x \mid \mu = 0, \sigma^2)\, p(s^2 \mid \sigma^2)\, dG(\sigma^2)}
             {\int p(s^2 \mid \sigma^2)\, dG(\sigma^2)}, 
    \\[6pt]
    p(x \mid s^2)
    &= \frac{(1-\pi)\!\int p(x \mid \mu = 0, \sigma^2)\, p(s^2 \mid \sigma^2)\, dG(\sigma^2)}
             {\int p(s^2 \mid \sigma^2)\, dG(\sigma^2)}
    \\
    &\quad + \frac{\pi \!\iint p(x \mid \mu, \sigma^2)\, p(s^2 \mid \sigma^2)\, f(\mu)\, d\mu\, dG(\sigma^2)}
                   {\int p(s^2 \mid \sigma^2)\, dG(\sigma^2)} .
\end{split}
\end{equation*}
The output of the conformity score function represents the posterior probability that the null hypothesis is true given $(x, s^2)$, which is commonly referred to as the local false discovery rate (Lfdr; \citealp{efron2001empirical}).

\textit{Step 2: Estimate the Prior Distribution and the Conformity Score Function.} The unknown prior components $(G, \pi, f)$ are estimated from the observed data $\mathcal{D} = \{(X_i, S_i^2)\}_{i=1}^m$. To make the estimation problem tractable, \citet{zheng2021mixtwice} approximate both $G$ and $f$ 
by finite discrete distributions that satisfy the assumed structural constraints (e.g., the unimodality of $f$). These discrete approximations lead to an explicit expression for the marginal likelihood of the observed data, which is then maximized to obtain estimates of the prior components. Having obtained these estimates, the conformity score function is estimated via the plug-in principle by replacing the unknown prior with its estimated counterpart. 

\textit{Step 3: Compute Conformity Scores.} Using the estimated conformity score function $\hat{u}(\cdot, \cdot)$, each hypothesis is assigned a conformity score:
\begin{equation*}
    u_i = \hat{u}(X_i, S_i^2), \quad i \in [m].
\end{equation*}
These scores, which approximate the Lfdr, are then used to rank the hypotheses. By construction, smaller conformity scores indicate stronger evidence against the null hypothesis.

\textit{Step 4: Derive a Data-Adaptive Threshold and Define the Decision Rule.} To control the false discovery rate (FDR; \citealp{benjamini1995controlling}) at a nominal level $\alpha$, one may adopt the data-adaptive thresholding procedure proposed by \citet{sun2007oracle}.

\section{Implementation Details of Data-Splitting Methods}\label{implementation_details_data_split}
Both data-splitting methods (\textit{COIN-SS} and \textit{COIN-FS}) are built upon the COIN algorithm, which requires specifying the following three core components:
\begin{itemize}
    \item a working prior distribution for $(\mu_i, \sigma_i^2)$,
    \item a conformity score function, and
    \item an estimation algorithm for the prior distribution and the conformity score function.
\end{itemize}
We now describe each of these components in detail. It is important to emphasize that the choices presented here serve as illustrative examples. Practitioners may modify or replace any of these components according to their specific context or application.

\subsection{Working Prior}
Unless otherwise specified, we adopt the same prior specification as in the \textit{gg-Mix} method:
\begin{equation*}
    \sigma_i^2 \overset{\text{i.i.d.}}{\sim} G(\cdot), \quad \mu_i \mid \sigma_i^2 \overset{\text{i.i.d.}}{\sim} (1 - \pi)\, \delta_0(\cdot) + \pi\, f(\cdot),    
\end{equation*}
where $ G $ is an arbitrary distribution supported on $ [0, \infty) $, and $ f(\cdot) $ is a generic density function with no structural assumptions, such as unimodality or symmetry.

\subsection{Conformity Score Function}
Based on the working prior specified above, we define the conformity score function $u: \mathbb{R} \times [0, \infty) \to [0, \infty)$, mapping each observation $(x, s^2)$ to:
\begin{equation*}
    u(x, s^2) = \frac{p(x \mid \mu = 0,\, s^2)}{p(x \mid s^2)}.    
\end{equation*}
This conformity score is proportional to the Lfdr:
\begin{equation*}
    \text{Lfdr}(x, s^2) = \frac{(1 - \pi) \cdot p(x \mid \mu = 0,\, s^2)}{p(x \mid s^2)},
\end{equation*}
which is known to be optimal for ranking hypotheses under our stated prior assumption \citep{sun2007oracle}. The only distinction between $u(x, s^2)$ and $\text{Lfdr}(x, s^2)$ is the constant factor $(1 - \pi)$. Given that our procedure is rank-based, the two scores yield identical results within the framework considered.

\subsection{Estimation Algorithm}
Note that the conformity score does not require explicit estimation of the non-null proportion, as this proportion is only indirectly reflected in the denominator. Thus, instead of estimating the full prior distribution and using a plug-in approach, we directly estimate the conformity score function itself. Specifically, we employ a likelihood-based strategy to estimate both $p(x \mid \mu = 0, s^2)$ and $p(x \mid s^2)$, which serve as the numerator and denominator of the conformity score function, respectively.

First, the numerator $p(x \mid \mu = 0, s^2)$ is expressed as:
\begin{equation*}
    p(x \mid \mu = 0, s^2) = \frac{\int p(x \mid \mu = 0, \sigma^2) \cdot p(s^2 \mid \sigma^2) \, dG(\sigma^2)}{\int p(s^2 \mid \sigma^2) \, dG(\sigma^2)},
\end{equation*}
where the only unknown component is the prior distribution $G$. We estimate $G$ using the NPMLE as described in equation \eqref{NPMLE_G}, and denote the resulting estimator by $\hat{G}$. Substituting $\hat{G}$ into the expression above yields the estimate of $p(x \mid \mu = 0, s^2)$.\footnote{More precisely, for practical implementation, rather than maximizing over the class of all distributions supported on $[0,\infty)$, we approximate the NPMLE by restricting $G$ to a discrete class supported on a logarithmically equispaced grid of 50 points, spanning from the 1\% percentile to the maximum of the observed values $\{S_{i,2}^2\}_{i=1}^{m^{\mathcal{D}_2}}$.}

For the denominator $ p(x \mid s^2) $, we approximate the non-null distribution $ f(\cdot) $ of $ \mu_i $ using a Gaussian location-scale mixture:
\begin{equation*}
    f(\cdot) \approx \sum_{k=1}^{k_1} \pi_k^\prime \, \mathcal{N}(\cdot ; \gamma_k, \zeta^2) + \sum_{k=k_1+1}^{k_1+k_2} \pi_k^\prime \, \mathcal{N}(\cdot ; 0, \eta_k^2) \ \eqqcolon \ \tilde{f}(\cdot),
\end{equation*}
In this formulation, the mixture weights $\{\pi_k^\prime\}_{k = 1}^{k_1 + k_2}$, which sum to one, are treated as free parameters to be estimated, while the remaining hyperparameters are fixed in advance. Specifically, for the location mixture component, we recommend setting $ k_1 $ to a sufficiently large value (e.g., $ k_1 = 30 $) to ensure adequate modeling flexibility. Given a fixed $ k_1 $, we define the lower and upper 1\% percentiles of the observed $ X_i $ values as $ a $ and $ b $, respectively, and construct $ k_1 $ equally spaced grid points between $ a $ and $ b $. These grid points are assigned as the location parameters: $ \gamma_1 = a $, $ \gamma_2 $ the second smallest, and so on, with $ \gamma_{k_1} = b $. By default, the variance $ \zeta^2 $ is fixed at 1. For the scale mixture component, we specify $ k_2 $ and the variances $ \eta_k^2 $ following the recommendations provided in the supplementary material of \citet{stephens2017false}. We emphasize that, by employing sufficiently large values for $ k_1 $ and $ k_2 $, the mixture $ \tilde{f} $ can provide a close approximation to the true underlying distribution $ f $.

Based on the approximation $ f(\cdot) \approx \tilde{f}(\cdot) $ and the estimated prior $ \hat{G} $, we approximate the conditional distribution of $ X_i $ given $ S_i^2 $ as follows:
\begin{equation*}
\begin{split}
    p(x \mid s^2) &=  (1-\pi) \cdot \frac{\int p(x \mid \mu = 0, \sigma^2) \cdot p(s^2 \mid \sigma^2) \, dG(\sigma^2)}{\int p(s^2 \mid \sigma^2) \, dG(\sigma^2)} \\
    &\quad\quad +  ~\pi \cdot \frac{\int p(x \mid \mu, \sigma^2) \cdot p(s^2 \mid \sigma^2) \cdot f(\mu) \, d\mu \, dG(\sigma^2)}{\int p(s^2 \mid \sigma^2) \, dG(\sigma^2)}\\
    &\approx (1-\pi) \cdot \frac{\int p(x \mid \mu = 0, \sigma^2) \cdot p(s^2 \mid \sigma^2) \, d\hat{G}(\sigma^2)}{\int p(s^2 \mid \sigma^2) \, d\hat{G}(\sigma^2)} \\
    &\quad\quad +  ~\pi \cdot \frac{\int p(x \mid \mu, \sigma^2) \cdot p(s^2 \mid \sigma^2) \cdot \tilde{f}(\mu) \, d\mu \, d\hat{G}(\sigma^2)}
    {\int p(s^2 \mid \sigma^2) \, d\hat{G}(\sigma^2)} =: \tilde{p}(x \mid s^2).
    \end{split}
\end{equation*}
Treating the approximated conditional distribution $ \tilde{p}(x \mid s^2) $ as the true distribution, we estimate the unknown parameters $ \pi $ and $ \{ \pi_k^\prime : k \in [k_1 + k_2] \} $ by maximizing the approximate conditional likelihood of the observed summary statistic pairs. The resulting optimization problem is solved using the sequential quadratic programming method proposed by \citet{kim2020fast}.

\begin{remark}
    In our implementation, we adopt the same estimation procedure as \textit{gg-Mix} for $p(x \mid \mu = 0, s^2)$, but modify the estimation of $p(x \mid s^2)$ to better align with our objective. This modification reflects a key difference in methodological goals. In the original \textit{gg-Mix} framework, the conformity score function explicitly depends on the null proportion, and a conservative estimate of it is essential for controlling the FDR. Accordingly, a regularization step is introduced to ensure conservativeness, even at the cost of some bias in estimating $p(x \mid s^2)$. In contrast, our conformity score does not require an explicit estimate of the null proportion, as this quantity is only implicitly involved through the term $p(x \mid s^2)$. Therefore, accurate estimation of $p(x \mid s^2)$ becomes more important in our framework. To this end, we omit the regularization step and estimate $p(x \mid s^2)$ directly without penalization.
\end{remark}

\subsection{Additional Implementation Details for \textit{COIN-FS}}

\subsubsection{Choice of Hyperparameters}
For \textit{COIN-FS}, several hyperparameters need to be specified: the number of folds $ K $, the fold-specific target FDR levels $ \alpha^{(k)} $, and the multiple testing procedure used in the final decision step (either $ e $BH or U-$ e $BH). Throughout this paper, we fix the number of folds at $ K = 5 $, set the fold-specific FDR levels as $ \alpha^{(k)} = 0.9 \times \alpha_{e\text{BH}} $ for all $ k \in [K] $, and adopt the U-$ e $BH procedure to boost power. 

\paragraph{Choice of fold-specific target FDR levels $\alpha^{(k)}$ for \textit{COIN-FS}.}

The \textit{COIN-FS} method requires specification of fold-specific FDR levels $\alpha^{(k)}$. To simplify this choice while maintaining flexibility, we introduce a simple parameterization scheme as follows: we set $\alpha^{(k)} = c \times \alpha_{e\text{BH}}$, where $c > 0$ is a tuning constant. To investigate the impact of this parameter, we conducted a simulation study by varying $c$ over the grid $\{0.7, 0.8, 0.9, 1\}$ and evaluated the resulting FDR and TPR across the simulation settings considered in the main manuscript. Although this procedure may not yield the optimal choice, it provides a practical guideline for implementation. A full investigation of the optimal choice lies beyond the scope of this work.

The simulation results are summarized in Figures~\ref{fig:sim1_fdr_c}--\ref{fig:sim2_tpr_c} and lead to the following observations. First, as expected, the FDR is well controlled across all simulation settings regardless of the choice of $c$. Second, the power obtained with $c = 0.9$ generally exceeds that of alternative choices across most simulation settings. While $c = 1$ occasionally yields higher power than $c = 0.9$, this advantage is not consistently observed. Based on these findings, we fix $c = 0.9$ for all simulation studies.

\begin{figure}[H]
    \centering
    \includegraphics[width=\linewidth]{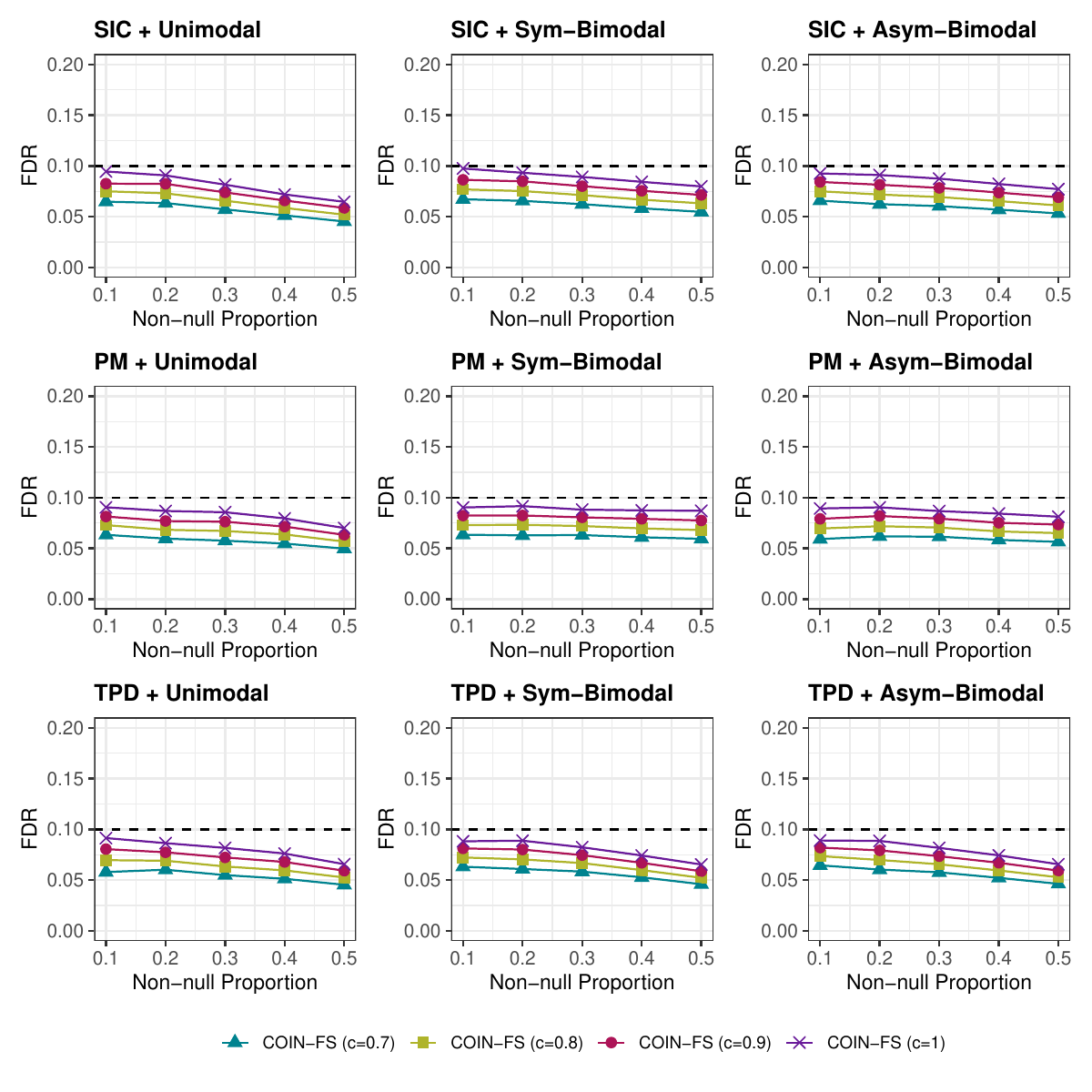}
    \caption{False discovery rates (FDRs) for four multiple testing methods, each corresponding to a different choice of $c$. The simulation settings are identical to those used in Scenario~1 of the main manuscript. The dashed horizontal line indicates the target FDR level of 0.1.}
    \label{fig:sim1_fdr_c}
\end{figure}

\begin{figure}[H]
    \centering
    \includegraphics[width=\linewidth]{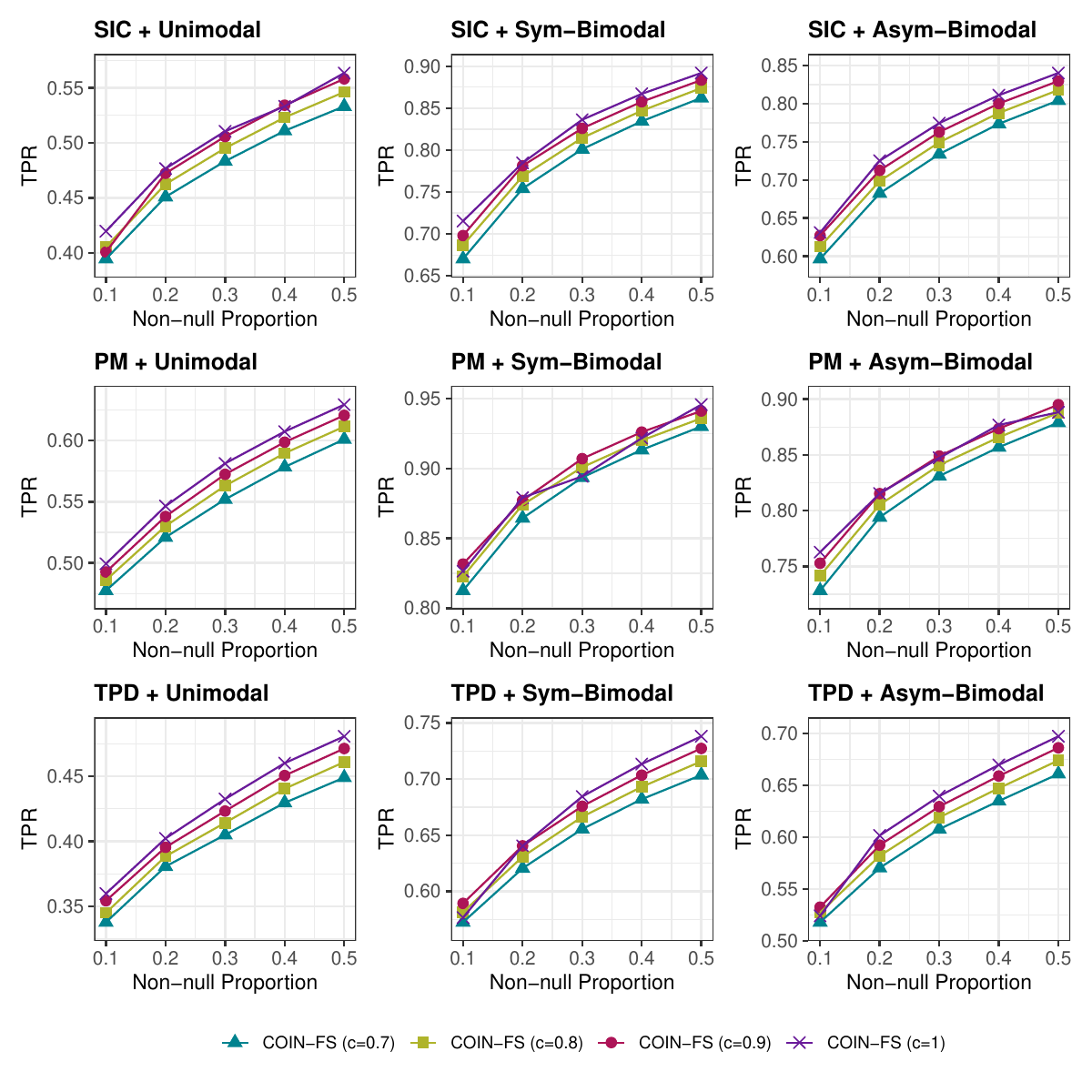}
    \caption{True postive rates (TPRs) for four multiple testing methods, each corresponding to a different choice of $c$. The simulation settings are identical to those used in Scenario~1 of the main manuscript.}
    \label{fig:sim1_tpr_c}
\end{figure}

\begin{figure}[H]
    \centering
    \includegraphics[width=\linewidth]{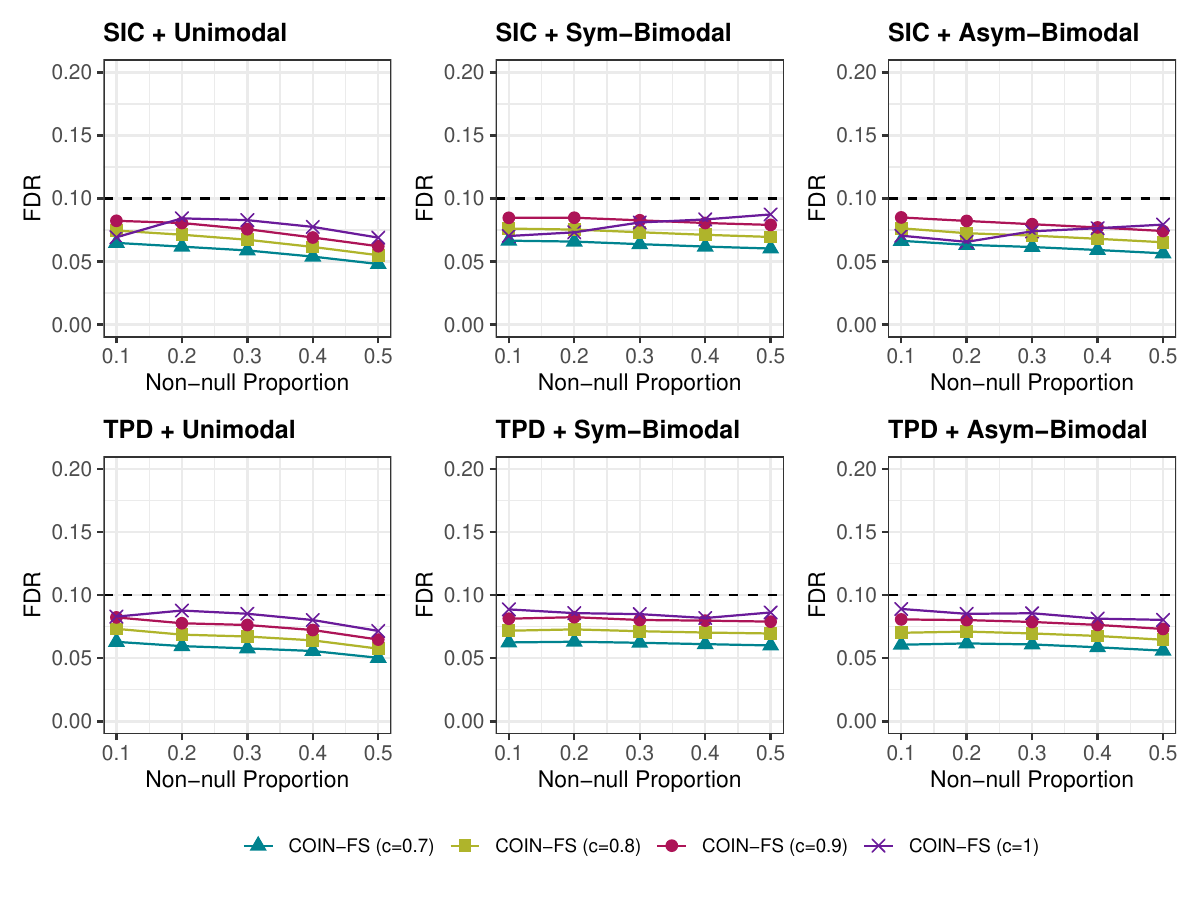}
    \caption{False discovery rates (FDRs) for four multiple testing methods, each corresponding to a different choice of $c$. The simulation settings are identical to those used in Scenario~2 of the main manuscript. The dashed horizontal line indicates the target FDR level of 0.1.}
    \label{fig:sim2_fdr_c}
\end{figure}

\begin{figure}[H]
    \centering
    \includegraphics[width=\linewidth]{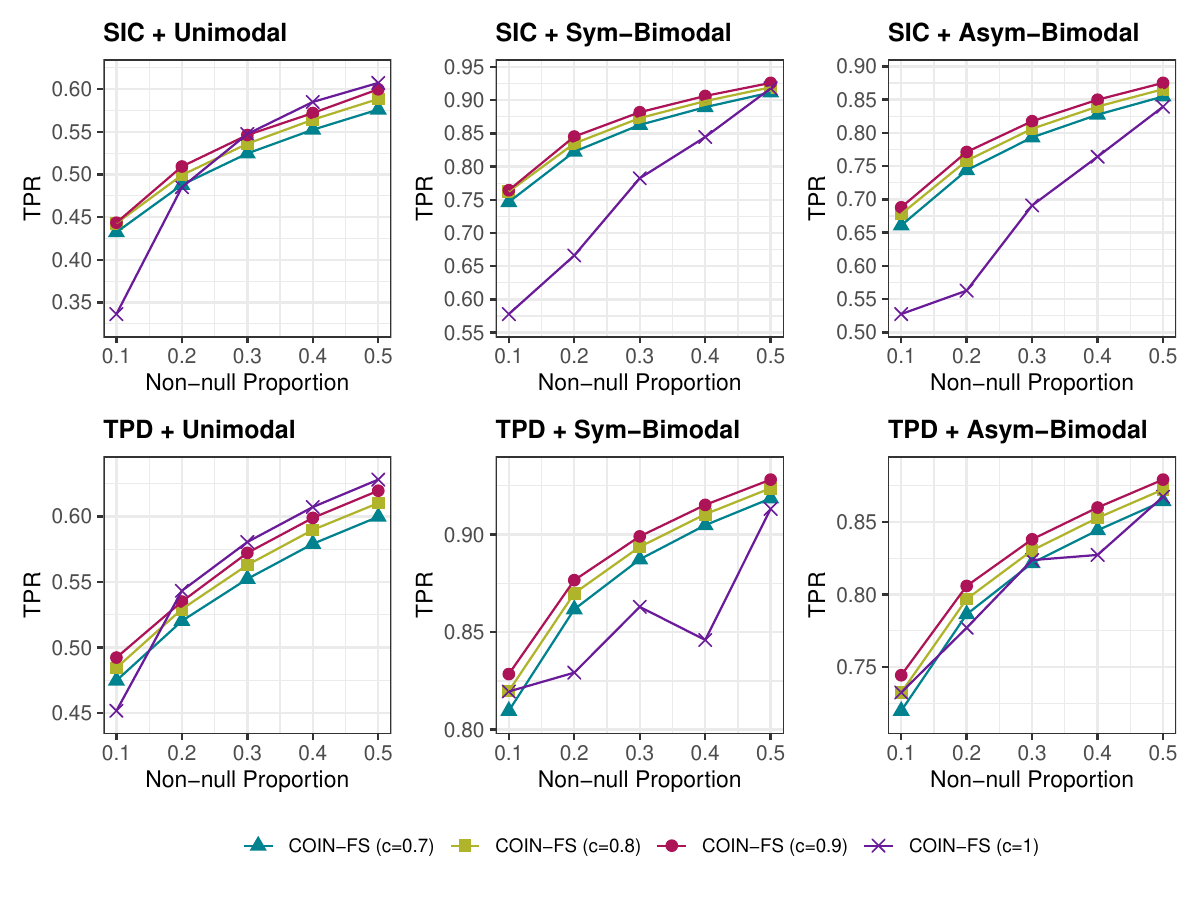}
    \caption{True postive rates (TPRs) for four multiple testing methods, each corresponding to a different choice of $c$. The simulation settings are identical to those used in Scenario~2 of the main manuscript. }
    \label{fig:sim2_tpr_c}
\end{figure}

\subsubsection{A Uniform Improvement on Data-Adaptive Threshold}
Following the approach of \cite{ren2024derandomised}, the power of \textit{COIN-FS} can be uniformly improved by refining the data-adaptive threshold.

To establish the validity of the feature-splitting method, we rely solely on the fact that the data-adaptive threshold~$\tau^{(k)}$, defined as
\begin{equation}\label{tau_k}
    \tau^{(k)} 
    = \max \left\{ 
        t \in \mathcal{U}^{(k)} \cup \tilde{\mathcal{U}}^{(k)} : 
        \frac{1 + \sum_{i \in \mathcal{H}^{(k)}} (1 - \xi_i^{(k)}) \, \mathbb{I}\!\left(s_i^{(k)} \le t\right)}
        {\bigl[\sum_{i \in \mathcal{H}^{(k)}} \xi_i^{(k)} \, \mathbb{I}\!\left(s_i^{(k)} \le t\right)\bigr] \vee 1}
        \le \alpha^{(k)}
    \right\},
\end{equation}
is a stopping time with respect to the filtration~$\{ \mathcal{F}_i^{(k)} \}$, where
\begin{equation*}
    \mathcal{F}_i^{(k)} 
    = \sigma\!\left(
        \{ \xi_j^{(k)} : j \in \mathcal{H}_1^{(k)} \},
        \{ s_j^{(k)} : j \in \mathcal{H}^{(k)} \},
        \{ V_j^{(k)}, \tilde{V}_j^{(k)} : i \le j \le |\mathcal{H}^{(k)}| \}
    \right).
\end{equation*}
Following the approach of \citet{ren2024derandomised}, we can universally enhance the power of the feature-splitting method 
by refining the data-adaptive threshold as
\begin{equation}\label{refined_tau_k}
\begin{split}
    \tau^{(k)} 
    = \max \Bigg\{ 
        t \in \mathcal{U}^{(k)} \cup \tilde{\mathcal{U}}^{(k)} :\,
        &\frac{1 + \sum_{i \in \mathcal{H}^{(k)}} (1 - \xi_i^{(k)}) \, \mathbb{I}\!\left(s_i^{(k)} \le t\right)}
        {\bigl[\sum_{i \in \mathcal{H}^{(k)}} \xi_i^{(k)} \, \mathbb{I}\!\left(s_i^{(k)} \le t\right)\bigr] \vee 1}
        \le \alpha^{(k)} \\
        &\quad \text{or} \quad 
        \sum_{i \in \mathcal{H}^{(k)}} 
        \xi_i^{(k)} \, \mathbb{I}\!\left(s_i^{(k)} \le t\right)
        < \frac{1}{\alpha^{(k)}}
    \Bigg\}.
\end{split}
\end{equation}
It is evident that the refined threshold in~\eqref{refined_tau_k} 
is a stopping time with respect to the filtration~$\{\mathcal{F}_i^{(k)}\}$ and always yields a value greater than or equal to that of~\eqref{tau_k}, thereby universally improving the statistical power of the procedure. 

The refinement of the data-adaptive threshold from~\eqref{tau_k} to~\eqref{refined_tau_k} becomes effective only when
\begin{equation*}
\begin{split}
    &\max \Bigg\{ 
        t \in \mathcal{U}^{(k)} \cup \tilde{\mathcal{U}}^{(k)} :
        \frac{1 + \sum_{i \in \mathcal{H}^{(k)}} (1 - \xi_i^{(k)}) \, \mathbb{I}\!\left(s_i^{(k)} \le t\right)}
        {\bigl[ \sum_{i \in \mathcal{H}^{(k)}} \xi_i^{(k)} \, \mathbb{I}\!\left(s_i^{(k)} \le t\right) \bigr] \vee 1}
        \le \alpha^{(k)} 
    \Bigg\} \\
    &\quad < 
    \max \Bigg\{ 
        t \in \mathcal{U}^{(k)} \cup \tilde{\mathcal{U}}^{(k)} :
        \sum_{i \in \mathcal{H}^{(k)}} 
        \xi_i^{(k)} \, \mathbb{I}\!\left(s_i^{(k)} \le t\right) 
        < \frac{1}{\alpha^{(k)}}
    \Bigg\}.
\end{split}
\end{equation*}
This condition indicates that the threshold defined in~\eqref{tau_k} 
results in $ \tau^{(k)} = -\infty $, thereby assigning zero to all $ e $-variables corresponding to hypotheses in $ \mathcal{H}^{(k)} $. Hence, when the original data-adaptive threshold in~\eqref{tau_k} fails to reject any hypotheses in $ \mathcal{H}^{(k)} $, the refined threshold in~\eqref{refined_tau_k} can enhance the detection power. 

The refinement in~\eqref{refined_tau_k} guarantees that the number of hypotheses assigned non-zero $ e $-values is bounded below by
\begin{equation*}
    \left\lceil \frac{1}{\alpha^{(k)}} \right\rceil - 1,
\end{equation*}
where $ \left\lceil \cdot \right\rceil $ denotes the ceiling function, that is, the smallest integer greater than or equal to the input. For example, when $ \alpha^{(k)} = 0.05 $, we have $ \left\lceil 1/\alpha^{(k)} \right\rceil = 20 $, and thus at least $ 20 - 1 = 19 $ hypotheses receive non-zero $ e $-values under the refined threshold in~\eqref{refined_tau_k}.

\section{Additional Simulation Results}\label{add_sim}
In this section, we compare the performance of the method proposed by \cite{ignatiadis2025empirical}, hereafter referred to as \textit{IS}, with our proposed method \textit{COIN-FS}. We consider the same simulation settings that are considered in the main manuscript. 

The \textit{IS} method assumes that the variance parameters $\sigma_i^2$ are independently drawn from an unknown prior distribution $G$ supported on $[0,\infty)$, while the mean parameters $\mu_i$ are treated as fixed but unknown constants. As a result, \textit{IS} does not fall within the empirical Bayes (EB) framework but instead belongs to the empirical partially Bayes (EPB) framework. Nevertheless, it has been theoretically established that \textit{IS} can asymptotically control the FDR even when the data are generated under an EB model. This theoretical robustness motivates a comparison between \textit{IS} and \textit{COIN-FS}.

The simulation results are summarized in Figures~\ref{fig:sim1_FDR_IS}--\ref{fig:sim2_TPR_IS}. Across all simulation settings, both \textit{IS} and \textit{COIN-FS} consistently control the FDR, in agreement with their theoretical guarantees. In terms of power, when the non-null proportion is small, \textit{IS} tends to exhibit slightly higher power than \textit{COIN-FS}. However, as the non-null proportion increases, the power of \textit{COIN-FS} eventually surpasses that of \textit{IS}.

The observed differences in power can be attributed to several factors. 
One key factor is that \textit{IS} relies on the BH procedure, which effectively corresponds to the assumption $1-\pi = 1$. This results in a conservative rejection threshold and leads to reduced power, particularly when the non-null proportion is non-negligible. 
Another important factor is the ability to incorporate structural information about the non-null distribution $f$. The proposed \textit{COIN-FS} adopts an EB approach and explicitly models a prior distribution for $\mu_i$, thereby leveraging structural information in $f$. In contrast, \textit{IS}, as an EPB method, treats $\mu_i$ as fixed but unknown constants and cannot exploit such structure. Consequently, as the structure of $f$ becomes more informative, \textit{COIN-FS} tends to exhibit superior power relative to \textit{IS}.
Finally, \textit{COIN-FS} relies on data splitting, which inevitably leads to a loss of information for estimating the conformity score function. 
As a result, in sparse signal settings, although \textit{COIN-FS} can exploit structural information in $f$, the estimation accuracy may be limited. 
This limitation prevents the method from fully extracting the available information, thereby leading to a reduction in power.

\begin{figure}[!htb]
    \centering
    \includegraphics[width=\linewidth]{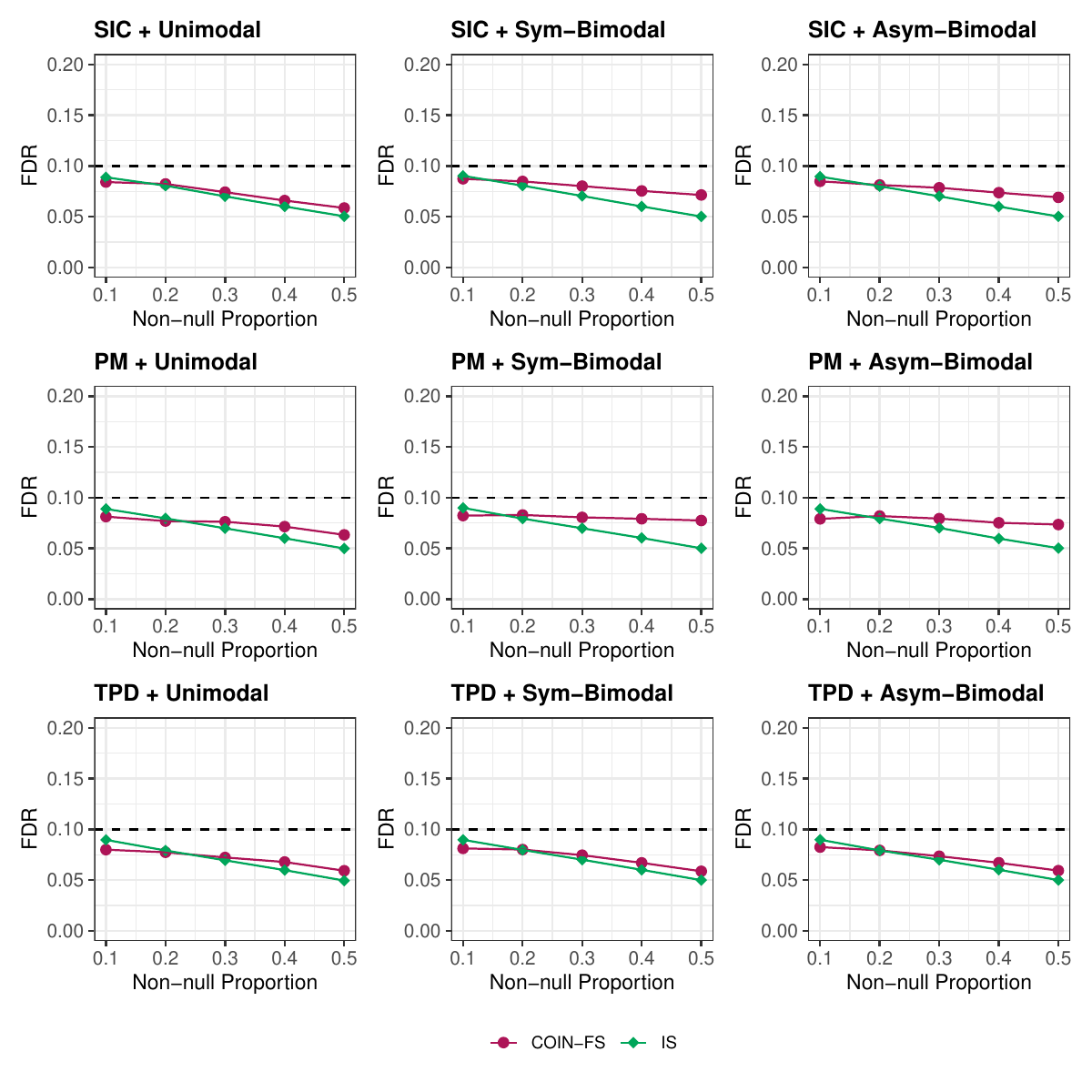}
    \caption{False discovery rates (FDRs) of \textit{IS} and \textit{COIN-FS}. The simulation settings are identical to those used in Scenario~1 of the main manuscript. The dashed horizontal line indicates the target FDR level of 0.1.}
    \label{fig:sim1_FDR_IS}
\end{figure}

\begin{figure}[!htb]
    \centering
    \includegraphics[width=\linewidth]{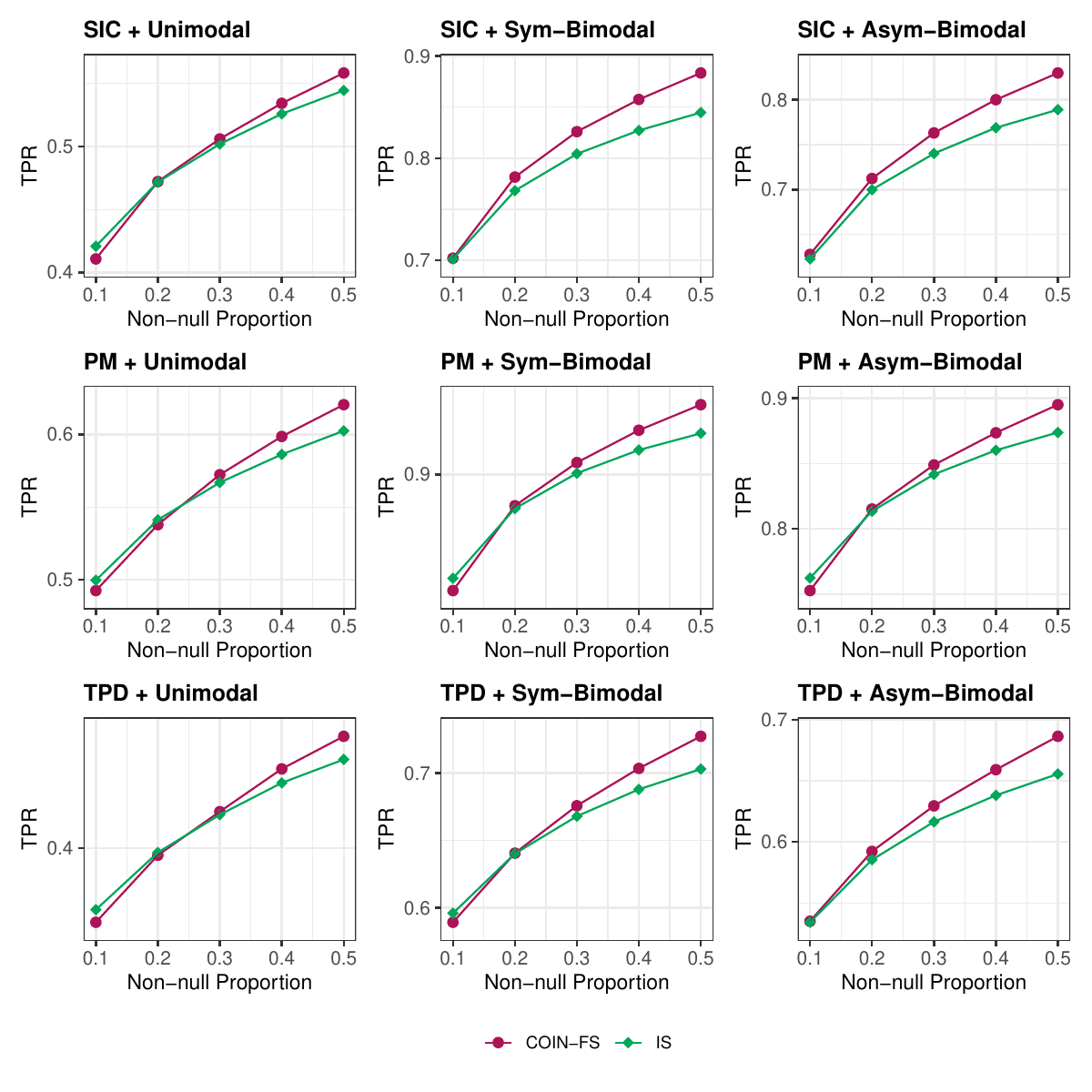}
    \caption{True positive rates (TPRs) of \textit{IS} and \textit{COIN-FS}. The simulation settings are identical to those used in Scenario~1 of the main manuscript.}
    \label{fig:sim1_TPR_IS}
\end{figure}

\begin{figure}[!htb]
    \centering
    \includegraphics[width=\linewidth]{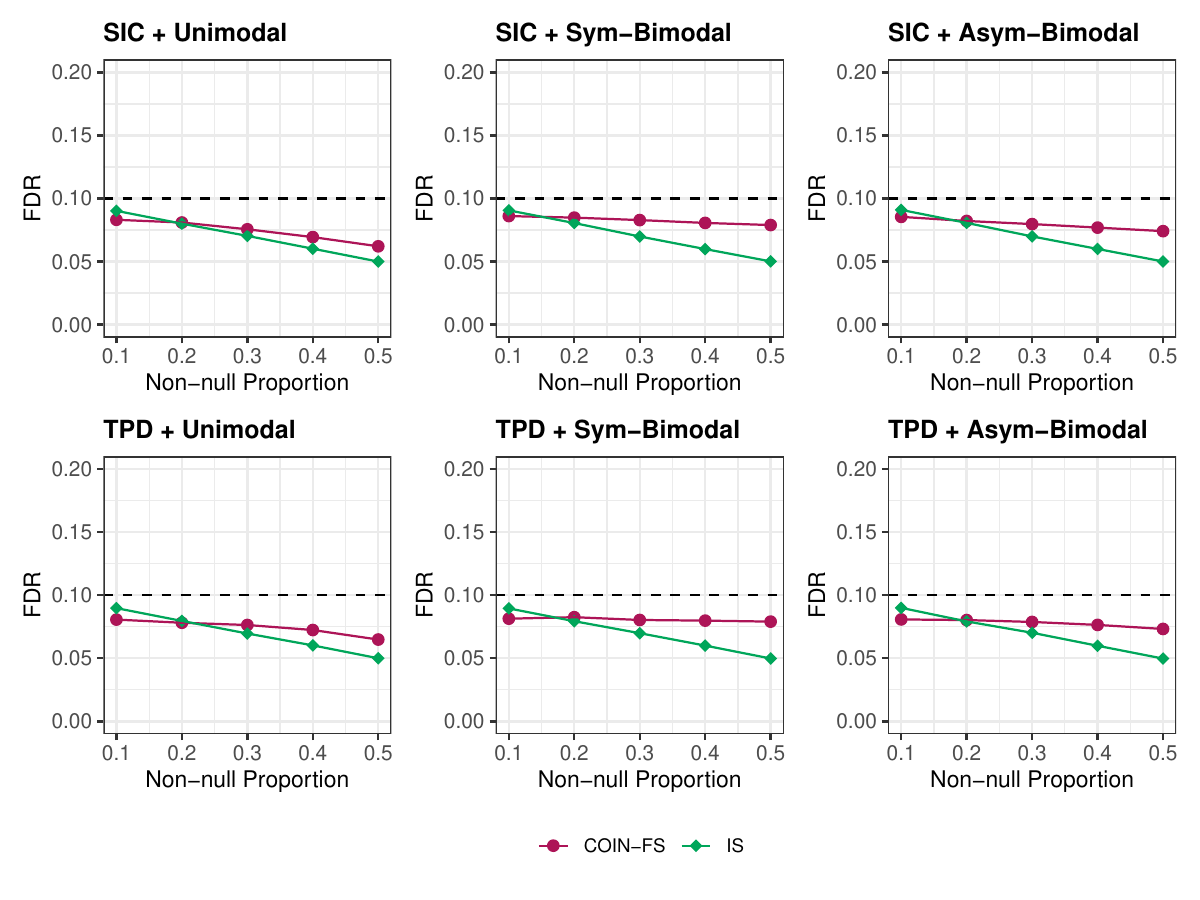}
    \caption{False discovery rates (FDRs) of \textit{IS} and \textit{COIN-FS}. The simulation settings are identical to those used in Scenario~2 of the main manuscript. The dashed horizontal line indicates the target FDR level of 0.1.}
    \label{fig:sim2_FDR_IS}
\end{figure}

\begin{figure}[!htb]
    \centering
    \includegraphics[width=\linewidth]{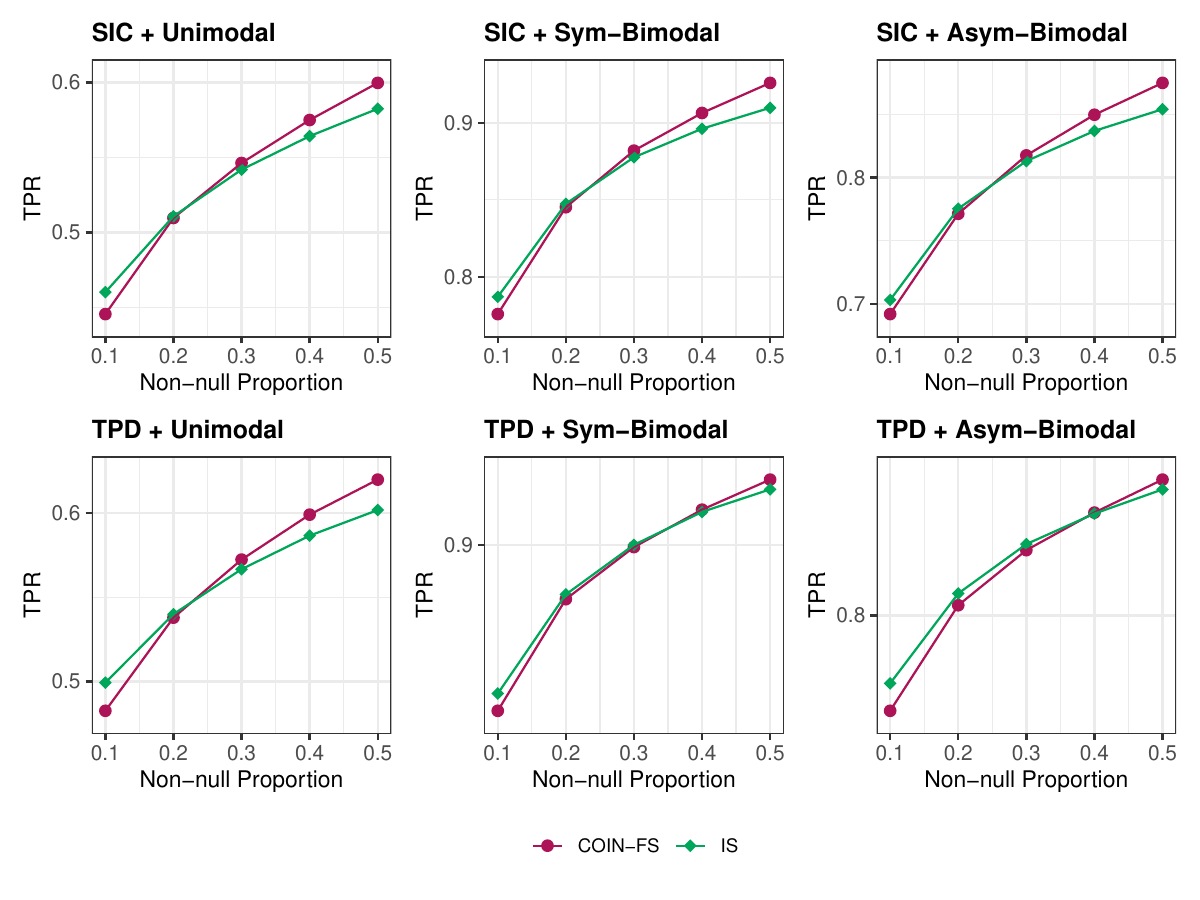}
    \caption{True positive rates (TPRs) of \textit{IS} and \textit{COIN-FS}. The simulation settings are identical to those used in Scenario~2 of the main manuscript.}
    \label{fig:sim2_TPR_IS}
\end{figure}
    
\FloatBarrier
\section{Additional Details on the Real Data}\label{detail_real}
\paragraph{Example 1: DNA Methylation Dataset.}
We analyze genome-wide DNA methylation data from \cite{zhang2013genome}, which consist of log$_2$ ratios of methylated to unmethylated signal intensities measured at 439,918 CpG sites. The dataset includes 10 samples collected from 3 individuals and encompasses 4 immune cell types, including \textit{naive T-cells} and \textit{antigen-activated naive T-cells}. The primary objective of this analysis is to identify CpG sites where DNA methylation levels significantly differ between these two biologically related cell types.

The summary statistics $ (X_i, S_i^2) $ were computed from the linear regression model fitted at each CpG site $ i $, defined as:
\begin{equation*}
    y_{ij} = \mu_i + \alpha_{i,k(j)} + \beta_{i,c(j)} + \varepsilon_{ij},
\end{equation*}
where $ y_{ij} $ denotes the log-transformed methylation level at site $ i $ for sample $ j $, $ \mu_i $ is a site-specific intercept, $ \alpha_{i,k(j)} $ represents the effect of individual $ k(j) $, $ \beta_{i,c(j)} $ denotes the effect of cell type $ c(j) $, and $ \varepsilon_{ij} $ is an independent error term.

To assess whether methylation levels differ between \textit{naive T-cells} and \textit{antigen-activated naive T-cells}, the following hypotheses were tested at each site:
\begin{equation*}
    H_{0,i} : \beta_{i,\text{naive}} = \beta_{i,\text{activated}} 
    \quad \text{vs.} \quad
    H_{1,i} : \beta_{i,\text{naive}} \neq \beta_{i,\text{activated}}
\end{equation*}
The test statistic $ X_i = \widehat{\beta}_{i,\text{naive}} - \widehat{\beta}_{i,\text{activated}} $ and its estimated variance $ S_i^2 = \widehat{\mathrm{Var}}(X_i) $ were obtained from the least squares estimates of the regression coefficients and their estimated variance. Each regression model was based on all 10 samples and included 6 parameters in total: one intercept term, two indicator variables for the three individuals, and three indicator variables for the four cell types. Accordingly, the residual degrees of freedom used for variance estimation was $ \nu = 10 - 6 = 4 $.

\paragraph{Example 2: CLL Dataset.}
We analyze gene expression data from \cite{dietrich2018drug}, which includes paired measurements from 12 patients diagnosed with chronic lymphocytic leukemia (CLL). For each patient, gene expression levels were profiled before and after a 12-hour incubation with Ibrutinib, a targeted therapy for CLL, resulting in measurements for 48,107 genes. The primary objective of this analysis is to identify genes whose expression levels significantly differ between the pre-treatment and post-treatment conditions.

The summary statistics $ (X_i, S_i^2) $ were computed from the paired differences in expression levels for each gene $ i $, defined as:
\begin{equation*}
    d_{ij} = y_{ij}^{\text{post}} - y_{ij}^{\text{pre}},
\end{equation*}
where $ d_{ij} $ denotes the difference in expression levels of gene $ i $ for patient $ j $, calculated by subtracting the pre-treatment value from the post-treatment value.

To assess whether the treatment had a significant effect on gene expression, the following hypotheses were tested at each gene:
\begin{equation*}
    H_{0,i} : \mu_{i,\mathrm{post}} = \mu_{i,\mathrm{pre}} 
    \quad \text{vs.} \quad
    H_{1,i} : \mu_{i,\mathrm{post}} \neq \mu_{i,\mathrm{pre}}.
\end{equation*}
The test statistic $ X_i = \bar{d}_i = \frac{1}{12} \sum_{j=1}^{12} d_{ij} $ and its estimated variance $ S_i^2 = \widehat{\mathrm{Var}}(X_i) $ summarize the evidence for differential expression at gene $ i $.  Because the analysis is based on 12 matched pairs, the residual degrees of freedom used for variance estimation is $ \nu = 12 - 1 = 11 $.

\paragraph{Example 3: Breast Cancer Proteomics Dataset.}
We analyze proteomics data from \cite{terkelsen2021high}, in which liquid chromatography-tandem mass spectrometry (LC-MS/MS) was applied to 34 tumor interstitial fluid samples collected from breast cancer patients. The dataset includes three breast cancer subtypes---luminal, Her2, and triple-negative---and the samples were processed across four experimental batches. Protein expression levels were measured for 6,763 proteins. The primary objective of this analysis is to identify proteins whose expression levels significantly differ between the luminal and Her2 subtypes.

The summary statistics $ (X_i, S_i^2) $ were computed from the linear regression model fitted at each protein $ i $, defined as:
\begin{equation*}
    y_{ij} = \mu_i + \alpha_{i,b(j)} + \beta_{i,s(j)} + \varepsilon_{ij},
\end{equation*}
where $ y_{ij} $ denotes the log$_2$-transformed expression intensity of protein $ i $ in sample $ j $, $ \mu_i $ is a protein-specific intercept, $ \alpha_{i,b(j)} $ represents the effect of experimental batch $ b(j) $, $ \beta_{i,s(j)} $ denotes the effect of cancer subtype $ s(j) $, and $ \varepsilon_{ij} $ is an independent error term.

To assess whether protein expression levels differ between the luminal and Her2 subtypes, the following hypotheses were tested at each protein:
\begin{equation*}
    H_{0,i} : \beta_{i,\text{luminal}} = \beta_{i,\text{Her2}} 
    \quad \text{vs.} \quad
    H_{1,i} : \beta_{i,\text{luminal}} \neq \beta_{i,\text{Her2}}.
\end{equation*}
The test statistic $ X_i = \widehat{\beta}_{i,\text{luminal}} - \widehat{\beta}_{i,\text{Her2}} $ and its estimated variance $ S_i^2 = \widehat{\mathrm{Var}}(X_i) $ summarize the evidence for differential expression at protein $ i $. These quantities are derived from the least squares estimates of the regression coefficients and their estimated variance. Each regression model is based on 34 samples and includes 6 parameters: one intercept term, three indicator variables for the four experimental batches, and two indicator variables for the three cancer subtypes. Accordingly, the residual degrees of freedom used for variance estimation is $ \nu = 34 - 6 = 28 $.

\section{Summary of the Proposed Algorithms}\label{algorithms}

\begin{algorithm}[!htb]
    \caption{COIN}\label{alg_1}
    
    \textbf{Input:} Test dataset $\mathcal{D}_1 = \{(X_{i,1}, S_{i,1}^2)\}_{i = 1}^{m^{\mathcal{D}_1}}$, training dataset $\mathcal{D}_2 = \{(X_{i,2}, S_{i,2}^2)\}_{i = 1}^{m^{\mathcal{D}_2}}$, target FDR level $\alpha$
    
    \textbf{Output:} Set of rejected hypotheses $\mathcal{R} \subset \mathcal{H}^{\mathcal{D}_1}$
    
    \begin{algorithmic}[1]
        \State Construct the pseudo calibration dataset $ \tilde{\mathcal{D}}_1 = \{(\tilde{X}_{i,1}, S_{i,1}^2)\}_{i \in \mathcal{H}^{\mathcal{D}_1}} $,  
        where the calibration variables $ \tilde{X}_{i,1} $ are generated as in \eqref{pseudo_cal}.
        
        \State Specify a working prior distribution and conformity score function $u(\cdot, \cdot)$, and estimate them based on training dataset $\mathcal{D}_2$.
        \State Compute conformity scores for test and pseudo calibration datasets:
        \begin{equation*}
            \mathcal{U} = \{u_i = \hat{u}(X_{i,1}, S_{i,1}^2) : i \in \mathcal{H}^{\mathcal{D}_1}\}, \quad
            \tilde{\mathcal{U}} = \{\tilde{u}_i = \hat{u}(\tilde{X}_{i,1}, S_{i,1}^2): i \in \mathcal{H}^{\mathcal{D}_1}\}.
        \end{equation*}
        
        \State Compute the threshold $\tau$ as defined in \eqref{thresh_unknownG}, and define the decision rules:  
        \begin{equation*}
            \delta_i = \mathbb{I}(u_i \leq \tilde{u}_i \wedge \tau), \quad i \in \mathcal{H}^{\mathcal{D}_1}
        \end{equation*}
        
        \State \textbf{Return:} The rejection set $\mathcal{R} = \{i \in \mathcal{H}^{\mathcal{D}_1} : \delta_i = 1\}$.
    \end{algorithmic}
\end{algorithm}

\begin{algorithm}[!htb]
    \caption{Sample-Splitting Method}\label{alg_2}
    
    \textbf{Input:} Individual-level test dataset $\mathcal{D}^{\text{raw}} \in \mathbb{R}^{m^{\mathcal{D}} \times n}$, target FDR level $\alpha$ 
    
    \textbf{Output:} Set of rejected hypotheses $\mathcal{R} \subset \mathcal{H}^\mathcal{D}$
    
    \begin{algorithmic}[1]
        \State Split the $n$ samples (columns) of $\mathcal{D}^{\text{raw}}$ into two equal-sized subsets, denoted by $\mathcal{D}_1^{\text{raw}}$ and $\mathcal{D}_2^{\text{raw}}$, while preserving the underlying data structure.
        
        \State Compute summary statistics from each split:
        \begin{equation*}
            \mathcal{D}_1^{\text{split}} = \{(X_{i,1}^{(\text{split})}, S_{i,1}^{2(\text{split})})\}_{i \in \mathcal{H}^\mathcal{D}}, \quad 
            \mathcal{D}_2^{\text{split}} = \{(X_{i,2}^{(\text{split})}, S_{i,2}^{2(\text{split})})\}_{i \in \mathcal{H}^\mathcal{D}}
        \end{equation*}
        
        \State Apply Algorithm~\ref{alg_1} using $\mathcal{D}_1^{\text{split}}$ as the test dataset and $\mathcal{D}_2^{\text{split}}$ as the training dataset
        
        \State \textbf{Return} The rejection set $\mathcal{R} = \{i \in \mathcal{H}^\mathcal{D} : \delta_i = 1\}$
    \end{algorithmic}
\end{algorithm}

\begin{algorithm}[!htb]
    \caption{Feature-Splitting Method}\label{alg_3}
    
    \textbf{Input:} Summary-level test dataset $\mathcal{D} = \{(X_i, S_i^2)\}_{i \in \mathcal{H}^{\mathcal{D}}}$; number of folds $K$; fold-specific FDR levels $\alpha^{(k)}$; target FDR level for $e$BH procedure $\alpha_{e\text{BH}}$
    
    \textbf{Output:} Rejection set $\mathcal{R} \subset \mathcal{H}^{\mathcal{D}}$
    
    \begin{algorithmic}[1]
            \State Partition $\mathcal{H}^{\mathcal{D}}$ into $K$ disjoint subsets $\mathcal{H}^{(1)}, \ldots, \mathcal{H}^{(K)}$
            
            \For{$k = 1$ to $K$}
                \State Define $\mathcal{D}^{(k)} \coloneqq \{(X_i, S_i^2)\}_{i \in \mathcal{H}^{(k)}}$, $\mathcal{D}^{(-k)} \coloneqq \{(X_i, S_i^2)\}_{i \in \mathcal{H}^{\mathcal{D}} \setminus \mathcal{H}^{(k)}}$
                
                \State Estimate $\hat{G}^{(k)}$ from $\mathcal{D}^{(-k)}$
                
                \State Generate calibration variables $\tilde{X}_i^{(k)}$ for $i \in \mathcal{H}^{(k)}$ as in \eqref{pseudo_cal}
                
                \State Estimate conformity score function $\hat{u}^{(k)}(\cdot, \cdot)$ based on $\mathcal{D}^{(-k)}$
                
                \State Compute scores: $u_i^{(k)} = \hat{u}^{(k)}(X_i^{(k)}, S_i^{2(k)})$, $\tilde{u}_i^{(k)} = \hat{u}^{(k)}(\tilde{X}_i^{(k)}, S_i^{2(k)})$ for $i \in \mathcal{H}^{(k)}$
                
                \State Compute threshold $\tau^{(k)}$ as in \eqref{thresh_unknownG} using fold-specific FDR level $\alpha^{(k)}$
                
                \State Compute test statistic $E_i^{(k)}$ as in \eqref{e_variable} for $i \in \mathcal{H}^{(k)}$
                
                \State Store $\mathcal{E}^{(k)} \coloneqq \{E_i^{(k)} : i \in \mathcal{H}^{(k)}\}$
            \EndFor
        \State Aggregate: $\mathcal{E} \coloneqq \bigcup_{k = 1}^K \mathcal{E}^{(k)}$
        
        \State (Optional) Draw $U \sim \text{Unif}(0,1)$, define $\mathcal{E}^* \coloneqq \{ E_i^{(k)} / U : i\in \mathcal{H}^{(k)}, k \in [K]\}$
        
        \State Apply $e$BH procedure to $\mathcal{E}$ (or $\mathcal{E}^*$) with level $\alpha_{e\text{BH}}$
        
        \State \Return The rejection set $\mathcal{R}$ 
    \end{algorithmic}
\end{algorithm}

\FloatBarrier
\section{The Oracle Case where the G is Known}\label{oracle_case}
In the main manuscript, we have developed the algorithms and their theoretical properties under the assumption that the prior distribution $G$ is unknown. In this section, we turn to the oracle case where $G$ is known, and explicitly describe how the algorithms and theoretical results are adjusted under this condition. 

\begin{remark}
    In the normal means inference problem, it is common to develop methods under the assumption that the variances $\sigma_i^2$ are known \citep{stephens2017false, fu2022heteroscedasticity}. This assumption is motivated by the fact that the variance estimates $S_i^2$ converge to $\sigma_i^2$ as the degrees of freedom $\nu \to \infty$. In contrast, in this section, we assume that the prior distribution $G$ of the variances $\sigma_i^2$ is known. Within our framework, this assumption is justified by the fact that, as the size of the training dataset $m^{\mathcal{D}_2}$ increases, the estimate $\hat{G}$ converges to the true distribution $G$.
\end{remark}

\paragraph{Algorithms Adapted to the Known $G$.}
In the main manuscript, we proposed three algorithms for the normal means inference problem under the assumption that $ G $ is unknown (Algorithms~\ref{alg_1}--\ref{alg_3}). In the current oracle setting where the prior distribution $ G $ is known, these algorithms can be modified by directly leveraging the knowledge of $ G $. Specifically, since $ G $ is available, the estimation step of $ G $ is no longer required in any of the algorithms, and the calibration variables are generated according to \eqref{oracle_cal} instead of \eqref{pseudo_cal}. By applying these two modifications while keeping the remaining steps unchanged, we obtain three algorithms tailored to the setting where $ G $ is known. These are referred to as Algorithms~\ref{alg_1}$^*$, \ref{alg_2}$^*$, and \ref{alg_3}$^*$---modified counterparts of Algorithms~\ref{alg_1}, \ref{alg_2}, and \ref{alg_3}, respectively.

\paragraph{Theory for the Algorithm \ref{alg_1}$^*$ and \ref{alg_2}$^*$.}
In this section, we establish that the decision rules derived from Algorithms~\ref{alg_1}$^*$ and~\ref{alg_2}$^*$ control the FDR at the prescribed level in finite samples. Since Algorithm~\ref{alg_2}$^*$ essentially shares the same theoretical properties as Algorithm~\ref{alg_1}$^*$, it suffices to present the theory for Algorithm~\ref{alg_1}$^*$. We begin by introducing two lemmas that will be used to prove our main result.

\begin{lemma}\label{lemma8}
    Suppose the hierarchical model described in \eqref{hierar_model} holds, and assume that there are no ties between $ u_i $ and $ \tilde{u}_i^* $\footnote{We use the superscript $^*$ to denote algorithms that rely on the true prior distribution $G$, such as Algorithm~\ref{alg_1}$^*$. Similarly, for other quantities that depend on the knowledge of $G$, we also attach a superscript $^*$ to distinguish them from those constructed without knowledge of $G$. For example, the calibration variable generated from $h_G(x \mid \mu = 0, s^2)$ is denoted by $\tilde{X}_{i,1}^*$, and the corresponding conformity score is denoted by $u_i^*$.} for all $ i \in \mathcal{H}^{\mathcal{D}_1} $. Given the data $ \mathcal{D}_2 $, we have
    \begin{equation*}
        \left(\left\{1+\text{sgn}(u_i - \tilde{u}_i^*) \right\}/2\mid u_k \wedge \tilde{u}_k^* :  k \in \mathcal{H}^{\mathcal{D}_1}\right) \overset{\text{i.i.d.}}{\sim} \text{Bernoulli}(1/2), \quad \forall i \in \mathcal{H}_0^{\mathcal{D}_1},
    \end{equation*}
    where $ \text{sgn}(\cdot) $ denotes the sign function.
\end{lemma}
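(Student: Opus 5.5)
The plan is to mirror the proof of Lemma~\ref{lemma5}, replacing the pseudo calibration variables by the oracle ones. The one new ingredient is that exact conditional exchangeability~\eqref{cond_exch} pins the success probability at exactly $1/2$. Throughout we condition on $\mathcal{D}_2$, so that $\hat{u}(\cdot,\cdot)$ is a fixed measurable function. The only randomness then comes from the triplets $\{(X_{i,1}, \tilde{X}_{i,1}^*, S_{i,1}^2)\}_{i \in \mathcal{H}^{\mathcal{D}_1}}$. These triplets are mutually independent, because the parameters are drawn independently and each $\tilde{X}_{i,1}^*$ is generated independently given $S_{i,1}^2$. For $i \in \mathcal{H}_0^{\mathcal{D}_1}$ they are moreover i.i.d., since under the null $\mu_{i,1}=0$ and $\sigma_{i,1}^2 \sim G$.

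\textbf{Step 1 (reduction to one coordinate).} Both $\text{sgn}(u_i - \tilde{u}_i^*)$ and $u_i \wedge \tilde{u}_i^*$ are functions of the $i$th triplet. Independence across $i$ then gives
\[
\mathbb{P}\bigl(\text{sgn}(u_i - \tilde{u}_i^*) = 1 \mid u_k \wedge \tilde{u}_k^* : k \in \mathcal{H}^{\mathcal{D}_1}\bigr) = \mathbb{P}\bigl(\text{sgn}(u_i - \tilde{u}_i^*) = 1 \mid u_i \wedge \tilde{u}_i^*\bigr).
\]
By the same independence, the signs $\{\text{sgn}(u_i - \tilde{u}_i^*)\}_{i \in \mathcal{H}_0^{\mathcal{D}_1}}$ are conditionally independent given the full collection of minima. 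This is exactly as in Lemma~\ref{lemma5}.

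\textbf{Step 2 (the probability is $1/2$).} For null $i$, $(X_{i,1}, \tilde{X}_{i,1}^*) \mid S_{i,1}^2$ is exchangeable by~\eqref{cond_exch}, which holds for the oracle construction~\eqref{oracle_cal}. Applying the fixed map $\hat{u}(\cdot, S_{i,1}^2)$ coordinatewise shows that $(u_i, \tilde{u}_i^*)$ is exchangeable. Swapping the two coordinates leaves $u_i \wedge \tilde{u}_i^*$ invariant and flips the sign. Hence for any Borel set $E$,
\[
\mathbb{P}(\text{sgn}=1,\ u_i\wedge\tilde u_i^*\in E) = \mathbb{P}(\text{sgn}=-1,\ u_i\wedge\tilde u_i^*\in E).
\]
Since there are no ties, the two probabilities sum to $\mathbb{P}(u_i\wedge\tilde u_i^*\in E)$. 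Therefore the conditional probability given the minimum equals $1/2$ almost surely. The variable $\{1+\text{sgn}(u_i-\tilde u_i^*)\}/2$ is $\{0,1\}$-valued because ties are excluded, so combining this with Step 1 yields conditionally i.i.d.\ $\text{Bernoulli}(1/2)$ variables.

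\textbf{Main obstacle.} The delicate point is Step 2. We must pass correctly from conditional exchangeability given $S_{i,1}^2$ to the statement conditional on the minimum alone. The argument above does this by working with joint probabilities over sets $E$, integrated against the law of $S_{i,1}^2$, rather than by conditioning directly. That formulation avoids any regularity issues with conditioning on $u_i\wedge\tilde u_i^*$.
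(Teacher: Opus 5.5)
Your proposal is correct and follows essentially the same route as the paper, which also invokes the independence structure of Lemma~\ref{lemma5} to reduce to $\rho = \mathbb{P}(\text{sgn}(u_i - \tilde{u}_i^*) = 1 \mid u_i \wedge \tilde{u}_i^*)$ and then uses the conditional exchangeability of the oracle calibration variables to conclude $\rho = 1/2$. Your Step~2 fills in the detail the paper only asserts: you show $\mathbb{P}(\text{sgn}=1,\ u_i\wedge\tilde{u}_i^*\in E) = \tfrac12\,\mathbb{P}(u_i\wedge\tilde{u}_i^*\in E)$ for every Borel $E$, and because the conditional success probability is then exactly $1/2$ for every null $i$, the ``identically distributed'' part of the conclusion follows automatically.
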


\begin{proof}
    The proof of Lemma~\ref{lemma8} can be viewed as a special case of the proof of Lemma~\ref{lemma5}, derived under the additional assumption that the prior distribution $ G $ is known.
    From the result in Lemma \ref{lemma5}, we have
    \begin{equation*}
        \left(\text{sgn}(u_i - \tilde{u}_i^*) \,\middle|\, u_k \wedge \tilde{u}_k^* : k \in \mathcal{H}^{\mathcal{D}_1}\right) \overset{\text{i.i.d.}}{\sim} \text{Bernoulli}(\rho), \quad \forall i \in \mathcal{H}_0^{\mathcal{D}_1},
    \end{equation*}
    where $ \rho $ is a random variable defined as
    \begin{equation*}
        \rho \coloneqq \mathbb{P}\left(\text{sgn}(u_i - \tilde{u}_i^*) = 1 \,\middle|\, u_i \wedge \tilde{u}_i^* \right), \quad \text{for } i \in \mathcal{H}_0^{\mathcal{D}_1}.
    \end{equation*}
    In contrast to Lemma~\ref{lemma5}, we considers the case where the calibration variables are generated based on the known distribution $ G $. This implies that the conditional exchangeability property between $ \tilde{X}_{i,1} $ and $ X_{i,1} $ given $ S_{i,1}^2 $ holds for all $ i \in \mathcal{H}_0^{\mathcal{D}_1} $. Consequently, we can obtain stronger conclusion that $ \rho = 1/2 $.
\end{proof}

Before proceeding to the next result, we introduce some necessary notations. Let $\xi_i^* = \mathbb{I}(u_i \leq \tilde{u}_i^*)$, $s_i^* = u_i \wedge \tilde{u}_i^*$, and let $s_{(i)}^*$ denote the $i$-th smallest value among $\{s_j^*\}_{j \in \mathcal{H}^{\mathcal{D}_1}}$. Based on these notations, define the cumulative counts $V_i^*$ and $\tilde{V}_i^*$ as
\begin{equation*}
    V_i^* = \sum_{j \in \mathcal{H}_0^{\mathcal{D}_1}} \xi_j^* \cdot \mathbb{I}(s_j^* \leq s_{(i)}^*), \quad 
    \tilde{V}_i^* = \sum_{j \in \mathcal{H}_0^{\mathcal{D}_1}} (1 - \xi_j^*) \cdot \mathbb{I}(s_j^* \leq s_{(i)}^*).
\end{equation*}
Furthermore, for a target FDR level $\alpha$, define
\begin{equation*}
    \hat{k}^* = \hat{k}^*(\alpha) \coloneqq \max \left\{k \in [m^{\mathcal{D}_1}] : \frac{1 + \sum_{i \in \mathcal{H}^{\mathcal{D}_1}} (1 - \xi_i^*) \cdot \mathbb{I}(s_i^* \leq s_{(k)}^*)}{\sum_{i \in \mathcal{H}^{\mathcal{D}_1}} \xi_i^* \cdot \mathbb{I}(s_i^* \leq s_{(k)}^*) \vee 1} \leq \alpha \right\}.
\end{equation*}

\begin{lemma}\label{lemma9} 
    Suppose the hierarchical model described in \eqref{hierar_model} holds, and assume that there are no ties between $ u_i $ and $ \tilde{u}_i^* $ for all $ i \in \mathcal{H}^{\mathcal{D}_1} $.
    Given the data $ \mathcal{D}_2 $, the following inequality holds:
    \begin{equation*}
        \mathbb{E}\left[ \frac{V_{\hat{k}^*}^*}{1 + \tilde{V}_{\hat{k}^*}^*} \right] \leq 1.
    \end{equation*}
\end{lemma}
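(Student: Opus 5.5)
Lemma~\ref{lemma9} is the oracle analogue of Lemma~\ref{lemma6}, so I will reuse its martingale argument. The only change is that Lemma~\ref{lemma8} gives $\rho = 1/2$ exactly. With $\rho = 1/2$, the constant in Lemma~\ref{lemma6} becomes
\[
M = \Bigl(1 + \bigl(0 - \tfrac12 2^{-m_0^{\mathcal{D}_1}}\bigr)\cdot 2\, m_0^{\mathcal{D}_1}\Bigr) \vee 1 = 1,
\]
since the bracketed term is nonpositive. So it suffices to rerun the argument of Lemma~\ref{lemma6} with starred quantities and $M=1$. Throughout, $\mathcal{D}_2$ is held fixed, so $\hat{u}$ is deterministic.

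\emph{Step 1 (supermartingale).} Define $M_i^* = V_i^*/(1+\tilde{V}_i^*)$ and the reverse filtration
\[
\mathcal{F}_i^* = \sigma\bigl(\{\xi_j^*: j\in\mathcal{H}_1^{\mathcal{D}_1}\}, \{s_j^*: j \in \mathcal{H}^{\mathcal{D}_1}\}, \{V_j^*,\tilde{V}_j^*: j\ge i\}\bigr).
\]
Show that $\mathbb{E}(M_i^*\mid\mathcal{F}_{i+1}^*)\le M_{i+1}^*$ using the same three cases as in Lemma~\ref{lemma6}. In the nontrivial case, where the index of $s^*_{(i+1)}$ is null, Lemma~\ref{lemma8} and the i.i.d.\ structure of the null triplets imply that, given $\mathcal{F}_{i+1}^*$, the removed null point is a uniformly chosen element among the $V_{i+1}^*+\tilde{V}_{i+1}^*$ nulls below the threshold. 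Hence the probabilities are $\tilde{V}_{i+1}^*/(V_{i+1}^*+\tilde{V}_{i+1}^*)$ and $V_{i+1}^*/(V_{i+1}^*+\tilde{V}_{i+1}^*)$. Writing $V=V_{i+1}^*$ and $\tilde{V}=\tilde{V}_{i+1}^*\ge 1$, the resulting ratio is
\[
\frac{(1+\tilde{V})\tilde{V} + \tilde{V}(V-1)}{\tilde{V}(V+\tilde{V})} = 1 .
\]
Therefore the process is in fact a martingale in this case.

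\emph{Step 2 (stopping time and optional stopping).} $\hat{k}^*$ is a stopping time for $\{\mathcal{F}_i^*\}$. As in Lemma~\ref{lemma6}, $Q^*(s^*_{(k)})$ depends only on $V_k^*$, $\tilde{V}_k^*$ and the non-null $\xi_j^*$ and $s_j^*$. Optional stopping then gives
\[
\mathbb{E}\bigl[M^*_{\hat{k}^*}\bigr] \le \mathbb{E}\bigl[M^*_{m}\bigr] = \mathbb{E}\bigl[B/(1+m_0-B)\bigr],
\]
where $B\sim\mathrm{Bin}(m_0,1/2)$ conditionally on the $s_j^*$ by Lemma~\ref{lemma8}.

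\emph{Step 3 (binomial identity).} Shifting the index as in Lemma~\ref{lemma6},
\[
\mathbb{E}\Bigl[\frac{B}{1+m_0-B}\Bigr] = \sum_{i\ge1}\binom{m_0}{i-1}2^{-m_0}\cdot 1 = 1-2^{-m_0}\le 1.
\]
The main point needing care is Step~1's conditional uniformity claim: the order in which nulls are revealed must be independent of the signs given the $s^*$'s. This is exactly what Lemma~\ref{lemma8} supplies. Edge cases, namely $\hat{k}^*$ undefined or $\tilde{V}=0$, are handled by the conventions already used in Lemma~\ref{lemma6}.
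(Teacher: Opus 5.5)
Your proposal is correct and matches the paper's intended argument: the paper omits the proof of Lemma~\ref{lemma9} with the remark that it parallels Lemma~\ref{lemma6}, and you carry that argument out with $\rho = 1/2$ from Lemma~\ref{lemma8}, which forces $M=1$, makes the reverse process an exact martingale in the null-index case, and gives $\mathbb{E}\bigl[B/(1+m_0^{\mathcal{D}_1}-B)\bigr] = 1-2^{-m_0^{\mathcal{D}_1}} \le 1$. You also correctly flag that the uniform-arrangement step needs the null signs to be i.i.d.\ given the $s^*$'s, and the oracle construction supplies exactly this.
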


\begin{proof}
    Since the proof closely parallels that of Lemma~\ref{lemma6}, we omit the details here.
\end{proof}

Theorem~\ref{theorem4} establishes that the decision rule derived from Algorithm~\ref{alg_1}$^*$ controls the FDR at the nominal level in finite samples.

\begin{theorem}\label{theorem4}
     Suppose the hierarchical model described in \eqref{hierar_model} holds, and assume that there are no ties between $ u_i $ and $ \tilde{u}_i^* $ for all $ i \in \mathcal{H}^{\mathcal{D}_1} $. Then, the decision rule, $\boldsymbol{\delta}^*$, derived from Algorithm~\ref{alg_1}$^*$, controls the FDR at pre-specified level $\alpha$ in finite samples, i.e.,
    \begin{equation*}
        \text{FDR}(\boldsymbol{\delta}^*) \leq \alpha.
    \end{equation*}
\end{theorem}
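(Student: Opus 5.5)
Conditional on the training dataset $\mathcal{D}_2$, the estimated conformity score function $\hat{u}(\cdot,\cdot)$ is deterministic. The calibration variables $\tilde{X}_{i,1}^*$ are drawn from $h_G(x \mid \mu=0, S^2=S_{i,1}^2)$, so the conditional exchangeability in \eqref{cond_exch} holds exactly. The plan is to reproduce the proof of Theorem~\ref{theorem1} with $\rho$ replaced by $1/2$, so that the event $E^c$ never occurs and no remainder term appears. I would then remove the conditioning on $\mathcal{D}_2$ by taking expectations at the end.

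First, as in the proof of Theorem~\ref{theorem1}, I will show that the threshold $\tau^*$ computed by Algorithm~\ref{alg_1}$^*$ equals $s^*_{(\hat{k}^*)}$. The reason is that
\[
Q^*(t)=\frac{1+\sum_{i}(1-\xi_i^*)\mathbb{I}(s_i^*\le t)}{\bigl[\sum_i \xi_i^*\mathbb{I}(s_i^*\le t)\bigr]\vee 1}
\]
is piecewise constant and changes value only at the points $s_i^*$. Next, using $\delta_i=\xi_i^*\,\mathbb{I}(s_i^*\le \tau^*)$, I will write the false discovery proportion in factored form:
\[
\mathrm{FDP}(\boldsymbol{\delta}^*) = Q^*(s^*_{(\hat{k}^*)})\cdot \frac{1+\sum_{i\in\mathcal{H}_0^{\mathcal{D}_1}}(1-\xi_i^*)\mathbb{I}(s_i^*\le s^*_{(\hat{k}^*)})}{1+\sum_{i\in\mathcal{H}^{\mathcal{D}_1}}(1-\xi_i^*)\mathbb{I}(s_i^*\le s^*_{(\hat{k}^*)})}\cdot \frac{V^*_{\hat{k}^*}}{1+\tilde{V}^*_{\hat{k}^*}}.
\]
The first factor is at most $\alpha$ by the definition of $\hat{k}^*$. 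The second factor is at most $1$ because the null index set is contained in the full index set. When no rejection is made, $\mathrm{FDP}=0$ and the bound holds trivially.

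Taking conditional expectations given $\mathcal{D}_2$ and applying Lemma~\ref{lemma9} gives $\mathbb{E}[\mathrm{FDP}\mid\mathcal{D}_2]\le\alpha$. Taking the outer expectation over $\mathcal{D}_2$ then yields $\mathrm{FDR}(\boldsymbol{\delta}^*)\le\alpha$.

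The substantive step is Lemma~\ref{lemma9}, which I take as given; it follows Lemma~\ref{lemma6} with $M=1$. To see why $M=1$, note that Lemma~\ref{lemma8} gives $\rho=1/2$. Then $1-2\rho=0$ and $(1-\rho)\rho^{m_0}>0$, so the bracket defining $M$ is at most $1$ and hence $M=1$. The only point I would check carefully is that the reverse-time supermartingale argument still goes through with $M=1$. Concretely, in case (iii) the conditional probabilities are $\tilde{V}_{i+1}/(V_{i+1}+\tilde{V}_{i+1})$ and $V_{i+1}/(V_{i+1}+\tilde{V}_{i+1})$, because the null signs are i.i.d. fair coins given the $s^*$'s. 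The resulting ratio is then at most $1$, which is the standard knockoff$+$ computation. The terminal bound $\mathbb{E}[B/(1+m_0-B)]\le 1$ for $B\sim\mathrm{Bin}(m_0,1/2)$ also holds.
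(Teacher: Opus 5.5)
Your proposal is correct and follows the paper's proof essentially step for step. Both reduce the threshold to $s^*_{(\hat{k}^*)}$ and factor $\mathrm{FDP}(\boldsymbol{\delta}^*)$ into three pieces: $Q^*(s^*_{(\hat{k}^*)})\le\alpha$, a null-to-all ratio $\le 1$, and $V^*_{\hat{k}^*}/(1+\tilde{V}^*_{\hat{k}^*})$. You then take expectations using Lemma~\ref{lemma9}, which is Lemma~\ref{lemma6} with $M=1$ because $\rho=1/2$. One small precision: $\tau^*$ itself need not equal $s^*_{(\hat{k}^*)}$, since it can be some $\max(u_j,\tilde{u}_j^*)$ lying in $[s^*_{(\hat{k}^*)}, s^*_{(\hat{k}^*+1)})$. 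The rejection sets nevertheless coincide, which is all you use; the paper phrases this step in those terms.
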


\begin{proof}
Following the same argument as in the proof of Theorem~\ref{theorem1}, it can be shown that the decision rule based on the threshold $\tau^*$, obtained as in \eqref{thresh_unknownG}, is equivalent to that based on the threshold $s_{(\hat{k}^*)}^*$.

By definition of $\text{FDP}(\boldsymbol{\delta}^*)$ and the fact established above, we have 
\begin{equation*}
\begin{split}
    \text{FDP}(\boldsymbol{\delta}^*)
    &= \frac{1 + \sum_{i \in \mathcal{H}^{\mathcal{D}_1}} (1-\xi_i^*) \cdot \mathbb{I}(s_i^* \leq s_{(\hat{k})}^*)}{[\sum_{i \in \mathcal{H}^{\mathcal{D}_1}} \xi_i^* \cdot \mathbb{I}(s_i^* \leq s_{(\hat{k})}^*)]\vee 1}\\
    &\quad\quad\times \frac{1 + \sum_{i \in \mathcal{H}_0^{\mathcal{D}_1}} (1-\xi_i^*) \cdot \mathbb{I}(s_i^* \leq s_{(\hat{k})}^*)}{1 + \sum_{i \in \mathcal{H}^{\mathcal{D}_1}} (1-\xi_i^*) \cdot \mathbb{I}(s_i^* \leq s_{(\hat{k})}^*)} \\
    &\quad\quad\times \frac{\sum_{i \in \mathcal{H}_0^{\mathcal{D}_1}} \xi_i^* \cdot \mathbb{I}(s_i^* \leq s_{(\hat{k})}^*)}{1 + \sum_{i \in \mathcal{H}_0^{\mathcal{D}_1}} (1-\xi_i^*) \cdot \mathbb{I}(s_i^* \leq s_{(\hat{k})}^*)}\\
    &\leq \alpha \cdot 1 \cdot \frac{\sum_{i \in \mathcal{H}_0^{\mathcal{D}_1}} \xi_i^* \cdot \mathbb{I}(s_i^* \leq s_{(\hat{k}^*)})}{1 + \sum_{i \in \mathcal{H}_0^{\mathcal{D}_1}} (1-\xi_i^*) \cdot \mathbb{I}(s_i^* \leq s_{(\hat{k})}^*)}\\
    &= \alpha \cdot \frac{V_{\hat{k}}^*}{1 + \tilde{V}_{\hat{k}}^*}.
    \end{split}
\end{equation*}
The inequality follows from the definition of $ \hat{k}^* $, along with the fact that $ \mathcal{H}_0^{\mathcal{D}_1} \subset \mathcal{H}^{\mathcal{D}_1} $. By taking expectations on both sides and applying Lemma~\ref{lemma9}, we obtain that $\mathrm{FDR}(\boldsymbol{\delta}^*) \leq \alpha$.
\end{proof}

\paragraph{Generalized/compound $e$-variables and the $e\text{BH}$ Procedure.}\label{sub_sec_3.4}
Before presenting the theory for Algorithm~\ref{alg_3}$^*$, we first introduce the concepts of generalized/compound $ e $-variables and the $ e\text{BH} $ procedure, which play a central role in the theoretical development of Algorithm~\ref{alg_3}$^*$.

Consider a collection of hypotheses indexed by the set $ \mathcal{H} $, with $ \mathcal{H}_0 \subseteq \mathcal{H} $ denoting the subset corresponding to the true null hypotheses. For each hypothesis $ i \in \mathcal{H} $, let $ E_i $ be a non-negative random variable. We say that $ \{E_i : i \in \mathcal{H}\} $ forms a set of generalized/compound $ e $-variables if the expected sum of $ E_i $ over the true nulls is bounded above by the total number of hypotheses, that is,
\begin{equation*}
    \mathbb{E}\left[\sum_{i \in \mathcal{H}_0} E_i\right] \leq |\mathcal{H}|.
\end{equation*}

Based on a set of generalized/compound $ e $-variables, \citet{wang2022false} proposed a multiple testing procedure known as the $ e\text{BH} $ procedure. The $ e\text{BH} $ procedure can be view as a counterpart to the classical BH procedure by \citet{benjamini1995controlling}, with the key difference being that hypotheses are ranked according to the $ e $-values rather than $ p $-values. Specifically, let $ E_{(i)} $ denote the $ i $-th largest $ e $-variable among all $ m $ $e$-variables:
\begin{equation*}
    E_{(1)} \geq E_{(2)} \geq \cdots \geq E_{(m)} \geq 0.    
\end{equation*}
The $ e\text{BH} $ procedure rejects the $ i $-th hypothesis if $ E_i \geq E_{(\hat{k})} $, where
\begin{equation*}
    \hat{k} = \max\left\{i : E_{(i)} \geq \frac{m}{\alpha \cdot i} \right\}.    
\end{equation*}
Accordingly, the decision rule based on the generalized $ e $-variables is given by
\begin{equation*}
    \boldsymbol{\delta}_{e\text{BH}} = \{\delta_{i, e\text{BH}} : i \in \mathcal{H}\}, \quad \text{where} \quad \delta_{i, e\text{BH}} = \mathbb{I}(E_i \geq E_{(\hat{k})}).
\end{equation*}
\citet{wang2022false} show that the FDR level of the decision rule $ \boldsymbol{\delta}_{e\text{BH}} $ is controlled at the target level in finite samples.

Now, returning to our setting, consider the random variables $ E_i $ defined as:
\begin{equation}\label{e-var}
    E_i = \frac{|\mathcal{H}^{\mathcal{D}_1}| \cdot \xi_i^* \cdot \mathbb{I}(s_i^* \leq s_{(\hat{k})}^*)}{1 + \sum_{i \in \mathcal{H}^{\mathcal{D}_1}} (1 - \xi_i^*) \cdot \mathbb{I}(s_i^* \leq s_{(\hat{k})}^*)}, \quad \forall i \in \mathcal{H}^{\mathcal{D}_1}.
\end{equation}
It can be verified that the random variables $ E_i $ satisfy the definition of the generalized/compound $ e $-variable:
\begin{equation*}
    \mathbb{E}\left[\sum_{i \in \mathcal{H}_0^{\mathcal{D}_1}} E_i\right]  = |\mathcal{H}^{\mathcal{D}_1}| \cdot  \mathbb{E}\left[ \frac{\sum_{i \in \mathcal{H}_0^{\mathcal{D}_1}}  \xi_i^* \cdot \mathbb{I}(s_i^* \leq s_{(\hat{k})}^*)}{1 + \sum_{i \in \mathcal{H}^{\mathcal{D}_1}} (1 - \xi_i^*) \cdot \mathbb{I}(s_i^* \leq s_{(\hat{k})}^*)}\right] \leq |\mathcal{H}^{\mathcal{D}_1}|,
\end{equation*}
where the last inequality follows from Lemma~\ref{lemma9}. Therefore, applying the $ e\text{BH} $ procedure to these generalized/compound $ e $-variables ensures finite-sample FDR control.

Theorem~\ref{theorem5} shows that the decision rule $ \boldsymbol{\delta}_{e\text{BH}} $, obtained by applying the $ e\text{BH} $ procedure to the generalized/compound $ e $-variables defined in \eqref{e-var}, is equivalent to the decision rule $ \boldsymbol{\delta} $ obtained by applying Algorithm~\ref{alg_1}$^*$.

\begin{theorem}[Proposition 1 in \cite{zhao2025conformalized}]\label{theorem5}
    Suppose that there are no ties between $u_i$ and $\tilde{u}^*$ for all $i \in \mathcal{H}^{\mathcal{D}_1}$. Let $ \{E_i : i \in \mathcal{H}^{\mathcal{D}_1} \} $ be a set of generalized $ e $-variables defined in \eqref{e-var}, and let $ \boldsymbol{\delta}_{e\text{BH}} $ denote the decision rule obtained by applying the $ e\text{BH} $ procedure to this set. Then, the decision rule $ \boldsymbol{\delta}_{e\text{BH}} $ is equivalent to the decision rule $ \boldsymbol{\delta} $ produced by Algorithm~\ref{alg_1}$^*$.
\end{theorem}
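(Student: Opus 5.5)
The plan is to prove that the rejection sets coincide by showing that both are the set
\[
\mathcal{R}=\{i\in\mathcal{H}^{\mathcal{D}_1}:\ \xi_i^*=1,\ s_i^*\le s^*_{(\hat{k}^*)}\}.
\]
Write $m=|\mathcal{H}^{\mathcal{D}_1}|$, $R^*=\sum_i \xi_i^*\,\mathbb{I}(s_i^*\le s^*_{(\hat{k}^*)})$ and $\tilde{R}^*=\sum_i (1-\xi_i^*)\,\mathbb{I}(s_i^*\le s^*_{(\hat{k}^*)})$. I read the $\hat{k}$ appearing in \eqref{e-var} as $\hat{k}^*$.

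First I identify the Algorithm~\ref{alg_1}$^*$ side. By the argument at the start of the proof of Theorem~\ref{theorem1}, the threshold $\tau^*$ equals $s^*_{(\hat{k}^*)}$. Hence $\delta_i=\mathbb{I}(u_i\le\tilde{u}_i^*\wedge\tau^*)=\xi_i^*\,\mathbb{I}(s_i^*\le s^*_{(\hat{k}^*)})$, so the Algorithm~\ref{alg_1}$^*$ rejection set is exactly $\mathcal{R}$, with $|\mathcal{R}|=R^*$. If $\tau^*=-\infty$, both procedures reject nothing: every $E_i$ is $0$, so the eBH set is empty (I adopt the convention that eBH rejects nothing when all $e$-values vanish). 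I treat this trivial case separately and assume $\tau^*$ is finite from now on.

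Next I describe the $e$-values. Every $E_i$ is either $0$ or the common value
\[
E_i=\frac{m}{1+\tilde{R}^*},\qquad i\in\mathcal{R}.
\]
Since $\hat{k}^*$ satisfies the defining inequality, $(1+\tilde{R}^*)/R^*\le\alpha$, which gives $m/(1+\tilde{R}^*)\ge m/(\alpha R^*)$. The sorted $e$-values are therefore $E_{(1)}=\dots=E_{(R^*)}=m/(1+\tilde{R}^*)$ with all later ones $0$.

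Now I compute the eBH index. For $i\le R^*$ we have $E_{(i)}\ge m/(\alpha R^*)\ge m/(\alpha i)$, so the condition holds at $i=R^*$. For $i>R^*$, $E_{(i)}=0<m/(\alpha i)$, so the condition fails. Hence the eBH index equals $R^*$, and eBH rejects exactly the indices with $E_i\ge E_{(R^*)}>0$, namely $\mathcal{R}$.

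The main subtle point is the tie handling in eBH. Because all nonzero $e$-values are equal, "rejecting the top $R^*$" must mean rejecting every index with $E_i\ge E_{(\hat{k})}$, which the definition $\delta_{i,e\mathrm{BH}}=\mathbb{I}(E_i\ge E_{(\hat k)})$ gives. I also need $R^*\ge 1$ whenever $\tau^*$ is finite, which follows because the ratio with denominator $\vee 1$ is at most $\alpha<1$ and its numerator is at least $1$. The no-ties assumption between $u_i$ and $\tilde{u}_i^*$ ensures that $\xi_i^*$ and $s_i^*$ are well defined, so the identification $\tau^*=s^*_{(\hat{k}^*)}$ goes through. Together these give $\boldsymbol{\delta}_{e\mathrm{BH}}=\boldsymbol{\delta}$.
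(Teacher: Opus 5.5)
Your proof is correct, but it takes a different route from the paper, which gives no argument for this statement and simply imports it as Proposition~1 of \cite{zhao2025conformalized}. Your direct verification runs as follows:
\begin{itemize}
\item every nonzero $E_i$ equals the common value $m/(1+\tilde R^*)$;
\item the defining property of $\hat k^*$ gives $(1+\tilde R^*)/R^*\le\alpha$;
\item hence the eBH index is exactly $R^*$, and eBH rejects exactly $\mathcal{R}$.
\end{itemize}
This is the standard argument for such equivalences, as in the knockoff/$e$-value equivalence of \cite{ren2024derandomised}. It also shows the equivalence is a deterministic, realization-by-realization identity. It uses neither the oracle calibration nor any distributional property. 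The same reasoning therefore applies verbatim to Algorithm~1 with $\hat G$, which a bare citation does not make visible. You also handle the degenerate case $\tau^*=-\infty$ and the fact $R^*\ge 1$ correctly.

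Two small points need tidying.

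First, the chain ``for $i\le R^*$, $E_{(i)}\ge m/(\alpha R^*)\ge m/(\alpha i)$'' is wrong for $i<R^*$. There $1/i\ge 1/R^*$, so the second inequality reverses, and the eBH condition can genuinely fail for small $i$. You only need it at $i=R^*$, because eBH takes the maximal index. State it there and drop the claim for smaller $i$.

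Second, strictly speaking $\tau^*$ need not equal $s^*_{(\hat k^*)}$. The maximization in the threshold runs over $\mathcal{U}\cup\tilde{\mathcal{U}}$, which also contains each $\max(u_j,\tilde u_j^*)$. The ratio is constant on $[s^*_{(\hat k^*)},\, s^*_{(\hat k^*+1)})$, so $\tau^*$ can be a larger point of that interval. This imprecision is inherited from the paper's proof of Theorem~1. What you actually use is $\mathbb{I}(s_i^*\le\tau^*)=\mathbb{I}(s_i^*\le s^*_{(\hat k^*)})$ for all $i$, and that does hold: $\hat k^*$ is the last index in its tie group, and the ratio exceeds $\alpha$ for every $t\ge s^*_{(\hat k^*+1)}$. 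So your identification of the Algorithm~1$^*$ rejection set with $\mathcal{R}$ is sound once phrased this way.
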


\paragraph{Theory for the Algorithm \ref{alg_3}$^*$.}
We are now ready to establish the theoretical guarantee of Algorithm~\ref{alg_3}$^*$. In Algorithm~\ref{alg_3}$^*$, for each fold $k = 1, 2, \ldots, K$, we compute the non-negative random variables
\begin{equation*}
    E_i^{(k)} = \frac{|\mathcal{H}^{(k)}| \cdot \xi_i^{*(k)} \cdot \mathbb{I}\left(s_i^{*(k)} \leq \tau^{*(k)}\right)}{1 + \sum_{j \in \mathcal{H}^{(k)}} \left(1 - \xi_j^{*(k)}\right) \cdot \mathbb{I}\left(s_j^{*(k)} \leq \tau^{*(k)}\right)}, \quad \forall i \in \mathcal{H}^{(k)}.
\end{equation*}
After aggregating the random variables from all folds, we define
\begin{equation*}
    \mathcal{E} = \bigcup_{k = 1}^K \mathcal{E}^{(k)}, \quad \mathcal{E}^{(k)} = \left\{ E_i^{(k)} : i \in \mathcal{H}^{(k)} \right\}.
\end{equation*}
It is important to note that the combined set $\mathcal{E}$ forms a collection of generalized/compound $e$-variables with respect to the hypothesis set $\mathcal{H}^{\mathcal{D}}$:
\begin{equation*}
    \mathbb{E}\left[\sum_{k = 1}^K \sum_{i \in \mathcal{H}_0^{(k)}} E_i^{(k)}\right] 
    = \sum_{k = 1}^K \mathbb{E}\left[\sum_{i \in \mathcal{H}_0^{(k)}} E_i^{(k)}\right] 
    \leq \sum_{k = 1}^K |\mathcal{H}^{(k)}| = |\mathcal{H}^{\mathcal{D}}|.
\end{equation*}
The inequality holds because each collection $\mathcal{E}^{(k)}$ forms a set of generalized/compound $e$-variables by construction and Lemma~\ref{lemma9}. Therefore, applying the theoretical results established in \citet{wang2022false} and \citet{xu2023more}, we conclude that the decision rule derived from Algorithm~\ref{alg_3}$^*$ controls the FDR at the target level in finite samples.

\end{document}